\documentclass[11pt]{amsart}
\usepackage{amsbsy,amssymb,amscd,amsfonts,latexsym,amstext,delarray,
amsmath,graphicx} 
\setlength{\textheight}{21.5cm} \headsep=15pt
\setlength{\textwidth}{15.2cm}
\setlength{\oddsidemargin}{0.5cm} 
\setlength{\topmargin}{-.2cm}
\setlength{\evensidemargin}{\oddsidemargin}
\input xypic

\usepackage{color}

\newtheorem{thm}{Theorem}[section]
\newtheorem{prop}[thm]{Proposition}
\newtheorem{cor}[thm]{Corollary}
\newtheorem{lem}[thm]{Lemma}

\newtheorem{defn}[thm]{Definition}
\newtheorem{rem}[thm]{Remark}

\numberwithin{equation}{section}

\def\bF{{\mathbb F}}
\def\bG{{\mathbb G}}

\def\bL{{\mathbb L}}

\def\A{{\mathbb A}}
\def\C{{\mathbb C}}
\def\F{{\mathbb F}}

\def\N{{\mathbb N}}
\renewcommand{\P}{{\mathbb P}}
\def\Q{{\mathbb Q}}
\def\R{{\mathbb R}}
\def\Z{{\mathbb Z}}

\def\cA{{\mathcal A}}
\def\cB{{\mathcal B}}
\def\cC{{\mathcal C}}

\def\cH{{\mathcal H}}

\def\cJ{{\mathcal J}}

\def\cO{{\mathcal O}}

\def\cR{{\mathcal R}}

\def\cT{{\mathcal T}}

\def\cV{{\mathcal V}}
\def\cW{{\mathcal W}}

\def\End{{\rm End}}

\def\Tr{{\rm Tr}}
\def\tr{{\rm tr}}

\title[$q$-Deformations, QSM, and Motives]{$q$-Deformations of 
Statistical Mechanical Systems and Motives over finite fields}
\author{Matilde Marcolli and Zhi Ren}
\address{Department of Mathematics, California Institute of Technology, 1200 E California Blvd,
Pasadena, CA 91125, USA}
\email{matilde@caltech.edu}
\email{zren@caltech.edu}

\begin{document}
\maketitle

\begin{abstract}
We consider $q$-deformations of Witt rings, based on geometric operations on
zeta functions of motives over finite fields, and we use these deformations to
construct $q$-analogs of the Bost--Connes quantum statistical mechanical
system. We show that the $q$-deformations obtained in this way can be
related to Habiro ring constructions of analytic functions over $\F_1$ and
to categorifications of Bost--Connes systems. 
\end{abstract}

\section{Introduction}
In the present paper, we study $q$-deformations of the Bost--Connes system,
their relation to classical $q$-analog constructions, and the properties of the resulting quantum
statistical mechanical systems, in relation to motives over finite fields.  In this introductory section we review briefly the main
terminology and notation, with reference to the appropriate literature, about the Bost--Connes system,
and about some relevant notions of $q$-deformation.

\smallskip

In the rest of the paper, we first consider the $q$-deformations of the Witt ring $W(A)$ introduced in 
\cite{Oh1}, \cite{Oh2}, \cite{Oh3}, \cite{Oh4} and we show that, under these
deformations $W^q(A)$ of the Witt ring, the Bost--Connes algebra remains
undeformed. We then introduce some natural modifications of these $q$-deformations,
which are motivated by natural geometric operations on zeta functions of schemes of finite
type over $\F_q$. We show that these modified $q$-deformations give rise to $q$-deformed
Bost--Connes algebras. The first geometric deformation we consider corresponds, at the
level of zeta functions, to a geometric tower obtained by taking products of a given 
scheme $X$ with sets of $q^\ell$-points. This deformation is similar to the
one considered in \cite{Oh1}, \cite{Oh2}, \cite{Oh3}, \cite{Oh4}, but it deforms $W_0(A)$
leaving the product in the Witt ring $W(A)$ and $\Lambda(A)$ undeformed. The
resulting $q$-deformation of the Bost--Connes algebra, in this case, consists of
only a mild change that replaces the original integral Bost--Connes algebra $\cA_\Z$
of \cite{CCM3} with the product $\cA_{\Z}\otimes_\Z \Z[q]$. We then show that a more 
interesting $q$-deformation is obtained if, instead of taking products of schemes $X$
with zero-dimensional spaces consisting of $q^\ell$-points, one takes products with
affine spaces $\A^\ell$. Since $\# \A^\ell(\F_q)=q^\ell$, this deformation generalizes
the previous one, in the sense that the previous one occurs as first order term. 
We then show that, using these deformations one obtains interesting $q$-deformed
Bost--Connes algebras that can be related to the constructions of \cite{Man-hab},
\cite{Mar-endo}, and of \cite{LoMa}, and also to the categorifications of Bost--Connes
systems of \cite{MaTa}. We also discuss the role of $q$-analogs and a $q$-deformation
of the Riemann zeta function in this context. Finally, we propose the categorification
of Weil numbers and its associated quantum statistical mechanical system constructed
in \cite{MaTa} as another possible $q$-deformation of the Bost--Connes algebra that
fits into the general framework discussed in this paper.

\medskip
\subsection{The Bost--Connes quantum statistical mechanical system}

In \cite{BC}, Bost and Connes introduced a quantum statistical
mechanical system whose partition function is the Riemann zeta
function and whose equilibrium states are related to cyclotomic
fields with the Galois action. 
The algebra of observables of the system is a semigroup
crossed product algebra $C^*(\Q/\Z)\rtimes \N$, where
$C^*(\Q/\Z)=C(\hat\Z)$ is generated by abstract roots of unity $e(r)$, $r\in \Q/\Z$, and
the elements $n$ in the multiplicative semigroup are realized by isometries $\mu_n$. The 
semigroup $\N$ acts on $C^*(\Q/\Z)$ by endomorphisms
$$ \rho_n(e(r))=\mu_n e(r) \mu_n^* = \frac{1}{n} \sum_{s: ns=r} e(s). $$
The time evolution of the system is given by $\sigma_t(e(r))=e(r)$
and $\sigma_t(\mu_n)=n^{it}\, \mu_n$, with generator the Hamiltonian
$H\epsilon_k =\log(k)\epsilon_k$, in a representation on 
$\ell^2(\N)$. The partition function is the
Riemann zeta function,
$$ Z(\beta) = \Tr(e^{-\beta H}) = \sum_{k\geq 1} k^{-\beta}. $$
The extremal low temperature KMS states
are polylogarithm functions evaluated at roots of unity, normalized by
the Riemann zeta function, while at zero temperature the KMS states 
and the symmetry action are related to cyclotomic fields with the Galois action
of ${\rm Gal}(\Q^{ab}/\Q)$. We refer the reader to \cite{BC} and Chapter 3 of \cite{CoMa}
for a detailed discussion of these properties. 

The Bost--Connes system was reinterpreted in
\cite{CCM} as a particular example of an ``endomotive", a
projective system of zero-dimensional varieties with semigroup
actions. The analytical properties of the quantum statistical
mechanical system and its relation to $L$-functions were
also generalized to this context.
Moreover, in \cite{CCM3}, the Bost--Connes endomotive was
related to the geometry of the ``field with one element" $\F_1$ and
its system of extensions (given by roots of unity) as defined by
Kapranov--Smirnov \cite{KaSmi}.
In \cite{CoCo}, the integral Bost--Connes system considered
in \cite{CCM3} was related to Witt rings and the Frobenius
and Verschiebung operators. In \cite{MaTa}, generalizations and
categorifications of the Bost--Connes system were constructed, 
which include Weil numbers and motives over finite fields. 

Further aspects of the relation between the Bost--Connes system
and $\F_1$-geometry were considered in \cite{Man-hab} and 
\cite{Mar-endo}. Manin proposed in \cite{Man-hab} to use
the Habiro ring of $q$-functions, \cite{Hab}, as a good notion of 
analytic geometry and analytic functions over $\F_1$, and analogs
of the Bost--Connes system based on the Habiro ring were
constructed in \cite{Mar-endo}. Relations between the Habiro
ring and motives were discussed in \cite{LoMa}.

\medskip
\subsection{The integral Bost-Connes algebra}\label{intBCsec}

An integral model of the Bost-Connes algebra was introduced in \cite{CCM3},
in relation to $\F_1$-geometry. 

\smallskip

The integral Bost--Connes algebra $\mathcal{A}_\mathbb{Z}$ is the algebra generated by the group
ring $\Z[\Q/\Z]$ and by elements $\tilde\mu_n$ and $\mu_n^*$ satisfying the relations 
$$\tilde\mu_n x \mu_n^* = \tilde{\rho_n}(x)$$
$$\mu_n^* x = \sigma_n(x)\mu_n^*$$
$$ x \tilde\mu_n  = \tilde\mu_n \sigma_n(x),$$
for all $x\in \Z[\Q/\Z]$ and all $n\in \N$, 
where $\sigma_n(x)$ is an endomorphism of $\Z[\Q/\Z]$ given by $\sigma_n(e(r)) = e(nr)$, while  
$\tilde\rho_n (e(r)) = \sum_{nr' = r}e(r')$. The elements $\tilde\mu_n$ and $\mu_n^*$ also satisfy the relations
$$\tilde\mu_{nm} = \tilde \mu_n \tilde \mu_m, \ \ m, n\in \mathbb{N}, \ \ \ \
\mu_{nm}^* = \mu_n^* \mu_m^*, \ \ m, n\in \mathbb{N},$$
$$ \mu_n^*\tilde \mu_n = n, \ \   n \in \N, \ \ \ 
 \tilde \mu_n \mu_n^* = \mu_n^*\tilde \mu_n , \ \  (n,m) = 1.  $$

\smallskip

In \cite{CoCo}, the integral BC algebra is related to the 
universal Witt ring $W_0(\bar{\F}_p)$, where  $\bar{\F}_p$ is the algebraic 
closure of $\F_p$. Moreover, it is shown that 
that there is a $p$-adic representation of the integral BC algebra into the 
Big Witt Ring $W(\bar{\F}_p)$, determined by the embedding of $W_0(\bar{\F}_p)$
into  $W(\bar{\F}_p)$.

\medskip
\subsection{$q$-Deformations}

There is a vast literature on the subject of $q$-deformations and
$q$-analogs. The basic form of $q$-analog consists of the $q$-integers
$$ [n]_q := \frac{1-q^n}{1-q} = 1+q+\cdots + q^{n-1}. $$
Generalizations of the Riemann zeta function based on $q$-integers
have been considered, for instance, in \cite{Cher}, \cite{Kane}, \cite{Kawa}, 
and in a different form, suitable for an Euler product expansion, in \cite{Raw1},
\cite{Raw2}, while $q$-Bernoulli numbers were considered in \cite{Carl}.
A $q$-deformed analog of the polylogarithm function, the
$q$-polylogarithm, was considered in \cite{Zhao}. The Witt ring also
has a natural $q$-deformation, which was studied in \cite{Oh1},
\cite{Oh2}, \cite{Oh3}. These and other $q$-deformations of Witt rings will
be crucial to our analysis of possible $q$-deformations of the Bost--Connes
system. We discuss these $q$-deformations in the next section, and other
forms of ``geometric" $q$-deformations in the following sections.

\section{Witt Rings and their $q$-Deformations}\label{SecWittq}

\smallskip
\subsection{Witt rings and operations}\label{WittSec}

We first recall some basic definitions regarding Witt rings. For $A$ an associative commutative
ring, let $\underline{\End_A}$ be the category of endomorphisms of projective $A$-modules 
of finite rank. The objects of this category are pairs $(E,f)$, where
$f \in \End_A(E)$. With the direct sum and the tensor product defined componentwise on the objects, 
the Grothendieck group $K_0(\underline{\End_A})$ also acquires a commutative ring structure. 
Let $K_0(A)$ be the ideal generated by the pairs of the form $(E, f = 0)$. Then one defines 
\begin{equation}\label{W0A}
W_0(A) = K_0(\underline{\End_A})/K_0(A). 
\end{equation}
There are several operators defined on $W_0(A)$, including the Frobenius, 
the Verschiebung, and the Teichm\"uller and ghost maps, \cite{GA}. These play an
important role in the relation between Witt rings and the integral Bost--Connes
system, as shown in \cite{CoCo}. 

\smallskip

The ghost ring functor $gh: {\rm Rings} \rightarrow {\rm Rings}$ 
associates to each object $A$ the ring whose underlying
set is $$A^\infty:=\{ (x_n)_{n\in \N}\,|\, x_n \in A \}$$ and to each morphism 
$f: A \rightarrow B$ the morphism 
$$ {\rm gh}(f):(x_n)_{n\in \mathbb{N}} \mapsto (f(x_n))_{n\in \mathbb{N}}. $$

\smallskip

Given an associative and commutative ring A, 
the Big Witt Ring $W(A)$ is characterized by the three properties:
\begin{enumerate}
\item As a set, $W(A)$ equals $A^\infty$.
\item For any ring homomorphism $f: A \rightarrow B$, the map
$W(f): W(A) \rightarrow W(B)$ given by $$(x_n)_{n\in \mathbb{N}} \mapsto 
(f(x_n))_{n\in \mathbb{N}}$$
is a ring homomorphism. 
\item The map $\Phi: W(A)\rightarrow {\rm gh}(A)$, given by 
$$ (x_n)_{n\in \mathbb{N}} \mapsto (\sum_{d \vert n}d x_d^{\frac{n}{d}})_{n\in \mathbb{N}},$$
is a ring homomorphism, where ${\rm gh}(A)$ is the image of $A$ under the ghost ring
functor described above.
\end{enumerate}

The Frobenius, Verschiebung, Teichm\"uller, and ghost map operators 
are also defined on the ring $W(A)$, satisfying the same 
set of relations as on $W_0(A)$, see \cite{GA}. 

\smallskip

We also recall briefly the relation between $W_0(A)$ and $W(A)$, and the Grothendieck 
$\lambda$-ring $\Lambda(A)$, see \cite{GA}, \cite{Rama}.

\smallskip

Given an element $(E,f)$ in $W_0(A)$, 
let $M(f)$ denote the matrix representation of $f: E\to E$, as in \cite{GA}, \cite{CoCo}. 
The following properties determine a natural embedding of $W_0(A)$ into $W(A)$, see \cite{GA}.
\begin{itemize}
\item The map $L: W_0(A) \rightarrow \Lambda(A)$ given by $L(E,f) \mapsto \det(1-tM(f))^{-1}$ is a homomorphic injection into the additive group of the $\Lambda$ ring, with image given by the subgroup
\begin{equation}\label{rangeL}
 {\rm Range}(L) = \left\{ \frac{1+a_1 t+...+a_n t^n}{1+b_1 t+...+b_m t^m}, a_i, b_j \in A \right\} .
\end{equation} 
\item The Artin-Hasse exponential map $\varepsilon: W(A) \rightarrow \Lambda(A)$ 
given by $$\varepsilon: (x_n) \mapsto \prod_{n \in \mathbb{N}} \frac{1}{1-x_n t^n}$$ 
is an isomorphism.
\end{itemize}

\smallskip 
\subsection{q-Deformations of Witt rings}\label{qWittSec}

$q$-Deformations of the Witt Ring, considered as a $q$-deformed functor from the category 
of rings to itself, and of the ghost map were introduced in \cite{Oh1}, \cite{Oh2}, \cite{Oh3}.
We recall the basic properties of these deformations.

\smallskip

For $q$ a positive integer and $A$ an associative commutative ring, the $q$-deformed Big Witt Ring 
$W^q(A)$ is characterized in \cite{Oh3} by the following three conditions:
\begin{enumerate}
\item As a set, $W^q(A)$ equals $A^\infty$
\item For any ring homomorphism $f: A \rightarrow B$, the map $W^q(f): W^q(A) \rightarrow W^q(B)$
$$ (x_n)_{n\in \mathbb{N}} \mapsto (f(x_n))_{n\in \mathbb{N}}$$
is a ring homomorphism. 
\item The map $\Phi^q: W(A)\rightarrow gh(A)$
\begin{equation}\label{Phiqmap}
 (x_n)_{n\in \mathbb{N}} \mapsto (\sum_{d \vert n}d q^{\frac{n}{d} - 1} x_d^{\frac{n}{d}})_{n\in \mathbb{N}}
\end{equation} 
is a ring homomorphism. 
\end{enumerate}

\smallskip

Moreover, the Grothendieck $\lambda$-ring also admits a $q$-deformation $\Lambda^q(A)$, as in \cite{Oh2}.
As a set, $\Lambda^q(A) = 1 + tA[t] = \lbrace1+\sum_{n=1}^{\infty} a_n t^n, a_n \in A, n \geq 1 \rbrace$, 
where the addition is defined as the usual multiplication of power series raised to the $q$-th power, and 
the multiplication is defined by requiring that
\begin{equation}\label{qWittProd}
(1-at)^{-q} \star_q (1-bt)^{-q} = (1-abt)^{-q} .
\end{equation}

\smallskip

It is proved in \cite{Oh1} that, for every commutative ring in which $q$ is invertible, 
the map $\eta: \Lambda(A) \rightarrow \Lambda^q(A)$, given by $\eta(f) = f(t)^q$, is 
an isomorphism. Thus, in this case, one can identify the underlying set of $\Lambda^q(A)$ with the 
$q$-th powers of the power series in $t$ with coefficients
in $A$ having constant term equal to $1$.

\smallskip

Note that the product \eqref{qWittProd} under $\star_q$ in $\Lambda^q(A)$ differs from the 
usual product $(1-at)^{-q} \star (1-bt)^{-q} =(1-ab t)^{-q^2}$ as elements of $\Lambda(A)$. 
In particular, for $A=k$, while the product of $(1-at)^{-q}$ and $ (1-bt)^{-q}$ in $\Lambda(k)$
can be interpreted as $L((E_1^{\oplus q},f_1^{\oplus q})\otimes (E_2^{\oplus q},f_2^{\oplus q}))
=(1-ab t)^{-q^2}$ for $(E_1,f_1)=(k,a)$ and $(E_2,f_2)=(k,b)$, the corresponding identity
$$ L^q(k,a)\star_q L^q(k,b)=L^q(k,ab) $$
that matches \eqref{qWittProd} does not correspond to just identifying $L^q(E,f)$ with
$L(E^{\oplus q},f^{\oplus q})$.

\smallskip

\begin{prop}\label{qcharpol}
Let $A$ be a  commutative ring in which $q$ is invertible.
The $q$-deformed characteristic polynomial 
\begin{equation}\label{Lq}
 L^q: W_0(A) \rightarrow \Lambda^q(A),  \ \ \ \  L^q(E,f) = \det(1-tM(f))^{-q}
\end{equation} 
determines a homomorphic injection whose image is 
$$ {\rm Range}(L^q) = \lbrace \frac{(1+a_1 t+...+a_n t^n)^q}{(1+b_1 t+...+b_m t^m)^q}, a_i, b_j \in A \rbrace.$$
\end{prop}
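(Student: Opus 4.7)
The cleanest approach is to recognize $L^q$ as a composition of two maps already recorded in the excerpt: the classical characteristic polynomial map $L: W_0(A) \to \Lambda(A)$, and the $q$-th power isomorphism $\eta: \Lambda(A) \to \Lambda^q(A)$ of Oh. Concretely, for any $(E,f) \in W_0(A)$,
\[
L^q(E,f) = \det(1-tM(f))^{-q} = \bigl(\det(1-tM(f))^{-1}\bigr)^q = \eta(L(E,f)),
\]
so $L^q = \eta \circ L$. The whole proposition will then drop out by functoriality.

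The first step is to invoke the structural result recalled before the proposition: $L: W_0(A) \to \Lambda(A)$ is a homomorphic injection with image exactly \eqref{rangeL}, i.e.\ the rational functions $(1+a_1 t+\cdots+a_n t^n)/(1+b_1 t + \cdots + b_m t^m)$ with $a_i, b_j \in A$. The second step is to invoke Oh's theorem \cite{Oh1}, valid because $q$ is invertible in $A$, which tells us that $\eta(f) = f(t)^q$ is an isomorphism $\Lambda(A) \xrightarrow{\sim} \Lambda^q(A)$. Composing an injective homomorphism with an isomorphism yields an injective homomorphism, so $L^q$ is a homomorphic injection.

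For the image computation, by the identification $L^q = \eta \circ L$ we have
\[
\mathrm{Range}(L^q) = \eta\bigl(\mathrm{Range}(L)\bigr) = \left\{\left(\frac{1+a_1 t+\cdots+a_n t^n}{1+b_1 t+\cdots+b_m t^m}\right)^{\!q}\!\!: a_i, b_j \in A\right\},
\]
which is exactly the set appearing in the statement, since $\eta$ takes a rational function in $\mathrm{Range}(L)$ to its $q$-th power.

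The only point requiring a little care — and the main potential obstacle — is to make sure the ring structure, not just the additive structure, is transported correctly: one must check that the product $\star_q$ on $\Lambda^q(A)$ defined by \eqref{qWittProd} is indeed $\eta$ applied to the Witt product on $\Lambda(A)$, so that $\eta$ is truly a \emph{ring} isomorphism and the calculation of $\mathrm{Range}(L^q)$ is sensible in both structures. The remark made in the excerpt (warning that $L^q(E,f)$ is \emph{not} $L(E^{\oplus q}, f^{\oplus q})$) is precisely what makes this consistent: the relation $L^q(k,a)\star_q L^q(k,b) = L^q(k,ab)$ reduces, via $\eta$, to the rank-one case of the classical identity $L(k,a)\star L(k,b) = L(k,ab)$ in $\Lambda(A)$, and the general case follows by extending multiplicatively (splitting principle) exactly as in the undeformed setting. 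Once this compatibility is in hand, no further computation is needed.
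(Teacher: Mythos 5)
Your proposal is correct and follows essentially the same route as the paper: both identify $L^q=\eta\circ L$ (the paper phrases this as a commuting square), invoke Oh's theorem that $\eta(f)=f(t)^q$ is an isomorphism when $q$ is invertible, and transport the classical injectivity and range description \eqref{rangeL} of $L$ through $\eta$. Your extra remark verifying that $\eta$ respects the product $\star_q$ is a reasonable elaboration of what the paper implicitly folds into the citation of \cite{Oh1}, but it does not change the argument.
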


\begin{proof}
The result of \cite{Oh1} mentioned above, showing that the map $\eta(f) = f(t)^q$ is 
an isomorphism, combined with the homomorphic injection $L(E,f) \mapsto \det(1-tM(f))^{-1}$ to \eqref{rangeL},
implies that the following diagram commutes 
$$ \begin{CD}
\Lambda(A) @>\eta>>  \Lambda^q(A)\\
@AALA                             @AAL^qA\\
W_0(A)   @<id<<       W_0(A)\, .
\end{CD} $$
The result then follows.
\end{proof}

\smallskip

\begin{cor}\label{embW0Wq}
The $q$-deformed Witt ring $W^q(A)$ contains an isomorphic copy of $W_0(A)$.
\end{cor}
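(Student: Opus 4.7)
The plan is to follow the classical strategy used to deduce $W_0(A) \hookrightarrow W(A)$ from the chain $W_0(A) \hookrightarrow \Lambda(A) \cong W(A)$. Proposition \ref{qcharpol} already provides a homomorphic injection $L^q : W_0(A) \hookrightarrow \Lambda^q(A)$ (under the hypothesis that $q$ is invertible in $A$), so it suffices to produce an isomorphism $\varepsilon^q : W^q(A) \to \Lambda^q(A)$ playing the role of a $q$-deformed Artin--Hasse exponential. Composing $(\varepsilon^q)^{-1} \circ L^q$ then exhibits $W_0(A)$ as a subring of $W^q(A)$.

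To construct $\varepsilon^q$, I would exploit that both $W^q(A)$ and $\Lambda^q(A)$ are pinned down by their compatibility with the $q$-deformed ghost map $\Phi^q$ of \eqref{Phiqmap}. The natural candidate, on rings in which $q$ is invertible, is the composition of the classical Artin--Hasse isomorphism $\varepsilon : W(A) \to \Lambda(A)$ with the isomorphism $\eta : \Lambda(A) \to \Lambda^q(A)$ of \cite{Oh1}, using the identification of the underlying sets of $W(A)$ and $W^q(A)$ (both equal to $A^\infty$). One then checks that the resulting map intertwines the two $q$-deformed ghost maps, which forces it to be a ring homomorphism, while bijectivity on the underlying sets is inherited from $\varepsilon$ and $\eta$.

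The main obstacle, as the paper itself flags in the remark following \eqref{qWittProd}, is that the $q$-product $\star_q$ on $\Lambda^q(A)$ is \emph{not} merely the transport of the classical product via $\eta$; correspondingly, the ring structure on $W^q(A)$ genuinely differs from that of $W(A)$ even after identifying the underlying sets. Thus the compatibility of $\varepsilon^q$ with the $q$-deformed multiplications cannot be read off formally from $\eta$, and has to be verified against the defining relation that $\Phi^q$ be a ring homomorphism. Once this is settled, the corollary is immediate: the image of $(\varepsilon^q)^{-1} \circ L^q$ is a subring of $W^q(A)$ isomorphic to $W_0(A)$.
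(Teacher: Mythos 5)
Your overall strategy---embed $W_0(A)$ into $\Lambda^q(A)$ via the injection $L^q$ of Proposition~\ref{qcharpol} and then transport through an Artin--Hasse type isomorphism $W^q(A)\cong\Lambda^q(A)$---is exactly the paper's, which simply imports the $q$-analog of the Artin--Hasse map from \cite{Oh4}. The gap is in your construction of that isomorphism. The composite $\eta\circ\varepsilon$, read through the identification of underlying sets $W^q(A)=A^\infty=W(A)$, is a ring isomorphism from the \emph{undeformed} ring $W(A)$ onto $\Lambda^q(A)$; for it to be a ring homomorphism out of $W^q(A)$ the identity map of $A^\infty$ would have to identify the ring structures of $W^q(A)$ and $W(A)$, which fails for $q\neq 1$. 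Concretely, the additions already differ: from \eqref{Phiqmap} the second ghost components are $x_1^2+2x_2$ for $W(A)$ and $q x_1^2+2x_2$ for $W^q(A)$, so the second component of a sum is $x_2+y_2-x_1y_1$ in $W(A)$ but $x_2+y_2-q\,x_1y_1$ in $W^q(A)$. Equivalently, the verification you defer to the ghost maps cannot succeed for your candidate: the $t^{n-1}$-coefficient of $\frac{d}{dt}\log\prod_m(1-x_mt^m)^{-q}$ is $q\sum_{d\mid n}d\,x_d^{n/d}$, a rescaled \emph{classical} ghost component, whereas compatibility with \eqref{Phiqmap} requires $\sum_{d\mid n}d\,q^{n/d-1}x_d^{n/d}$ (up to the fixed identification with $k^q[t]$), and these disagree already for a single nonzero entry in degree $\geq 2$. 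So $\eta\circ\varepsilon$ is not the $q$-deformed Artin--Hasse map, and the step you describe as ``once this is settled'' is precisely the point that breaks down.

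What is true, and what the paper's proof uses, is that \cite{Oh4} constructs a genuine $q$-analog of the Artin--Hasse exponential $W^q(A)\to\Lambda^q(A)$, compatible with \eqref{Phiqmap}; it is a different map from $\eta\circ\varepsilon$ (on a Teichm\"uller-type element $(a,0,0,\dots)$ it produces a series in $qat$, reflecting the formal group law $X+Y-qXY$, rather than the $q$-th power $(1-at)^{-q}$), and composing its inverse with $L^q$ gives the embedding. Incidentally, your reading of the remark after \eqref{qWittProd} is off: since $\eta$ is a ring isomorphism (by \cite{Oh1}, as the paper states), $\star_q$ \emph{is} the transport of $\star$ along $\eta$; the remark only observes that the $q$-th powers, multiplied inside the undeformed $\Lambda(A)$, give $(1-abt)^{-q^2}$ instead of $(1-abt)^{-q}$. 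The genuine obstruction to your argument sits on the Witt side, not on the $\Lambda$ side.
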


\begin{proof} The $q$-deformed Witt ring $W^q(A)$ is isomorphic to 
$\Lambda^q(A)$ through a $q$-analog of the Artin--Hasse map, see \cite{Oh4}. It is then easy to see 
that $W_0(A)$ can be embedded inside $W^q(A)$, with 
the map given by the composition of the $q$-deformed characteristic polynomial \eqref{Lq}
with the $q$-analog of the Artin--Hasse map.  
\end{proof}

Later, we will describe an explicit canonical embedding of $W_0(A)$ into $W^q(A)$, in the special case where 
$A=k$ is an algebraically closed field. 

\smallskip
\section{Undeformed Bost--Connes algebras in $q$-deformed Witt rings}\label{qBCZSec}

The integral form of the Bost--Connes algebra introduced in \cite{CCM3} 
is directly related to the Witt ring $W_0(A)$, as shown in \cite{CoCo}. The
operations $\sigma_n$ and $\tilde\rho_n$ of the integral Bost--Connes
algebra are induced by the Frobenius and Verschiebung and their
extensions to the Witt ring $W(A)$. In this section, we consider the
$q$-embedding of $W_0(A)$ into $W^q(A)$ of Corollary~\ref{embW0Wq},
and the natural operations in $W^q(A)$, and we show that the same
construction of \cite{CoCo} goes over with minor modifications to this
case and realize the same (undeformed) integral Bost--Connes algebra in terms
of Frobenius and Verschiebung on the $q$-deformed Witt ring $W^q(A)$.
We will see in the following section how this suggests then a natural
deformation of the integral Bost--Connes algebra, based on a modification
of the $q$-deformation $W^q(A)$ of the Witt ring, which differs from the
one introduced in \cite{Oh1}, \cite{Oh2}, \cite{Oh3}, \cite{Oh4}. 

\smallskip
\subsection{Operations on q-deformed Witt rings}\label{opsqWitt}

We begin by checking that operations and relations on $W_0(A)$ 
extend compatibly to $W^q(A)$ through the $q$-embedding.

\smallskip

As mentioned in \S \ref{WittSec}, the ghost map, the Frobenius, and the Verschiebung 
are operators defined on $W_0(A)$,  and $W(A)$. It is shown in 
\cite{GA} that the operators defined on them are, in the appropriate sense, compatible. We now prove that 
the operators defined on $W_0(A)$, when the latter is $q$-embedded in $W^q(A)$, are also compatible 
with the operators defined on $W^q(A)$.

\begin{defn}
Let $A^q[t]$ be the ring whose underlying set is $A[t]$, with the addition defined by the usual addition of power series, and the
multiplication defined by 
\begin{equation}\label{Aqtmult}
 \sum a_n t^n \sum b_n t^n = \sum \frac{1}{q} (a_n b_n)t^n .
\end{equation}
\end{defn}

\smallskip

In the following we focus on the case where $A=k$ is an algebraically closed field. We obtain the
following compatibilities between $W_0(k)$ and $W^q(k)$ under the $q$-embedding. 

\begin{prop}
Let $A=k$ be an algebraically closed field. 
The following compatibilities hold between operations on $W_0(k)$ and $W^q(k)$ under the $q$-embedding
of Corollary~\ref{embW0Wq}.
\begin{enumerate}
\item the ghost map $gh_n$ defined on $W_0(k)$ by 
$$ gh_n: W_0(k) \rightarrow gh'(k), \ \ \ \  (E,f) \mapsto (\tr(f^n))_{n\in \mathbb{N}} $$ 
is compatible with the ghost map defined 
on $W^q(k)$ by $\Phi^q$ of \eqref{Phiqmap}, in the sense that the following diagram commutes:
$$\begin{CD}
W_0(k) @>L^q>>             \Lambda^q(k)                  @=                W^q(k)\\
@VVgh_nV                             @VV \frac{d}{dt} logV                    @VV\Phi^qV\\
gh'(k)   @>\iota^q>>       k^q[t]                         @=               gh(k) \, 
\end{CD}$$
where the $q$-identification $\iota^q$ is given by 
\begin{equation}\label{iotaq}
\iota^q: (x_n) \mapsto \sum_n q x_n t^{n-1}.
\end{equation}

\item
The Frobenius map $F_n$ defined on $W_0(k)$ by $(E,f) \rightarrow (E,f^n)$ is compatible with the Frobenius map $F_n'$ on $W^q(k)$ in the sense
that the following diagram is commutative:
$$\begin{CD}
W_0(k) @>L^q>>             \Lambda^q(k)                  @=                W^q(k)\\
@VVF_nV                             @VVF_n''V                                                 @VVF_n'V\\
W_0(k)   @>L^q>>        \Lambda^q(k)                       @=                W^q(k) \, .
\end{CD}$$

\item
The Verschiebung operator $V_n$ defined on $W_0(k)$ by 
\begin{equation}\label{Versch}
(E,f) \rightarrow \left( E^{\bigoplus n}, \left( \begin{matrix} 0 & 0 & ... & ... & f\\ 1 & 0 & 0 & ... & 0\\ \hdotsfor{5} \\ 0 & 0 & 0 & ... 1& 0 \end{matrix} \right) \right)  
\end{equation}
is compatible with the Verschiebung operator $V_n'$ on $W^q(k)$ in the sense
that the following diagram is commutative:
$$\begin{CD}
W_0(k) @>L^q>>             \Lambda^q(k)                  @=                W^q(k)\\
@VV{V_n}V                             @.                                                 @VV{V_n'} V\\
W_0(k)   @>L^q>>        \Lambda^q(k)                       @=                W^q(k) \, .
\end{CD}$$
\end{enumerate}
\end{prop}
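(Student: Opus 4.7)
The plan is to reduce all three compatibility statements to direct computations with characteristic polynomials in $\Lambda^q(k)$. Because $k$ is algebraically closed, every $f\in \End_k(E)$ can be triangularized, so $L^q(E,f) = \prod_i (1-\lambda_i t)^{-q}$, where the $\lambda_i$ are the eigenvalues of $M(f)$. The $q$-Artin--Hasse isomorphism of \cite{Oh4} identifies $W^q(k)$ with $\Lambda^q(k)$ as rings, so the right-hand columns in the three diagrams can be treated interchangeably; it then suffices to verify commutativity of the middle squares involving $\Lambda^q(k)$, and invoke the fact that $F_n'$, $V_n'$, $\Phi^q$ are transported to $W^q(k)$ via this isomorphism.

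For part (1), I would compute
$$ \frac{d}{dt}\log L^q(E,f) \;=\; q\sum_i \frac{\lambda_i}{1-\lambda_i t} \;=\; q\sum_{n\ge 1} \tr(f^n)\, t^{n-1}, $$
which coincides with $\iota^q\bigl((\tr(f^n))_n\bigr) = \sum_n q\,\tr(f^n) t^{n-1}$; this gives commutativity of the left square. The right square is essentially the defining property of $\Phi^q$: the series $t\frac{d}{dt}\log \varepsilon_q((x_n))$ has $n$-th coefficient equal to $\Phi^q((x_n))_n = \sum_{d\mid n} d\, q^{n/d-1} x_d^{n/d}$, so under the identification $k^q[t] = gh(k)$ the operations $\frac{d}{dt}\log$ and $\Phi^q$ agree. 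A preliminary check is that $\iota^q$ is a ring isomorphism, which follows from the twisted product \eqref{Aqtmult}: sending $(a_n)\mapsto \sum_n q\, a_n t^{n-1}$ turns componentwise multiplication on the ghost ring into $(\sum q a_n t^{n-1})\star_q (\sum q b_n t^{n-1}) = \sum q a_n b_n t^{n-1}$.

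For parts (2) and (3), the strategy is to compute both sides on the generators $L^q(E,f)$. The Frobenius $F_n$ on $W_0(k)$ sends eigenvalues $\lambda_i$ to $\lambda_i^n$, so $L^q(F_n(E,f)) = \prod_i (1-\lambda_i^n t)^{-q}$. The block-companion matrix in \eqref{Versch} has characteristic polynomial $\prod_i(1-\lambda_i t^n)$, which one checks by triangularizing $f$ and reducing to the $1\times 1$ case $f=\lambda$, where the $n\times n$ companion matrix has $\det(I-t\,V_n(\lambda)) = 1-\lambda t^n$; thus $L^q(V_n(E,f)) = \prod_i (1-\lambda_i t^n)^{-q}$. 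On the other hand, the operators $F_n'$ and $V_n'$ on $\Lambda^q(k)$ from \cite{Oh1,Oh4} act on generators by $F_n':(1-xt)^{-q}\mapsto (1-x^n t)^{-q}$ and $V_n':(1-xt)^{-q}\mapsto (1-xt^n)^{-q}$, so commutativity holds on the image of $L^q$ tautologically and extends from the generators by additivity of $F_n'$ and $V_n'$ for the group law on $\Lambda^q(k)$.

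The main obstacle is normalization bookkeeping: pinning down the precise form of the $q$-Artin--Hasse isomorphism $\varepsilon_q$ from \cite{Oh4} so that the exponent $q^{n/d-1}$ in \eqref{Phiqmap} arises correctly from the logarithmic derivative, and checking that the conventions for $F_n'$ and $V_n'$ in \cite{Oh1,Oh4} carry matching scalar factors so that the $q$'s cancel across the squares. Once these conventions are fixed, each compatibility reduces to the single identity $\log\det(1-tM(f))^{-q} = q\sum_n \tfrac{\tr(f^n)}{n} t^n$ together with elementary manipulations of rational functions.
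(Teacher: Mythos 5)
Your proposal is correct and follows essentially the same route as the paper's proof: both reduce the left squares to explicit eigenvalue computations (triangularizing $f$ over the algebraically closed $k$, taking the logarithmic derivative of $\prod_i(1-\lambda_i t)^{-q}$, acting by Frobenius on eigenvalues, and computing the companion-matrix determinant for $V_n$), while the commutativity of the right halves and the normalization of the $q$-Artin--Hasse map are delegated to \cite{GA} and \cite{Oh4}, exactly as in the paper.
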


\begin{proof}
We work under the assumption that $A = k$ is an algebraically closed field. 
\begin{enumerate}
\item
It is clear from \cite{GA} and \cite{Oh4} that the right half of the diagram is commutative. 
The isomorphism between $\Lambda^q(k)$ and $W^q(k)$ is given by the $q$-deformed 
Artin--Hasse exponential map. The isomorphism between $gh(k)$ and $k^q[t]$ is given 
by the $q$-identification \eqref{iotaq}.
In an algebraically closed field, the matrix associated with the endomorphism $f$ can be triangulated. 
Therefore, the $n$-th component of the ghost map is given by $\sum_i \alpha_i^n$, where $\alpha_i$ 
are the eigenvalues associated with the endomorphism $f$. Note that, given these eigenvalues, 
$$L^q(E,f) = \det(1-tf)^{-q} = \prod_i (1-\alpha_i)^{-q}.$$ Taking the log-derivative of this, we then obtain
$$\frac{d}{dt} \log(\prod_i (1-\alpha_i)^{-q}) = q \frac{d}{dt} \sum_i  \log(1 - \alpha_i)^{-1} 
= q \sum_j^\infty (\sum_i \alpha_i ^j) t^{j-1},$$ 
where the second identify follows from the identity 
$$\frac{d}{dt} \log(\frac{1}{1-at}) = \sum_i a^i t^{i-1},$$ which is obtained  
from the Taylor expansion of the log function. 
Given the $i$-th ghost component of $W_0(k)$, which is $x_i = \sum_i \alpha_i ^j$, we obtain through the 
$q$-identification, $$\iota^q((x_n)) = q \sum_j^\infty (\sum_i \alpha_i ^j) t^{j-1}. $$

\item
As in \cite{GA}, using the identification between $W^q(k)$ and the ring $\Lambda^q(k)$, we see that 
the Frobenius map on $W^q(k)$ is the same as the map
$$F_n'': (1-at)^{-1} \mapsto (1-a^n t)^{-1}$$ 
Then, suppose given $(E,f) \in W_0(k)$, where $f$ has eigenvalues $\alpha_i$. 
The Frobenius map sends $(E,f)$ to $(E,f^n)$, where the eigenvalues of $f^n$ 
ware $\alpha_i^n$. Then we obtain 
 $$F_n''(L^q(E,f)) = F_n''(\prod_i (1-\alpha_i)^{-q}) = 
 \prod_i (1-\alpha_i^n)^{-q} = L^q(E, f^n) = L^q(F_n(E,f)).$$
 
 \item Again, the equivalence of the right half of the diagram is shown in \cite{Oh4}.
Note that the Verschiebung operator $V_n$ acting on $\Lambda^q(k)$ is defined by 
$$ V_n:  (1-at)^{-1} \mapsto (1-at^n)^{-1}, $$
for $(1-at)^{-1} \in\Lambda^q(k)$. A direct calculation then shows that 
$$ \det \left( 1-  \left( \begin{matrix} 0 & 0 & ... & ... & f\\ 1 & 0 & 0 & ... & 0\\ \hdotsfor{5} \\ 0 & 0 & 0 & ... 1& 0 \end{matrix} \right) \, \, t \right)^{-1} = (1- M(f)\, t^n )^{-1}. $$
Thus, the commutativity of the left half of the diagram also follows. 
\end{enumerate}
\end{proof}

\smallskip
\subsection{Divisor map and undeformed Bost--Connes algebra}\label{DivSec}

When $A$ is an algebraically closed field $k$, 
the determinant function factors completely into linear forms in 
terms of the eigenvalues of the endomorphisms. 
Associating to each pair $(E,f)$ the divisor $\delta(f)$ of non-zero 
eigenvalues of $f$ determines a ring isomorphism between 
$W_0(k)$ and the group ring $\Z[k^*]$. If in particular $k=\bar\F_p$ is the 
algebraic closure of a finite field $\F_p$, it is shown in \cite{CoCo} that one 
obtains a natural isomorphism 
\begin{equation}\label{sigmabarFp}
 \sigma: W_0(\bar{\F}_p) \rightarrow \Z[(\Q/\Z)^{(p)}], 
\end{equation} 
where $(\Q/\Z)^{(p)}$ is the group of fractions with denominator prime to $p$. 

\smallskip

The map $\sigma$ of \eqref{sigmabarFp} is induced by the divisor map. 
Given an element $(E,f)$ in $W_0(k)$ with $\alpha$ the eigenvalues of $f$ and
$n(\alpha)$ their multiplicities, the divisor map is given by
\begin{equation}\label{divmap}
\delta(f):=\delta(L(E,f))=\delta (\prod (1-\alpha t)^{-n(\alpha)}) = \sum n(\alpha)  [\alpha], 
\end{equation}
As shown in Proposition 2.3 of \cite{CoCo}, it defines an element in $\Z[k^*]$,
and one obtains a ring isomorphism
\begin{equation}\label{divmap2}
\delta: W_0(k) \to \Z[k^*].
\end{equation}

\smallskip

When we consider $W_0(k)$ as embedded in the $q$-deformed Witt ring $W^q(k)$,
or equivalently we use the $q$-deformed characteristic polynomial $L^q: W_0(k)\to \Lambda^q(k)$,
the divisor map \eqref{divmap} is no longer compatible with the multiplication $\star_q$ in $\Lambda^q(k)$.
Indeed, we have
$$ \delta(L^q(E,f))=\delta(\prod (1-\alpha t)^{-q n(\alpha)}) =\sum q\, n(\alpha)\, [\alpha]. $$
Using $(k,a)\otimes (k,b) = (k, ab)$, 
we have $L^q(k,a)=(1-at)^{-q}$ and $L^q(k,b)=(1-bt)^{-q}$ in $\Lambda^q(k)$ with
$L^q((k,a)\otimes (k,b))=L^q(k,a)\star_q L^q(k,b)=(1-ab t)^{-q}=L^q(k, ab)$, by the definition
of the induced product $\star_q$ on $\Lambda^q(k)$, while the multiplication of the
divisors as elements in the group ring gives
$\delta(L^q(k,a))\delta(L^q(k,b)) = (\sum q\, n(\alpha)\, [\alpha])(\sum q \, n(\beta) [\beta])
= q \cdot \delta(L^q(k, ab))$. 

\smallskip

Thus, the only way to restore the multiplicative property is to compute 
$\delta_q(L^q(E,f))=q^{-1} \delta(L^q(E,f))=\delta(L(E,f))$. This restores the original
map \eqref{divmap} on the undeformed $\Lambda(k)$, hence the same undeformed
Bost--Connes algebra as constructed in \cite{CoCo}.

\smallskip

While the $q$-deformations of Witt rings considered in \cite{Oh1}, \cite{Oh2}, \cite{Oh3}, \cite{Oh4} do
not directly lead to a $q$-deformed Bost--Connes algebra through the same construction of \cite{CoCo},
this suggests a modification of the construction of $q$-deformed Witt rings, with a different motivation
in mind than the $q$-M\"obius functions and $q$-deformed necklace rings that motivated the 
construction of \cite{Oh1}, \cite{Oh2}, \cite{Oh3}, \cite{Oh4}. 

\medskip
\section{Geometric $q$-deformations of Witt rings} 

In this section we discuss a different approach to $q$-deforming the Witt rings, and we
show that, unlike the case discussed in the previous section, this leads to $q$-deformations
of the Bost--Connes algebra. The crucial difference here is that, instead of $q$-deforming
the Witt ring $W(A)$ to $W^q(A)$, or equivalently deforming $\Lambda(A)$ to $\Lambda^q(A)$
as in \cite{Oh1}, \cite{Oh2}, \cite{Oh3}, \cite{Oh4}, we consider a deformation of $W_0(A)$
to a $q$-deformed $\cW^q_0(A)$, while we maintain the product in $W(A)$ and $\Lambda(A)$
undeformed.
 
\smallskip

Geometrically, if we consider elements of $\Lambda(k)$ that arise from zeta functions of
schemes (see \cite{Rama}), the two kinds of deformations that we introduce
in this section have a very simple geometric meaning. The first arises by replacing a scheme 
$X$ with a tower where $X_{q^\ell}=X\sqcup \cdots \sqcup X$, a disjoint union of $q^\ell$
copies of $X$ (or equivalently the product of $X$ with a $q^\ell$ points). The second 
deformation consists of replacing $X$ with the tower of the products $X\times \A^\ell$.
Since, when $q$ is a prime power $q=p^r$, the number of points $\# A^\ell(\F_q)=q^\ell$,
the second construction will be an extension of the first, where the case of $q^\ell$-points
appears as the first term, but all the additional contributions of the field extensions
$\A^\ell(\F_{q^n})$ are also counted. Thus, we refer to them, respectively, as the
$q^{\ell}$-points deformation and the $\A^\ell$-deformation.

\smallskip
\subsection{The $q^{\ell}$-points deformation}

Let $\Omega_q$ denote the map $\Omega_q: W_0(A)\to W_0(A)$ that 
maps $\Omega_q: (E,f) \mapsto (E^{\oplus q}, f^{\oplus q})$. Consider
the restriction of the characteristic polynomial map
$L: W_0(A) \to \Lambda(A)$ to the range $\Omega_q(W_0(A))$.
Notice that, unlike the $q$-deformations considered in the previous
sections, here we do not deform the product in $\Lambda(A)$. We work with
the undeformed product determined by
$$ (1-at)^{-1} \star (1-bt)^{-1} = (1-ab t)^{-1}. $$
We define $S^q(E,f)=L(\Omega_q(E,f))$. Note that this is the same
characteristic polynomial $(1-tM(f))^{-q}$, as in the case of the
$q$-deformed characteristic polynomial $L^q(E,f)$ considered above,
except that now we regard it as an element of the undeformed ring $\Lambda(A)$
rather than as an element of $\Lambda^q(A)$. 

\smallskip

\begin{defn}\label{qellPtsDef}
The $q^\ell$-points deformation of the Witt ring $W_0(A)$ is 
the graded ring $\cW^q_0(A)$ defined as a set by
\begin{equation}\label{cWq}
\cW^q_0(A) =\oplus_{\ell \geq 0}  \Omega_{q^\ell}(W_0(A)), 
\end{equation}
with addition and multiplication induced uniquely by addition 
and multiplication in $W_0(A)$.
\end{defn}

\smallskip

\begin{lem}\label{cWmult}
The multiplication operation on $\cW^q_0(A)$ obtained as above satisfies
\begin{equation}\label{cWgradedmult}
 \star: \Omega_{q^\ell}(W_0(A)) \times \Omega_{q^{\ell'}} (W_0(A)) \to \Omega_{q^{\ell+\ell'}}(W_0(A))). 
\end{equation} 
\end{lem}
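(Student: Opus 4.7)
The plan is to reduce the claim to the core identity
$$\Omega_{q^\ell}(x) \cdot \Omega_{q^{\ell'}}(y) = \Omega_{q^{\ell+\ell'}}(x \cdot y) \quad \text{in } W_0(A),$$
for all $x, y \in W_0(A)$, where the dot denotes the tensor product multiplication on $W_0(A)$. Once this is established, the statement \eqref{cWgradedmult} follows immediately, since the multiplication on $\cW^q_0(A)$ is, by Definition~\ref{qellPtsDef}, the one induced from the ambient ring $W_0(A)$: an element of $\Omega_{q^\ell}(W_0(A))$ multiplied with an element of $\Omega_{q^{\ell'}}(W_0(A))$ in $W_0(A)$ lands, by the identity above, in $\Omega_{q^{\ell+\ell'}}(W_0(A))$, which is precisely the $(\ell+\ell')$-th graded piece of $\cW^q_0(A)$.

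To establish the core identity, I would work at the level of representatives $(E_1,f_1)$ and $(E_2,f_2)$. By definition,
$\Omega_{q^\ell}(E_1,f_1) = (E_1^{\oplus q^\ell}, f_1^{\oplus q^\ell})$ and similarly for $(E_2,f_2)$, so their tensor product in $\underline{\End_A}$ is
$$\bigl( E_1^{\oplus q^\ell} \otimes_A E_2^{\oplus q^{\ell'}}, \; f_1^{\oplus q^\ell} \otimes f_2^{\oplus q^{\ell'}} \bigr).$$
The natural isomorphism $E_1^{\oplus m} \otimes_A E_2^{\oplus n} \cong (E_1 \otimes_A E_2)^{\oplus mn}$ for projective modules, together with the corresponding identity $f_1^{\oplus m} \otimes f_2^{\oplus n} = (f_1 \otimes f_2)^{\oplus mn}$ for the endomorphisms, gives with $m = q^\ell$ and $n = q^{\ell'}$ the equality
$$\bigl( E_1^{\oplus q^\ell} \otimes_A E_2^{\oplus q^{\ell'}}, \; f_1^{\oplus q^\ell} \otimes f_2^{\oplus q^{\ell'}} \bigr) = \Omega_{q^{\ell+\ell'}}\bigl( (E_1,f_1) \otimes (E_2,f_2) \bigr),$$
which is the desired identity on generators. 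Passing to the quotient defining $W_0(A)$ preserves this equality.

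For a cross-check it is convenient to verify the identity after applying the characteristic polynomial $L: W_0(A) \to \Lambda(A)$ (which is injective by Proposition~\ref{qcharpol}, or rather its undeformed analogue). Over an algebraically closed field, if $f_1$ has eigenvalues $\{\alpha_i\}$ and $f_2$ has eigenvalues $\{\beta_j\}$, then using the undeformed product $\star$ on $\Lambda(A)$ one has
$$\prod_i (1-\alpha_i t)^{-q^\ell} \; \star \; \prod_j (1-\beta_j t)^{-q^{\ell'}} = \prod_{i,j} (1-\alpha_i \beta_j t)^{-q^{\ell+\ell'}},$$
which is precisely $L\bigl(\Omega_{q^{\ell+\ell'}}((E_1,f_1)\otimes (E_2,f_2))\bigr)$.

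Finally, since $\Omega_{q^\ell}$ is additive ($\Omega_{q^\ell}(x+y) = \Omega_{q^\ell}(x) + \Omega_{q^\ell}(y)$, because direct sum distributes over direct sum), the identity on generators extends bi-additively to all of $\Omega_{q^\ell}(W_0(A)) \times \Omega_{q^{\ell'}}(W_0(A))$. There is no serious obstacle here — the lemma is essentially a bookkeeping statement that the direct-sum operation $\Omega_{q^\bullet}$ is multiplicative in the exponent; the only small point worth noting is that $\Omega_{q^\ell}(W_0(A))$ is \emph{not} stable under multiplication inside $W_0(A)$, which is precisely why the grading in \eqref{cWq} is needed.
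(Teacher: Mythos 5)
Your proof is correct and follows essentially the same route as the paper: both reduce the claim to the identity $\Omega_{q^\ell}(E_1,f_1)\otimes\Omega_{q^{\ell'}}(E_2,f_2)=\Omega_{q^{\ell+\ell'}}(E_1\otimes E_2,f_1\otimes f_2)$ on representatives in $\underline{\End_A}$, using distributivity of the tensor product over direct sums so that the multiplicities $q^\ell$ and $q^{\ell'}$ multiply to $q^{\ell+\ell'}$. The additional remarks you include (bi-additivity of $\Omega_{q^\ell}$ and the characteristic-polynomial cross-check in $\Lambda(A)$) are harmless elaborations of the same argument.
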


\proof We have $\Omega_{q^\ell}(E_1,f_1)=(E_1^{\otimes q^\ell}, f_1^{\otimes q^\ell})$ and 
$\Omega_{q^{\ell'}}(E_2,f_2)=(E_2^{\otimes q^{\ell'}}, f_2^{\otimes q^{\ell'}})$, hence their product
in $W_0(A)$ is given by $((E_1\otimes E_2)^{\oplus q^{\ell+\ell'}},
(f_1\otimes f_2)^{\oplus q^{\ell+\ell'}})$ and we obtain
\begin{equation}\label{prodOmegaq}
 \Omega_{q^\ell}(E_1,f_1) \otimes \Omega_{q^{\ell'}}(E_2,f_2) 
 = \Omega_{q^{\ell+\ell'}}(E_1\otimes E_2,f_1\otimes f_2). 
\end{equation} 
\endproof

At the level of characteristic polynomials this corresponds to the product relation
\begin{equation}\label{Sellprod}
S^{q^\ell}(E_1,f_1)\star S^{q^{\ell'}}(E_2,f_2) = S^{q^{\ell+\ell'}}(E_1\otimes E_2,f_1\otimes f_2)),
\end{equation}
since we have
$$  L(\Omega_{q^\ell}(E_1,f_1))\star L(\Omega_{q^{\ell'}}(E_2,f_2))) =
L(\Omega_{q^\ell}(E_1,f_1) \otimes \Omega_{q^{\ell'}}(E_2,f_2)) $$ 
$$=L(\Omega_{q^{\ell+\ell'}}(E_1\otimes E_2,f_1\otimes f_2)). $$

\smallskip

\begin{lem}\label{FnVncWq}
The Frobenius $F_n$ and Verschiebung $V_n$ on $W_0(A)$ extend to 
a Frobenius $F_n$ and Verschiebung $V_n$ on $\cW^q_0(A)$,
satisfying $\Omega_{q^\ell}\circ F_n = F_n \circ \Omega_{q^\ell}$ and
$\Omega_{q^\ell}\circ V_n = \tilde V_n \circ \Omega_{q^\ell}$, where
$\tilde V_n$ is $V_n$ up to a a change of basis given by a permutation. 
\end{lem}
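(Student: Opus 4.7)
The plan is to verify both commutation relations by a direct manipulation of the matrix representatives, and to observe that the necessary identifications hold at the level of $W_0(A)$ because objects there are considered up to isomorphism of pairs $(E,f)$. Since $\cW^q_0(A) = \oplus_{\ell\geq 0} \Omega_{q^\ell}(W_0(A))$, it suffices to show that $F_n$ and $V_n$ (applied componentwise on representatives $(E,f)$) send $\Omega_{q^\ell}(W_0(A))$ into itself, and then to extend linearly across the grading. The stated commutation identities will imply both that the extension is well defined and that the grading is preserved.

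For the Frobenius the argument is immediate and essentially tautological: one computes $\Omega_{q^\ell}(F_n(E,f)) = (E^{\oplus q^\ell}, (f^n)^{\oplus q^\ell})$ on one side and $F_n(\Omega_{q^\ell}(E,f)) = (E^{\oplus q^\ell}, (f^{\oplus q^\ell})^n)$ on the other, and these agree on the nose, because raising a direct sum of endomorphisms to an $n$-th power is performed componentwise. Hence $F_n \circ \Omega_{q^\ell} = \Omega_{q^\ell} \circ F_n$, and $F_n$ preserves each graded piece.

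The Verschiebung case requires a little bookkeeping. Writing $M_n(f)$ for the companion matrix displayed in \eqref{Versch}, I would compute that $\Omega_{q^\ell}(V_n(E,f))$ is represented by the block-diagonal matrix $M_n(f)^{\oplus q^\ell}$ acting on $E^{\oplus nq^\ell}$, while $V_n(\Omega_{q^\ell}(E,f))$ is represented by the single $n\times n$ cyclic block matrix $M_n(f^{\oplus q^\ell})$ with $q^\ell\times q^\ell$ identity blocks on the subdiagonal and $f^{\oplus q^\ell}$ in the top-right corner. Indexing a basis of $E^{\oplus nq^\ell}$ by pairs $(k,j)$ with $k\in\{1,\dots,n\}$ and $j\in\{1,\dots,q^\ell\}$, the swap $(k,j)\leftrightarrow(j,k)$ produces a permutation matrix that conjugates one of these endomorphisms to the other. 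This permutation is precisely the change of basis encoded in $\tilde V_n$. Since in $W_0(A)$ pairs that differ by a conjugation are identified, the relation $\Omega_{q^\ell}\circ V_n = \tilde V_n \circ \Omega_{q^\ell}$ then follows, and $V_n$ descends to a well-defined operator on $\cW^q_0(A)$ respecting the grading.

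The only delicate point is to fix the indexing conventions for the basis of $E^{\oplus nq^\ell}$ so that the permutation realizing $\tilde V_n$ is transparent; the rest is the standard observation that interchanging the two tensor (or direct sum) factors amounts to a reindexing of basis vectors. I do not anticipate any substantive obstacle beyond this bookkeeping.
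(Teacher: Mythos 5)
Your argument is correct and follows essentially the same route as the paper: the Frobenius case is the immediate componentwise identity, and the Verschiebung case is handled by exhibiting the permutation of the direct-sum indexing (the canonical reindexing of $(E^{\oplus q^\ell})^{\oplus n}\cong(E^{\oplus n})^{\oplus q^\ell}$) that conjugates $M_n(f^{\oplus q^\ell})$ to $M_n(f)^{\oplus q^\ell}$, which is exactly the triple-indexing argument in the paper's proof. The only cosmetic caveat is that your ``swap $(k,j)\leftrightarrow(j,k)$'' is a bijection between two differently ordered index sets (together with the basis of $E$), not literally a swap within one set, but this is the same bookkeeping the paper performs.
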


\proof We have $F_n(E,f)=(E,f^n)$, hence $F_n(E^{\oplus q^\ell}, f^{\oplus q^\ell})=
(E^{\oplus q^\ell}, (f^n)^{\oplus q^\ell})=\Omega_{q^{\ell}}(E,f^n)$. We index
the entries of the matrix
$$ V_n(f)=\left( \begin{matrix} 0 & 0 & ... & ... & f\\ 1 & 0 & 0 & ... & 0\\ \hdotsfor{5}
 \\ 0 & 0 & 0 & ... 1& 0 \end{matrix} \right) $$
as $(V_n(f))_{ij}$ with $i=(a,b)$ and $j=(a',b')$ where $a,a'=1,\ldots, k$
where $k\times k$ is the dimension of $M(f)$, and $b,b'=1,\ldots, n$.
The second indices $(b,b')$ specify which square block of size $k$
we are considering and the first index $(a,a')$ locates the position in that
square block. Then the entries of $V_n(f)$ are
$$ (V_n(f))_{ij}=\left\{ \begin{array}{ll} f_{aa'} & b=1, b'=q \\
1 & a=a', b'=b-1 \\ 0 & \text{otherwise.}
\end{array}\right. $$
Thus, $V_n(f^{\oplus q^{\ell}})$ can be indexed by $(V_n(f^{\oplus q^{\ell}}))_{ij}$
with $i=(a,b,r)$ and $j=(a'b',r')$ with $a,a',b,b'$ as above and $r,r'=1,\ldots q^{\ell}$
with entries as above for $r=r'$ and zero otherwise. Clearly, this also indexes 
the entries of $(V_n(f))^{\oplus q^\ell}$.
\endproof

We use the same notation $F_n$ and $V_n$ for the Frobenius and Verschiebung on $\cW^q_0(A)$. 

\smallskip

\begin{prop}\label{divOmegaq}
Let $k$ be an algebraically closed field.
The divisor map \eqref{divmap} extends to a ring isomorphism 
$\delta_q: \cW^q_0(k) \to \Z[q][k^*]$ given by 
\begin{equation}\label{divcWq}
\delta_q(\Omega_{q^\ell}(E,f))=q^\ell \delta(L(E,f))= q^\ell \sum n(\alpha) [\alpha].
\end{equation}
\end{prop}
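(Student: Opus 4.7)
The plan is to leverage the fact that $\delta: W_0(k) \to \Z[k^*]$ is already known (by Proposition 2.3 of \cite{CoCo}) to be a ring isomorphism, and to check that the graded structure of $\cW_0^q(k)$ from Definition~\ref{qellPtsDef} and Lemma~\ref{cWmult} matches the grading of $\Z[q][k^*]=\Z[q]\otimes_\Z\Z[k^*]$ by powers of $q$. Concretely, I would define $\delta_q$ separately on each graded piece $\Omega_{q^\ell}(W_0(k))$ by the formula \eqref{divcWq} and then extend additively to all of $\cW_0^q(k)$.

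First I would verify that $\delta_q$ is well-defined on each graded component. Since $W_0(k)\cong\Z[k^*]$ is a torsion-free abelian group and $\Omega_{q^\ell}$ acts on $W_0(k)$ as multiplication by $q^\ell$ in the Grothendieck group, it is injective on $W_0(k)$; hence the value $q^\ell\sum n(\alpha)[\alpha]$ depends only on the class $\Omega_{q^\ell}(E,f)$. Additivity on each graded component is then inherited directly from additivity of $\delta$ on $W_0(k)$.

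Next I would verify multiplicativity using \eqref{prodOmegaq}. For $\Omega_{q^\ell}(E_1,f_1)\in\Omega_{q^\ell}(W_0(k))$ and $\Omega_{q^{\ell'}}(E_2,f_2)\in\Omega_{q^{\ell'}}(W_0(k))$, Lemma~\ref{cWmult} gives that their product lies in $\Omega_{q^{\ell+\ell'}}(W_0(k))$ and equals $\Omega_{q^{\ell+\ell'}}(E_1\otimes E_2,f_1\otimes f_2)$. Applying $\delta_q$ to both sides, the left-hand side yields
\[
\bigl(q^\ell\delta(L(E_1,f_1))\bigr)\cdot\bigl(q^{\ell'}\delta(L(E_2,f_2))\bigr)=q^{\ell+\ell'}\,\delta(L(E_1,f_1))\delta(L(E_2,f_2)),
\]
which matches the right-hand side $q^{\ell+\ell'}\delta(L(E_1\otimes E_2,f_1\otimes f_2))$ because $\delta$ is multiplicative. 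Extending bilinearly across the graded sum, $\delta_q$ becomes a ring homomorphism.

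Finally, bijectivity will follow by assembling level by level. At grade $\ell$, the restriction of $\delta_q$ is the composition of the isomorphism $\Omega_{q^\ell}(W_0(k))\cong W_0(k)$ (inverse to the injection given by $\Omega_{q^\ell}$), the isomorphism $\delta:W_0(k)\to\Z[k^*]$, and multiplication by $q^\ell$, which identifies $\Z[k^*]$ with the $\ell$-th graded piece $q^\ell\Z[k^*]$ of $\Z[q][k^*]$. Taking the direct sum over $\ell\geq 0$ then gives the desired isomorphism $\cW_0^q(k)\xrightarrow{\sim}\Z[q][k^*]$. The only real subtlety I anticipate is bookkeeping around the grading: one must insist that distinct graded components of $\cW_0^q(k)$ are kept formally disjoint (even though, ungraded, $\Omega_{q^\ell}(E,f)$ and $\Omega_{q^{\ell'}}(E',f')$ could represent the same class in $W_0(k)$), so that the $q^\ell$ factor in $\delta_q$ sees the grade rather than any ambient identification inside $W_0(k)$.
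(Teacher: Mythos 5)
Your proposal is correct and follows essentially the same route as the paper: compute $\delta_q$ on each graded piece via the divisor of the characteristic polynomial, deduce multiplicativity from \eqref{prodOmegaq} and the multiplicativity of $\delta$, and obtain bijectivity from that of $\delta$ level by level. Your explicit attention to keeping the graded components formally disjoint and to well-definedness (injectivity of $\Omega_{q^\ell}$ on the torsion-free group $W_0(k)$) just spells out bookkeeping the paper leaves implicit.
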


\proof We have
$$ \delta_q(\Omega_{q^\ell}(A,f))=\delta(S^{q^\ell}(A,f))= \delta(\prod (1-\alpha t)^{-q^\ell n(\alpha)}) $$
$$ = q^\ell \sum n(\alpha) [\alpha]. $$
It is clear that $\delta_q$ is compatible with addition, and by \eqref{prodOmegaq} it is also
compatible with multiplication
$$ \delta_q(\Omega_{q^{\ell+\ell'}}(E_1\otimes E_2,f_1\otimes f_2))= 
\delta_q(\Omega_{q^\ell}(E_1,f_1) \otimes \Omega_{q^{\ell'}}(E_2,f_2)) $$ $$ =
\delta(S^{q^\ell}(E_1,f_1)\star S^{q^{\ell'}}(E_2,f_2))=\delta(S^{q^{\ell+\ell'}}(E_1\otimes E_2,f_1\otimes f_2))) $$
$$ = q^{\ell+\ell'} \sum n(\alpha)n(\beta) [\alpha][\beta] =
(q^\ell \sum n(\alpha) [\alpha])(q^{\ell'} \sum n(\beta) [\beta])  \in \Z[q][k^*] .$$
The fact that the original $\delta$ is a bijection also implies that $\delta_q$ is a bijection.
\endproof

Using as in \cite{CoCo} an isomorphism $\sigma: \bar \F_p^* \to (\Q/\Z)^{(p)}$ together with
the divisor map $\delta_q$ of \eqref{divcWq} we obtain an isomorphism
\begin{equation}\label{isosigmaq}
\tilde \sigma_q: \cW_0^q(\bar\F_p) \to \Z[q][(\Q/\Z)^{(p)}],  \ \ \  \tilde\sigma_q=\sigma\circ \delta_q.
\end{equation}
Note that the construction of the deformation $\cW^q_0(\bar\F_p)$ and the divisor map
\eqref{isosigmaq} make sense for an arbitrary integer $q$, which is not necessarily a power of $p$. 

\smallskip
\subsection{The $q^\ell$-points deformation of the Bost--Connes algebra}

Let $\tilde \sigma_q: \cW^q_0(\bar F_p) \to \Z[q][(\Q/\Z)^{(p)}] \subset \Z[q][\Q/\Z]$ be as
in \eqref{isosigmaq}. We consider endomorphisms $\sigma_{n,q}$ of $\Z[q][\Q/\Z]$  that
satisfy
\begin{equation}\label{sigmanq}
\sigma_{n,q} \circ \tilde\sigma_q = \tilde\sigma_q \circ F_n ,
\end{equation}
with $F_n$ the Frobenius on $\cW^q_0(\bar \F_p)$.

\begin{lem}\label{sigmaqnendo}
Let $E(r,k):= q^k e(r)$ in $\Z[q][\Q/\Z]$, where $e(r)$ are the generators of $\Z[\Q/\Z]$.
The endomorphisms $\sigma_{n,q}: \Z[q][\Q/\Z] \to  \Z[q][\Q/\Z]$
satisfying \eqref{sigmanq} are given by 
\begin{equation}\label{sigmaqnerk}
\sigma_{n,q}(E(r,k))=q^{k} e(nr) =E(nr,k).
\end{equation}
\end{lem}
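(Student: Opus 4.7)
The plan is to identify $\sigma_{n,q}$ as the unique $\Z[q]$-linear ring extension, from $\Z[\Q/\Z]$ to $\Z[q][\Q/\Z]\cong \Z[q]\otimes_\Z \Z[\Q/\Z]$, of the classical Bost--Connes endomorphism $\sigma_n\colon e(r)\mapsto e(nr)$, and then to verify the intertwining \eqref{sigmanq} by a direct computation on the generators $\Omega_{q^\ell}(E,f)$ of $\cW^q_0(\bar\F_p)$. The formula
\[
\sigma_{n,q}(E(r,k)) \;=\; \sigma_{n,q}(q^k e(r)) \;=\; q^k\, e(nr) \;=\; E(nr,k)
\]
is then forced by $\Z[q]$-linearity, and it remains only to check that this $\sigma_{n,q}$ actually satisfies \eqref{sigmanq} and that the relation determines it on the image of $\tilde\sigma_q$.

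For the verification of \eqref{sigmanq}, I would first apply Proposition \ref{divOmegaq} together with \eqref{isosigmaq} to expand
\[
\tilde\sigma_q(\Omega_{q^\ell}(E,f)) \;=\; q^\ell \sum_{\alpha} n(\alpha)\, e(r_\alpha),
\]
where $r_\alpha := \sigma([\alpha]) \in (\Q/\Z)^{(p)}$; applying the proposed $\sigma_{n,q}$ then yields $q^\ell \sum_\alpha n(\alpha)\, e(n r_\alpha)$. On the other side, Lemma \ref{FnVncWq} identifies $F_n(\Omega_{q^\ell}(E,f))$ with $\Omega_{q^\ell}(E, f^n)$, whose eigenvalues are the $\alpha^n$ with unchanged multiplicities. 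Since the isomorphism $\sigma\colon \bar\F_p^* \to (\Q/\Z)^{(p)}$ carries multiplication into addition, $\sigma([\alpha^n]) = n r_\alpha$, so $\tilde\sigma_q(F_n(\Omega_{q^\ell}(E,f)))$ produces the same expression $q^\ell \sum_\alpha n(\alpha)\, e(n r_\alpha)$, confirming \eqref{sigmanq}.

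For uniqueness I would note that, because $\tilde\sigma_q$ is a ring isomorphism onto $\Z[q][(\Q/\Z)^{(p)}]$, the relation \eqref{sigmanq} pins down $\sigma_{n,q}$ completely on this subring, where it necessarily agrees with \eqref{sigmaqnerk}; extension to all of $\Z[q][\Q/\Z]$ by the same formula is the canonical $\Z[q]$-linear ring extension. There is no genuine obstacle in this proof: the substantive content is that, by Lemma \ref{FnVncWq}, the Frobenius $F_n$ preserves the $\Omega_{q^\ell}$-grading and merely raises the endomorphism $f$ to its $n$-th power, so under the divisor identification the scalar $q^\ell$ is undisturbed and the action reduces to the multiplication-by-$n$ map on $\Q/\Z$, in keeping with the opening remark of \S \ref{qBCZSec} that this $q$-deformation leaves the integral Bost--Connes algebra itself undeformed up to tensoring with $\Z[q]$.
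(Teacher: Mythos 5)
Your proof is correct and follows essentially the same route as the paper: compute $\tilde\sigma_q$ (equivalently $\delta_q$) on a generator $\Omega_{q^\ell}(E,f)$, giving $q^\ell\sum n(\alpha)[\alpha]$, and on its Frobenius image $\Omega_{q^\ell}(E,f^n)$, giving $q^\ell\sum n(\alpha)[\alpha^n]$, so that the induced map fixes the $q$-power and multiplies the $\Q/\Z$-label by $n$. Your added remarks on uniqueness on the image of $\tilde\sigma_q$ and on the $\Z[q]$-linear extension are fine elaborations of what the paper leaves implicit.
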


\proof Given $\Omega_{q^k}(E,f)\in \cW^q_0(A)$, with $L(E,f)=\prod (1-\alpha t)^{-n(\alpha)}$
we have $\delta_q( \Omega_{q^k}(E,f) )=q^k \sum n(\alpha) [\alpha]$,
while $\delta_q(F_n (\Omega_{q^k}(E,f)) =q^{k} \sum n(\alpha) [\alpha^n]$. 
\endproof

\smallskip

Thus, we see that, with this choice of $q^\ell$-points deformation $\cW^q_0(\bar F_p)$ the
resulting $q$-deformation of the Bost--Connes algebra is essentially trivial, consisting only
of replacing the coefficient ring $\Z$ of $\Z[\Q/\Z]$ by the polynomial ring $\Z[q]$, but the
operations $\sigma_{n,q}$ (hence also the corresponding operations $\tilde\rho_{n,q}$)
remain the same unperturbed operation of the original Bost--Connes algebra and are
the identity on the polynomial ring $\Z[q]$.

\begin{cor}\label{qdef1}
The deformation of the Bost--Connes algebra $\cA_\Z$ induced by the $q^\ell$-points
deformation $\cW^q_0(\bar F_p)$ of the Witt ring $W_0(\bar \F_p)$ is simply given by
the product $\cA_\Z \otimes_\Z \Z[q]$.
\end{cor}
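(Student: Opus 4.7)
The plan is to port the construction of the integral Bost--Connes algebra from \cite{CoCo} to this setting, using the isomorphism $\tilde\sigma_q: \cW^q_0(\bar\F_p) \to \Z[q][(\Q/\Z)^{(p)}]$ of \eqref{isosigmaq}, and to verify that each of the structural ingredients (the abelian algebra, the endomorphisms $\sigma_n$, and the maps $\tilde\rho_n$) is obtained from the corresponding one of $\cA_\Z$ simply by extending scalars from $\Z$ to $\Z[q]$. Since $\cA_\Z$ is generated by $\Z[\Q/\Z]$ together with the elements $\tilde\mu_n,\mu_n^*$ subject to the relations in \S\ref{intBCsec}, and these relations only involve $\sigma_n$ and $\tilde\rho_n$, this will suffice.

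First I would identify the algebra of ``coordinates''. By \eqref{isosigmaq}, $\tilde\sigma_q$ is a ring isomorphism onto $\Z[q][(\Q/\Z)^{(p)}] \subset \Z[q][\Q/\Z]$, so the underlying commutative algebra is already manifestly $\Z[q]\otimes_\Z \Z[\Q/\Z]^{(p)}$. Next I would handle the endomorphisms $\sigma_{n,q}$: Lemma~\ref{sigmaqnendo} already shows that $\sigma_{n,q}(E(r,k))=q^k e(nr)$, which is literally the $\Z[q]$-linear extension of the endomorphism $\sigma_n$ of $\Z[\Q/\Z]$.

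The remaining step is to verify the analogous statement for the maps $\tilde\rho_{n,q}$, which are induced by the Verschiebung on $\cW^q_0(\bar\F_p)$. Using Lemma~\ref{FnVncWq}, we have $V_n \circ \Omega_{q^\ell} = \Omega_{q^\ell}\circ \tilde V_n$ (up to a basis permutation that does not affect eigenvalues), so the divisor map $\delta_q$ of \eqref{divcWq} transports $V_n$ to the operation
\[
q^\ell \sum n(\alpha)[\alpha] \;\longmapsto\; q^\ell \sum n(\alpha) \sum_{\beta^n=\alpha} [\beta],
\]
exactly as in the undeformed computation of \cite{CoCo}, with only an overall $q^\ell$ factor. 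Under the notation $E(r,k)=q^k e(r)$ this reads $\tilde\rho_{n,q}(E(r,k)) = \sum_{s:ns=r} E(s,k)$, so $\tilde\rho_{n,q} = \id_{\Z[q]}\otimes \tilde\rho_n$.

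Finally, I would assemble the algebra generated by these ingredients and the isometries $\tilde\mu_n,\mu_n^*$ satisfying the relations of \S\ref{intBCsec}. Because both $\sigma_{n,q}$ and $\tilde\rho_{n,q}$ are the $\Z[q]$-linear extensions of $\sigma_n$ and $\tilde\rho_n$, and because the cross relations in the Bost--Connes presentation are $\Z[q]$-linear in the group-ring variable, the resulting algebra is identified with $\cA_\Z \otimes_\Z \Z[q]$, with $q$ acting centrally and no deformation of the Bost--Connes relations. The only step that requires a genuine verification (rather than bookkeeping) is the Verschiebung computation, and that is essentially the content of Lemma~\ref{FnVncWq} combined with the formula \eqref{divcWq}; I expect no real obstacle.
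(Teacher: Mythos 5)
Your proposal is correct and follows essentially the same route as the paper: the statement is obtained by combining the isomorphism $\tilde\sigma_q$ of \eqref{isosigmaq} with Lemma~\ref{sigmaqnendo}, observing that the induced operations act as the identity on the $\Z[q]$ factor. Your explicit verification that $\tilde\rho_{n,q}=\id_{\Z[q]}\otimes\tilde\rho_n$ via the Verschiebung compatibility of Lemma~\ref{FnVncWq} and the divisor map \eqref{divcWq} is a welcome filling-in of a step the paper only asserts parenthetically, but it is the same argument the paper intends.
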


\smallskip
\subsection{The $\A^\ell$-deformation} 

We improve on the construction described above by replacing the 
$q^\ell$-points deformation with another geometrically motivated
deformation, which arises from thinking of the $q^\ell$ points
as the $\F_q$ points of an affine space $\A^\ell$ and constructing
the deformation determined geometrically by taking products with
affine spaces. 

\smallskip

Recall that, if $X$ is a scheme of finite type over $k=\F_q$ with $q=p^r$ a prime power, then
the associated zeta function is given by
$$ Z(X,t) =\exp(\sum_{m\geq 1} N_m(X) \frac{t^m}{m} ) = \prod_x (1-t^{\deg(x)})^{-1}, $$
with $N_m(X)=\# X(F_{q^m})$ and $\deg(x)$ the degree of 
the extension $[k(x):\F_q]$, with $k(x)$ the residue field at the
point $x$. Writing $N_m(X)=\sum_{r|m} r\cdot a_r$, where $$a_r=\#\{ x\,:\,
[k(x):\F_q]=r\},$$ we obtain
\begin{equation}\label{ZXt}
 Z(X,t) = \prod_{r\geq 1} (1-t^r)^{-a_r}. 
\end{equation} 

\smallskip

In the case of an affine space $\A^\ell$ we have
$$ Z(\A^\ell,t)=\exp(\sum_m q^{\ell m} \frac{t^m}{m})=(1-q^n t)^{-1}. $$
In terms of \eqref{ZXt} this is 
\begin{equation}\label{ZetaAq}
 Z(\A^\ell,t)= (1-q^n t)^{-1} = \prod_{r\geq 1} (1-t^n)^{-M(q^n,r)}, 
\end{equation} 
where
\begin{equation}\label{Mqnr}
 M(q^n,r)=\frac{1}{r} \sum_{d|r} \mu(d) \, q^{\frac{nr}{d}},
\end{equation}
with $\mu(x)$ the M\"obius function.

\smallskip

Given a scheme $X$ with zeta function $Z(X,t)$,
taking the product $X\times \A^\ell$ gives
\begin{equation}\label{XAellZeta}
 Z(X\times \A^\ell,t)=Z(X,q^\ell t).
\end{equation} 
More generally, as discussed in \cite{Rama}, one should
regard zeta functions $Z(X,t)$ as elements in the Witt ring $W(\Z)$
satisfying
\begin{equation}\label{ZetaProd}
Z(X\times Y, t)=Z(X.t)\star Z(Y,t),
\end{equation}
where $\star$ is the Witt ring product determined by $(1-at)^{-1}\star (1-bt)^{-1}=(1-ab t)^{-1}$.
For a disjoint union $X\sqcup Y$, the zeta functions multiply, where multiplication of
series corresponds to the addition operation $+_w$ in the Witt ring,
\begin{equation}\label{sqcupZeta}
 Z(X\sqcup Y, t)=Z(X,t) \cdot Z(Y,t) = Z(X,t) +_w Z(Y,t).
\end{equation} 
Indeed, a more general inclusion-exclusion formula holds, see \cite{Rama}.

\smallskip

\begin{rem}\label{gendef} {\rm 
Since $M(q,1)=q$, we have
$$ (1-qt)^{-1} = (1-t)^{-q} \cdot \prod_{r>1} (1-t^r)^{-M(q,r)}. $$
The first term $(1-t)^{-q}$ is the kind of $q$-deformation of $(1-t)^{-1}$
that we considered in the previous sections. In this sense, we can
regard $(1-qt)^{-1}$ as a generalization of the $q$-deformation $(1-t)^{-q}$
discussed before, where the previous deformation appears as the order one term.}
\end{rem}

\smallskip

Modeled on the behavior of these geometric zeta functions, we consider
a different construction of a $q$-deformation of the Witt ring $W_0(A)$.
For $(E,f)\in W_0(A)$ with $L(E,f)=\prod (1-\alpha t)^{-n(\alpha)}$, let
\begin{equation}\label{OmegaAell}
\tilde\Omega_{q^\ell}(L(E,f)) := \prod (1-\alpha\, q^\ell\, t)^{-n(\alpha)}.
\end{equation}

\begin{defn}\label{AllqDef}
The $\A^\ell$-perturbation of the Witt ring $W_0(A)$ is the
graded ring $\tilde\cW^q_0(A)$ defined as a set by
\begin{equation}\label{tildecWq}
\tilde\cW^q_0(A) = \oplus_{\ell\geq 0} \tilde\Omega_{q^\ell} (L(W_0(A))),
\end{equation}
with $L: W_0(A) \to \Lambda(A)$ is the characteristic polynomial map
to the undeformed $\Lambda(A)$, with the operations induced
from $W_0(A)$.
\end{defn}

\smallskip

\begin{lem}\label{multipltildecWq}
Let $A=k$ be an algebraically closed field.
The multiplication in $\tilde\cW^q_0(A)$ obtained as above satisfies
\begin{equation}\label{grmultitildeW}
\star: \tilde\Omega_{q^\ell} (L(W_0(A))) \times \tilde\Omega_{q^{\ell'}} (L(W_0(A))) \to \tilde\Omega_{q^{\ell+\ell'}} (L(W_0(A))).
\end{equation}
\end{lem}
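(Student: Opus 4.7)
The plan is to reduce to the eigenvalue factorization, which works because $A=k$ is algebraically closed, and then apply the defining product rule of $\star$ in the undeformed ring $\Lambda(A)$.

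First I would write a typical element of $\tilde\Omega_{q^\ell}(L(W_0(k)))$ explicitly. For $(E_1,f_1)\in W_0(k)$ with eigenvalues $\alpha_i$ of multiplicities $n(\alpha_i)$, triangulability over $k$ gives $L(E_1,f_1)=\prod_i(1-\alpha_i t)^{-n(\alpha_i)}$, and hence by \eqref{OmegaAell}
\begin{equation*}
\tilde\Omega_{q^\ell}(L(E_1,f_1))=\prod_i(1-\alpha_i q^\ell t)^{-n(\alpha_i)}.
\end{equation*}
Similarly, for $(E_2,f_2)$ with eigenvalues $\beta_j$ of multiplicities $n(\beta_j)$, write $\tilde\Omega_{q^{\ell'}}(L(E_2,f_2))=\prod_j(1-\beta_j q^{\ell'}t)^{-n(\beta_j)}$.

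Next I would compute the $\star$-product. Since addition in $\Lambda(A)$ is multiplication of series and the $\star$-product is bilinear with respect to this addition, it is determined by its values on the basic linear factors $(1-at)^{-1}\star(1-bt)^{-1}=(1-abt)^{-1}$. Applying this with $a=\alpha_i q^\ell$ and $b=\beta_j q^{\ell'}$, and using bilinearity across the two product decompositions, I get
\begin{equation*}
\tilde\Omega_{q^\ell}(L(E_1,f_1))\star \tilde\Omega_{q^{\ell'}}(L(E_2,f_2))=\prod_{i,j}\bigl(1-\alpha_i\beta_j\, q^{\ell+\ell'}\, t\bigr)^{-n(\alpha_i)n(\beta_j)}.
\end{equation*}

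Finally I would identify the right-hand side with an element of $\tilde\Omega_{q^{\ell+\ell'}}(L(W_0(k)))$. The tensor product $(E_1\otimes E_2, f_1\otimes f_2)$ has eigenvalues $\alpha_i\beta_j$ with multiplicities $n(\alpha_i)n(\beta_j)$, so $L(E_1\otimes E_2,f_1\otimes f_2)=\prod_{i,j}(1-\alpha_i\beta_j t)^{-n(\alpha_i)n(\beta_j)}$, and applying $\tilde\Omega_{q^{\ell+\ell'}}$ yields exactly the product computed above. Hence the product lies in $\tilde\Omega_{q^{\ell+\ell'}}(L(W_0(k)))$, and in fact
\begin{equation*}
\tilde\Omega_{q^\ell}(L(E_1,f_1))\star \tilde\Omega_{q^{\ell'}}(L(E_2,f_2))=\tilde\Omega_{q^{\ell+\ell'}}\bigl(L(E_1\otimes E_2,f_1\otimes f_2)\bigr),
\end{equation*}
which is the desired graded multiplication statement.

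There is no real obstacle here once algebraic closedness is used to reduce to linear factors: the only subtlety is to distinguish carefully between the deformed product $\star_q$ of \eqref{qWittProd} (which would produce an extra power of $q$, as explained in the discussion preceding Proposition~\ref{qcharpol} and in \S\ref{DivSec}) and the undeformed $\star$ on $\Lambda(A)$ that is being used in Definition~\ref{AllqDef}. Keeping the product undeformed is precisely what makes the exponent $q^\ell\cdot q^{\ell'}=q^{\ell+\ell'}$ of $t$ appear without any extra scalar, and thus what makes the grading in \eqref{tildecWq} compatible with $\star$.
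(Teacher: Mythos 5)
Your proof is correct and follows essentially the same route as the paper: reduce to the linear-factor form $\prod(1-\alpha q^\ell t)^{-n(\alpha)}$ using algebraic closedness and apply the undeformed rule $(1-at)^{-1}\star(1-bt)^{-1}=(1-abt)^{-1}$ so that $q^\ell q^{\ell'}=q^{\ell+\ell'}$ appears. Your explicit identification of the result as $\tilde\Omega_{q^{\ell+\ell'}}(L(E_1\otimes E_2,f_1\otimes f_2))$ is a welcome (if slightly more detailed) spelling-out of what the paper leaves implicit.
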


\proof It suffices to check the property on elements of the form $L(E,f)=\prod (1-\alpha t)^{-n(\alpha)}$.
With the undeformed product in $\Lambda(A)$ we have
$$ \prod (1-\alpha\, q^\ell\, t)^{-n(\alpha)} \star \prod (1-\beta\, q^{\ell'}\, t)^{-n(\beta)} =
\prod (1-\alpha \beta\, q^{\ell+\ell'} t)^{-n(\alpha)n(\beta)}. $$
\endproof

\smallskip

\begin{prop}\label{divtildeWq}
Let $k$ be an algebraically closed field. The divisor map \eqref{divmap} extends to a ring isomorphism
$\tilde\delta: \tilde\cW^q_0(k)\to \Z[q][k^*]$ given by
\begin{equation}\label{divtildeq}
\tilde\delta (\tilde\Omega_{q^\ell}(L(E,f)))=\delta (\prod (1-\alpha\, q^\ell\, t)^{-n(\alpha)})=q^\ell \sum n(\alpha) [\alpha].
\end{equation}
Combining this with an isomorphism $\sigma: \bar\F_p^* \to (\Q/\Z)^{(p)}$ as in \cite{CoCo} gives an
isomorphism 
\begin{equation}\label{tildesigmaWtilde}
\hat\sigma_q=\sigma\circ \tilde\delta : \tilde\cW^q_0(\bar\F_p^*) \to \Z[q][(\Q/\Z)^{(p)}].
\end{equation}
\end{prop}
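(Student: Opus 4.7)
The plan is to verify the ring-isomorphism claim grade by grade on the direct-sum decomposition \eqref{tildecWq} of $\tilde\cW_0^q(k)$, and then post-compose with $\sigma$ to obtain $\hat\sigma_q$. First I would argue well-definedness on each graded piece: for $L(E,f)=\prod(1-\alpha t)^{-n(\alpha)}$, the element $\tilde\Omega_{q^\ell}(L(E,f))=\prod(1-\alpha q^\ell t)^{-n(\alpha)}$ determines the multiset $\{(\alpha,n(\alpha))\}$ uniquely, since $k$ is algebraically closed and $L$ is injective on $W_0(k)$. Hence the formula $\tilde\delta(\tilde\Omega_{q^\ell}(L(E,f)))=q^\ell\sum n(\alpha)[\alpha]$ is unambiguous on grade $\ell$ and extends by $\Z$-linearity to the direct sum.

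Next I would check the ring structure. Additivity on a fixed grade follows because addition in $\tilde\Omega_{q^\ell}(L(W_0(k)))$ is the product of the associated series in $\Lambda(k)$, which concatenates the eigenvalue multisets; across grades the direct-sum structure matches the decomposition $\Z[q][k^*]=\bigoplus_\ell q^\ell\,\Z[k^*]$. For multiplicativity, Lemma~\ref{multipltildecWq} gives
$$ \tilde\Omega_{q^\ell}(L(E_1,f_1)) \star \tilde\Omega_{q^{\ell'}}(L(E_2,f_2)) = \prod(1-\alpha\beta\, q^{\ell+\ell'} t)^{-n(\alpha)n(\beta)}, $$
and applying $\tilde\delta$ yields $q^{\ell+\ell'}\sum n(\alpha)n(\beta)[\alpha\beta]$, which equals $(q^\ell\sum n(\alpha)[\alpha])(q^{\ell'}\sum n(\beta)[\beta])$ in $\Z[q][k^*]$ using $[\alpha][\beta]=[\alpha\beta]$. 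Bijectivity then follows grade by grade: on grade $\ell$, the map $\tilde\delta$ factors as the classical divisor isomorphism $\delta: W_0(k)\to \Z[k^*]$ of \cite{CoCo} followed by multiplication by $q^\ell$ onto the $\ell$-th homogeneous piece of $\Z[q][k^*]$, and summing these grade-wise bijections gives a bijection globally.

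For the final statement, the group-ring extension of $\sigma: \bar\F_p^* \to (\Q/\Z)^{(p)}$ from \cite{CoCo} is $\Z[q]$-linear, so composing it with $\tilde\delta$ yields the desired isomorphism $\hat\sigma_q: \tilde\cW_0^q(\bar\F_p)\to \Z[q][(\Q/\Z)^{(p)}]$. The main obstacle is really the compatibility between the $\ell$-grading on $\tilde\cW_0^q(k)$ and the $q$-grading on $\Z[q][k^*]$: treating $q$ as a formal indeterminate (as the target already does) makes the different pieces $\tilde\Omega_{q^\ell}(L(W_0(k)))$ genuinely independent in the direct sum, and this is what allows the grade-wise verification to assemble into a ring isomorphism. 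The remainder of the argument is essentially parallel to Proposition~\ref{divOmegaq}.
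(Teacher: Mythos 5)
Your proof is correct and follows essentially the same route as the paper, which simply invokes the argument of Proposition~\ref{divOmegaq} (and Proposition~2.3 of \cite{CoCo}) together with Lemma~\ref{multipltildecWq} for multiplicativity; you have just spelled out the grade-by-grade details, including the useful observation that $q$ must be treated as a formal variable so that $\Z[q][k^*]=\oplus_\ell q^\ell\,\Z[k^*]$ matches the grading of $\tilde\cW^q_0(k)$.
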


\proof The argument is exactly as in the previous case and in Proposition 2.3 of \cite{CoCo}.
The compatibility with multiplication follows from the previous Lemma.
\endproof

\smallskip
\subsection{$\A^\ell$-deformed Bost--Connes endomorphisms}\label{AellBCsec}

Let $\hat\sigma_q : \tilde\cW^q_0(\bar\F_p^*) \to \Z[q][(\Q/\Z)^{(p)}]$ be as in \eqref{tildesigmaWtilde},
with $\Z[q][(\Q/\Z)^{(p)}]\subset \Z[q][\Q/\Z]$.
We construct endomorphisms $\hat\sigma_{n,q}$ of the ring $\Z[q][(\Q/\Z)^{(p)}]$ that satisfy
\begin{equation}\label{hatsigmanqFn}
\hat\sigma_{n,q} \circ \hat\sigma_q = \hat\sigma_q \circ F_n,
\end{equation}
with $F_n$ the Frobenius on $\tilde\cW^q_0(\bar\F_p^*)$.

\begin{lem}\label{hatsigmalem}
Let $E(r,k)=q^k\, e(r) \in \Z[q][\Q/\Z]$, with $e(r)$ the generators of $\Z[\Q/\Z]$. The endomorphisms
\begin{equation}\label{hatsigmanq}
\hat\sigma_{n,q} (E(r,k))= q^{nk} e(nr) = E(nr,nk)
\end{equation}
satisfy the compatibility condition of \eqref{hatsigmanqFn}.
\end{lem}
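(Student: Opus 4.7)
The plan is to verify the compatibility \eqref{hatsigmanqFn} by evaluating both sides on an arbitrary generator of the graded ring $\tilde\cW^q_0(\bar\F_p)$, namely an element of the form $\tilde\Omega_{q^\ell}(L(E,f)) = \prod(1-\alpha\,q^\ell\,t)^{-n(\alpha)}$, where $L(E,f) = \prod(1-\alpha t)^{-n(\alpha)}$, and then extending by linearity. A preliminary observation is that the proposed endomorphism $\hat\sigma_{n,q}$ is well-defined and is the unique ring endomorphism of $\Z[q][\Q/\Z]$ determined by $q \mapsto q^n$ and $e(r) \mapsto e(nr)$; indeed $E(r,k) = q^k e(r)$ then gets sent to $(q^n)^k e(nr) = q^{nk}e(nr) = E(nr,nk)$, so \eqref{hatsigmanq} is consistent with the multiplicative structure on $\Z[q][\Q/\Z]$.

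For the left-hand side, I would first apply $\hat\sigma_q$ using Proposition \ref{divtildeWq} to obtain
\[
\hat\sigma_q\bigl(\tilde\Omega_{q^\ell}(L(E,f))\bigr) = \sigma\bigl(q^\ell \sum n(\alpha)[\alpha]\bigr) = \sum n(\alpha)\, q^\ell\, e(\sigma(\alpha)) = \sum n(\alpha)\, E(\sigma(\alpha),\ell),
\]
and then apply $\hat\sigma_{n,q}$ term by term, yielding $\sum n(\alpha)\, E(n\sigma(\alpha),n\ell) = q^{n\ell}\sum n(\alpha)\, e(n\sigma(\alpha))$.

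For the right-hand side, I first need to pin down how $F_n$ acts on $\tilde\cW^q_0(\bar\F_p)$. Unlike the $q^\ell$-points case of Lemma \ref{FnVncWq} where $F_n$ commuted with $\Omega_{q^\ell}$ and preserved the grading, here $F_n$ is the pullback under $L$ of the Frobenius on $\Lambda(\bar\F_p)$, which sends $(1-\beta t)^{-1} \mapsto (1-\beta^n t)^{-1}$. Applied to a homogeneous element this gives
\[
F_n\bigl(\tilde\Omega_{q^\ell}(L(E,f))\bigr) = \prod (1-(\alpha q^\ell)^n\, t)^{-n(\alpha)} = \prod(1-\alpha^n q^{n\ell}\, t)^{-n(\alpha)} = \tilde\Omega_{q^{n\ell}}(L(F_n(E,f))),
\]
so $F_n$ shifts the grading from $\ell$ to $n\ell$. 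Applying $\hat\sigma_q$ and using that $\sigma$ is a group isomorphism (so $\sigma(\alpha^n) = n\sigma(\alpha)$) then gives $q^{n\ell}\sum n(\alpha)\, e(n\sigma(\alpha))$, matching the left-hand side; extending by $\Z$-linearity completes the verification.

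The main conceptual obstacle is the proper identification of the grading-shift behavior of $F_n$: recognizing that this is precisely why the exponent in \eqref{hatsigmanq} must be $q^{nk}$ rather than $q^k$. Once this is in place, the verification is a direct computation, and in fact the whole argument can be summarized by saying that $\hat\sigma_q$ intertwines the geometric action $\alpha q^\ell \mapsto \alpha^n q^{n\ell}$ on power series with the algebraic action $q\mapsto q^n$, $e(r)\mapsto e(nr)$ on $\Z[q][\Q/\Z]$.
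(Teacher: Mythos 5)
Your proposal is correct and follows essentially the same route as the paper: both verify the identity on a homogeneous generator by computing $F_n\prod(1-\alpha q^\ell t)^{-n(\alpha)}=\prod(1-\alpha^n q^{n\ell}t)^{-n(\alpha)}$ (the grading shift $\ell\mapsto n\ell$), then apply the divisor map $\tilde\delta$ composed with $\sigma$ and match the result with $\hat\sigma_{n,q}$ acting by $q\mapsto q^n$, $e(r)\mapsto e(nr)$. Your write-up merely makes explicit the evaluation of both sides of \eqref{hatsigmanqFn} and the linearity step, which the paper leaves implicit.
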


\proof It suffices to check that the Frobenius action on $\tilde\cW^q_0(k)$ is given by
\begin{equation}\label{FntildeWq}
F_n\, \Omega_{q^\ell}(L(E,f))= \Omega_{q^{n\ell}} \, F_n L(E,f),
\end{equation}
since we have
$$ F_n \prod (1-\alpha q^\ell t)^{-n(\alpha)} =\prod (1-\alpha^n q^{n\ell} t)^{-n(\alpha)} =
\Omega_{q^{n\ell}} \, L(E,f^n). $$
The divisor map then gives $\tilde\delta(F_n\, \Omega_{q^\ell}(L(E,f)))=q^{n\ell} \sum n(\alpha) [\alpha^n]$.
For $k=\bar\F_p^*$,
this is the image of $\tilde\delta(\Omega_{q^\ell}(L(E,f)))=q^\ell \sum n(\alpha) [\alpha]$ under an 
endomorphism $\hat\sigma_{n,q}$ that induces \eqref{hatsigmanq} on $\Z[\Q/\Z]$.
\endproof

\smallskip

The compatibility with the Verschiebung $V_n$ is more subtle. 

\smallskip

In geometric terms, that is, for elements of the Witt ring that are zeta function of a
scheme of finite type over $k$, the Verschiebung corresponds to the Weil restriction
of scalars. Namely, if $X$ is a scheme of finite type over an extension $k'=\F_{p^{nr}}$ of $k=\F_{p^r}$,
the Verschiebung acts on the zeta function $V_n\, Z(X,t) = Z(R_n X,t)$, 
where $R_n X ={\rm Res}_{k'/k} X$ is the scheme over $k$ obtained by restriction of scalars
from $k'$ to $k$, see \cite{Rama} for more details. 

\smallskip

In our setting, we have the following compatibility condition between Verschiebung
and the deformations $\tilde\Omega_{q^\ell}$.

\begin{lem}\label{VnWtilde}
Let $A=k$ be an algebraically closed field. The Verschiebung $V_n$ on $W_0(A)$ satisfies
\begin{equation}\label{VnOmegaqtilde}
V_n \circ \tilde\Omega_{q^{n\ell}} = \Omega_{q^\ell} \circ V_n.
\end{equation}
\end{lem}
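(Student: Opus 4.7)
The plan is to verify \eqref{VnOmegaqtilde} by reducing to the case of a single linear factor via the algebraic closure of $k$ and then computing both sides explicitly. Over the algebraically closed $k$, any $(E,f)\in W_0(k)$ has $L(E,f)=\prod_\alpha (1-\alpha t)^{-n(\alpha)}$ with $\alpha$ ranging over the eigenvalues of $f$ and $n(\alpha)$ their multiplicities. Both the Verschiebung $V_n$ on $\Lambda(k)$ (which sends $(1-at)^{-1}$ to $(1-at^n)^{-1}$) and the rescaling $\tilde\Omega_{q^m}$ act factor by factor on such products, so it suffices to check the identity on a single factor $(1-\alpha t)^{-1}$ and then extend.

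For the left hand side, I would first apply $\tilde\Omega_{q^{n\ell}}$ to get $(1-\alpha q^{n\ell} t)^{-1}$, and then $V_n$ to obtain $(1-\alpha q^{n\ell} t^n)^{-1}$. For the right hand side, $V_n$ first produces $(1-\alpha t^n)^{-1}$, which is no longer a linear factor in $t$; here I invoke algebraic closedness to re-expand $(1-\alpha t^n)=\prod_{\beta:\,\beta^n=\alpha}(1-\beta t)$ into linear pieces in $t$, apply $\tilde\Omega_{q^\ell}$ (the $\Lambda$-side counterpart of $\Omega_{q^\ell}$ obtained via $L$) factorwise to get $\prod_{\beta^n=\alpha}(1-\beta q^\ell t)$, and regroup via the substitution $s=q^\ell t$:
$$\prod_{\beta^n=\alpha}(1-\beta q^\ell t)=1-\alpha(q^\ell)^n t^n=1-\alpha q^{n\ell}t^n.$$
This matches the left hand side, and extending factorwise across $\prod_\alpha(\cdot)^{-n(\alpha)}$ yields \eqref{VnOmegaqtilde} on all of $\tilde\cW^q_0(k)$.

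The only delicate point, which I expect to be the main (mild) obstacle, is this re-expansion: the rescaling $\tilde\Omega_{q^m}$ in \eqref{OmegaAell} is defined only on products of linear $t$-factors, so it cannot be applied directly to $(1-\alpha t^n)^{-1}$ and must be interpreted after factoring over $\bar k$. This is precisely where the algebraic closedness hypothesis is needed. Conceptually, once the linear-factor decomposition is available, the identity is just the statement that the substitutions $t\mapsto q^\ell t$ and $t\mapsto t^n$ commute to give $t\mapsto q^{n\ell}t^n$ either way; the exponent mismatch $q^{n\ell}\leftrightarrow q^\ell$ records the $n$-th power picked up when the rescaling is pushed past the Verschiebung.
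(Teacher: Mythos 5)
Your argument is correct and is essentially the paper's own proof: the paper likewise reduces to linear factors over the algebraically closed field, observes that applying $V_n$ produces the factors $(1-\omega t)^{-1}$ with $\omega^n=\alpha$ (the $n$-th roots), rescales each by $q^\ell$, and regroups via $\prod_{\omega^n=\alpha}(1-\omega q^\ell t)=1-\alpha q^{n\ell}t^n$. The only cosmetic difference is that the paper phrases the factorization through the eigenvalues of the Verschiebung matrix $V_n(f)$ on $W_0(k)$ rather than by re-expanding $(1-\alpha t^n)$ on the $\Lambda$-side, which is the same computation.
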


\proof For $(E,f)\in W_0(A)$ with $L(E,f)=\prod (1-\alpha t)^{-n(\alpha)}$ we have
$$V_n(L(E,f)) =L( V_n (E,f)) = L( (E^{\oplus n}, V_n(f))$$ with
$$ V_n(f)=\left( \begin{matrix} 0 & 0 & ... & ... & f\\ 1 & 0 & 0 & ... & 0\\ \hdotsfor{5}
 \\ 0 & 0 & 0 & ... 1& 0 \end{matrix} \right), $$
 hence $L( V_n (E,f))=\prod (1-\omega_{i,n} t)^{-n(\alpha)}$, where $\omega_{i,n}^n =\alpha$ 
 for $i=1,\ldots,n$. Thus, we have 
$$  \Omega_{q^\ell} \, V_n (L(E,f)) = \prod (1-\omega_{i,n}\, q^\ell\, t)^{-n(\alpha)} =
V_n ( \prod (1-\alpha\, q^{n\ell}\, t)^{-n(\alpha)} ) =V_n \, \Omega_{q^{n\ell}} (L(E,f)), $$
hence $V_n$ satisfies \eqref{VnOmegaqtilde}.
 \endproof
 
 \smallskip
 
This shows that, in order to extend the Verschiebung $V_n$ to the entire $\tilde\cW^q_0(A)$,
compatibly with its action on $W_0(A)$ and with the relation \eqref{VnOmegaqtilde}, one
needs to extend the deformation
$$ \tilde\Omega_{q^\ell} : \prod (1-\alpha t)^{-n(\alpha)} \mapsto \prod (1-\alpha\, q^\ell t)^{-n(\alpha)} $$
to include {\em rational} powers of $q$, so that $V_n \circ \tilde\Omega_{q^\ell}$ can
be defined, compatibly with \eqref{VnOmegaqtilde}, when $n$ does not divide $\ell$. 
In terms of our geometric interpretation, these
$q$-deformations arise from products with affine spaces. Equivalently, in motivic
terms, they are given by products with powers of the Lefschetz motive $\bL$. Thus, we can frame 
an appropriate extension of the action of the Verschiebung in terms of the roots of Tate motives 
discussed in \cite{LoMa}.

\smallskip
\subsection{Roots of Tate motives}\label{TateSec}

In the geometric setting, the zeta functions $Z(X,t)$ of schemes of finite type over $k=\F_{p^r}$
determine a ring homomorphism from the Grothendieck ring $K_0(\cV_{\F_{p^r}})$ of varieties 
over $\F_{p^r}$ to the Witt ring $W(\Z)$, see Theorem 2.1 of \cite{Rama}. The Grothendieck
ring $K_0(\cV_k)$ is generated by isomorphism classes $[X]$ of schemes of finite type over $k$
with relations $[X]=[Y]+[X\smallsetminus Y]$ for a closed subscheme $Y$ of $X$ and $[X]\cdot [Y]=[X\times Y]$.
Classes in $K_0(\cV_k)$ are sometimes referred to as ``virtual motives". 
The Lefschetz motive $\bL$ is the class of the affine line $\bL=[\A^1]$ in $K_0(\cV_k)$. 
The Tate subring $\Z[\bL]$ of $K_0(\cV_k)$ maps to the subring of the Witt ring
generated by the zeta functions $Z(\A^\ell,t)=(1-q^\ell t)^{-1}$. 

\smallskip

The idea of introducing roots of Tate motives was first suggested in \cite{Manin2}. A geometric
construction of a square root $\Q(1/2)$ of the Tate motive $\Q(1)$ in terms of supersingular 
elliptic curves was given in \cite{Rama2}. A categorical construction of a square root of the
Tate motive was given in \S 3.4 of \cite{KoSo} and generalized to arbitrary roots $\Q(r)$,
for $r\in \Q_+$, in \cite{LoMa}. We recall here briefly this formal categorical definition. 
Let $\cT={\rm Num}_\Q(k)$ be the Tannakian category of numerical pure motives over $k$
with motivic Galois group $G=G(\cT)$. Consider the homomorphisms $\sigma_n: \bG_m\to \bG_m$
given by $\sigma_n: \lambda\mapsto \lambda^n$, and consider the fibered product of $G$ and $\bG_m$
$$ G^{(n)}=\{ (g,\lambda)\in G\times \bG_m\,:\, t(g)=\sigma_n(\lambda) \}, $$
where $t: G \to \bG_m$ is the group homomorphism that corresponds to the inclusion of the
Tate motives (with Galois group $\bG_m$) inside ${\rm Num}_\Q(k)$. The group $G^{(n)}$
is in turn the Galois group of a Tannakian category $\cT(\Q(\frac{1}{n}))$ which extends the
category $\cT$ by an $n$-th root $\Q(1/n)$ of the Tate motive $\Q(1)$. A more precise description
of these categories and their properties is given in \S 4 of \cite{LoMa}, where the construction of
the roots $\Q(1/n)$ is explained in terms of the primary decomposition of $n\in \N$. Let $\tilde\cT$ 
be the Tannakian category that corresponds to the projective limit $\tilde G$ of the groups $G^{(n)}$, 
as in \cite{LoMa}, that is, the Tannakian category obtained from ${\rm Num}_\Q(k)$ by adjoining
roots of Tate motives of arbitrary order, $\Q(r)$ with $r\in \Q_+$.

\smallskip

At the level of the Grothendieck ring $K_0(\cV_k)[\bL^{-1}]$ with $\bL^{-1}$ the
class of the Tate motive $\Q(1)$, the roots $\Q(1/n)$ correspond to additing new
generatots $\bL^{1/n}$ and their inverses. The Tate part of the resulting Grothendieck
ring $K_0(\tilde\cT)$ would then consist of the ring $\Z[\bL^r\,:\, r\in \Q_+]$, since 
$K_0(\cT(\Q(\frac{1}{n})))=K_0(\cT)[s]/(s^n-\bL)$. 

\begin{lem}\label{Zetaqr}
The ring homomorphism $Z: K_0(\cV_k) \to W(\Z)$ given by the
zeta functions $Z(X,t)$ extends uniquely to a ring homomorphism $Z: K_0(\tilde\cT)\to W(K)$,
obtained by setting $Z(\bL^r,t):=(1-q^r\, t)^{-1}$ for $r\in \Q_+$. Here $K$ is a field that 
contains $\Z$ and all the non-negative real roots $q^r$, for $r\in \Q_+$.
\end{lem}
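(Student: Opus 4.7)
The plan is to define the extension on generators, verify consistency with the defining relations in $K_0(\tilde\cT)$, and derive uniqueness from the ring-homomorphism property.

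First, I would use the description of $K_0(\tilde\cT)$ recalled just above the statement: its Tate part is the ring $\Z[\bL^r\,:\, r\in \Q_+]$, with the fractional generators $\bL^{1/n}$ subject to $(\bL^{1/n})^n=\bL$ coming from $K_0(\cT(\Q(\tfrac{1}{n})))=K_0(\cT)[s]/(s^n-\bL)$, and the full ring is generated over $K_0(\cV_k)$ by these fractional Tate classes. So defining the extension reduces to prescribing $Z(\bL^r,t)$ compatibly with these relations and with the Witt-ring multiplication, and then extending additively and multiplicatively.

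Next, I would set $Z(\bL^r,t):=(1-q^r\, t)^{-1}\in W(K)$ and check the key compatibility. Using the Witt product $(1-at)^{-1}\star (1-bt)^{-1}=(1-ab\, t)^{-1}$ iteratively, one has
\[ Z(\bL^{1/n},t)^{\star n} = (1-q^{1/n}\, t)^{-1}\star \cdots \star (1-q^{1/n}\, t)^{-1} = (1-(q^{1/n})^n\, t)^{-1} = (1-q\, t)^{-1}, \]
which agrees with the value $Z(\bL,t)=Z(\A^1,t)$ forced by the Ramachandran homomorphism of \cite{Rama}. For arbitrary $r,s\in \Q_+$ the same computation yields $Z(\bL^r,t)\star Z(\bL^s,t)=(1-q^{r+s}\, t)^{-1}=Z(\bL^{r+s},t)$, matching $\bL^r\cdot \bL^s=\bL^{r+s}$. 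Extending by $Z(\bL^r\cdot [X],t)=Z(\bL^r,t)\star Z(X,t)$ and by additivity on the scissor relations inside $K_0(\cV_k)$ then produces a well-defined ring homomorphism $Z:K_0(\tilde\cT)\to W(K)$ that agrees with the original zeta map on $K_0(\cV_k)$.

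Finally, uniqueness follows because any ring homomorphism extending $Z$ must send $\bL^{1/n}$ to an element whose $n$-fold Witt product equals $(1-qt)^{-1}$; within the geometric-series part of $W(K)$ the only such element with a positive real parameter is $(1-q^{1/n}\, t)^{-1}$, since $K$ is assumed to contain a unique positive real $n$-th root of $q$. The main obstacle I would expect is ensuring that the assignment is consistent across all $r\in \Q_+$ simultaneously, i.e., that a class $\bL^r$ obtained via different primary decompositions of its denominator (as in \cite{LoMa}) is sent to the same power series; this again reduces to the identity $q^r\cdot q^s=q^{r+s}$ in $K$, which is automatic on the positive real branch and ultimately makes the construction go through.
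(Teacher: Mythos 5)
Your proposal is correct and follows essentially the same route as the paper: reduce to the presentation $K_0(\cT(\Q(\tfrac{1}{n})))=K_0(\cT)[s]/(s^n-\bL)$, assign $(1-q^{1/n}\,t)^{-1}$ to the root generator, and verify the defining relation using the Witt product $(1-at)^{-1}\star(1-bt)^{-1}=(1-ab\,t)^{-1}$. You in fact state the key compatibility more accurately than the paper's own proof, which should require $Z(s)^{\star n}=Z(\bL,t)=(1-q\,t)^{-1}$ as you check, rather than the unit $[1]_w=(1-t)^{-1}$ as written there; your additional remarks on consistency across all $r\in\Q_+$ and on uniqueness (restricted to the positive real branch in $K$) are harmless refinements of the same argument.
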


\proof By the description above of the Grothendieck ring $K_0(\cT(\Q(\frac{1}{n})))=K_0(\cT)[s]/(s^n-\bL)$,
in order to extend the zeta function homomorphism we need to assign to the additional
generator $s$ an element $Z(s)$ in the Witt ring with the property that the $n$-fold product $Z(s)^{\star n}=[1]_w$,
where the unit $[1]_w$ in the Witt ring is $[1]_w :=(1-t)^{-1}$. Assigning $Z(s,t)=(1-q^{1/n}\, t)^{-1} \in W(K)$ satisfies
this requirement. 
\endproof

\smallskip
\subsection{Deformed Bost--Connes algebra}

We then extend the $\A^\ell$-deformation described above, by including roots $\bL^r$, $r\in \Q_+$ 
of the Lefschetz motive $\bL^r$, as explained above. This will have the effect of rendering the
endomorphisms $\sigma_{n,q}$ invertible in their action on the $q$ variable, while the action
on the generators $e(r)$ of $\Q/\Z$ remains unchanged.
The following elementary fact explains the main idea.

\begin{lem}\label{ringZqr}
Consider the endomorphisms $\sigma_n :\Z[q]\to \Z[q]$ determined by $\sigma_n: q\mapsto q^n$.
The direct limit is given by
\begin{equation}\label{cRring}
 \cR= \varinjlim_n \left( \sigma_n : \Z[q]\to \Z[q] \right)  = \Z[q^r\,:\, r\in \Q_+],
\end{equation}
the polynomial ring in the fractional powers $q^r$. The induced endomorphism $\sigma_n$
on the limit $\Z[q^r\,:\, r\in \Q_+]$ has inverse $\rho_n: q^r \mapsto q^{r/n}$.
\end{lem}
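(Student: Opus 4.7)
The plan is to interpret the colimit as a directed limit indexed by the divisibility order $(\N, \mid)$, with $A_n = \Z[q]$ for every $n$ and transition maps $\sigma_{n/m} : A_m \to A_n$, $q \mapsto q^{n/m}$, whenever $m \mid n$. The cocycle condition $\sigma_{n/m}\circ \sigma_{m/\ell} = \sigma_{n/\ell}$ for $\ell \mid m \mid n$ is immediate from $(q^{m/\ell})^{n/m} = q^{n/\ell}$, so this really is a directed system in the category of commutative rings, and the $\sigma_n$ appear as its transition maps (the transition from level $m$ to level $nm$ is $\sigma_n$).

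Next, I would exhibit the ring $R := \Z[q^r : r\in \Q_+]$ as the colimit. Define co-projections $\iota_n : A_n \to R$ by $\iota_n(q) = q^{1/n}$; compatibility with transitions is immediate since $\iota_n(\sigma_{n/m}(q)) = (q^{1/n})^{n/m} = q^{1/m} = \iota_m(q)$ for $m\mid n$. To verify the universal property, suppose given a compatible cocone of ring maps $f_n : A_n \to S$, i.e.\ $f_n\circ \sigma_{n/m} = f_m$ for $m\mid n$. Setting $\alpha_n := f_n(q) \in S$, one has $\alpha_n^{n/m} = \alpha_m$, so the assignment $\varphi(q^{1/n}) := \alpha_n$ is well-defined on the generators and extends uniquely to a ring homomorphism $\varphi : R \to S$ factoring the entire cocone. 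This identifies $\cR$ with $R$.

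Finally, I would identify the induced endomorphisms. Under $\iota_m$ the map $\sigma_n : A_m \to A_m$, $q\mapsto q^n$, corresponds to $q^{1/m} \mapsto q^{n/m}$, so the induced endomorphism of $\cR$ is the ring homomorphism $\sigma_n(q^r) = q^{nr}$ for all $r\in \Q_+$. Similarly, $\rho_n(q^r) := q^{r/n}$ is well-defined on $R$ because $r/n \in \Q_+$ whenever $r \in \Q_+$, and a direct substitution gives $\sigma_n(\rho_n(q^r)) = q^{n\cdot r/n} = q^r$ and $\rho_n(\sigma_n(q^r)) = q^{nr/n} = q^r$, so $\rho_n$ is the two-sided inverse of $\sigma_n$ on $\cR$.

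There is no substantive obstacle in this lemma; it is essentially a bookkeeping exercise. The only point requiring care is the choice of indexing set: organizing the $\sigma_n$ into a directed system under divisibility (rather than along a single linear sequence) is exactly what makes all of the $\sigma_n$ simultaneously invertible in the colimit and yields the full rational-exponent ring $\Z[q^r : r\in \Q_+]$.
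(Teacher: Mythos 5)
Your proof is correct. The paper in fact states this lemma without proof, as an elementary observation, so there is no argument to compare against; your reading of $\varinjlim_n$ as a directed system over the divisibility order with transition maps $\sigma_{n/m}$, the verification of the universal property for $\Z[q^r : r\in \Q_+]$, and the check that $\rho_n$ is a two-sided inverse of the induced $\sigma_n$ constitute the standard (and evidently intended) justification, and they are complete.
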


\smallskip

We consider the following modification of the deformed Witt ring $\tilde\cW^q_0(A)$.

\begin{defn}\label{hatWq0Adef}
For $r\in \Q_+$. For $A=k$ an algebraically closed field and $(E,f)\in W_0(A)$ with
$L(E,f)=\prod (1-\alpha t)^{-n(\alpha)}$, let
\begin{equation}\label{Omegaqr}
\tilde\Omega_{q^r}(L(A,f))=\prod (1-\alpha\, q^r \, t)^{-n(\alpha)},
\end{equation}
where we treat $q$ and $q^r$ as formal variables.
We set 
\begin{equation}\label{hatWq0r}
\hat\cW^q_0(A)= \oplus_{r\in \Q_+} \Omega_{q^r}(L(W_0(A)),
\end{equation}
with the operations induced from $W_0(A)$.
\end{defn}

As in Lemma~\ref{multipltildecWq}, we have
\begin{equation}\label{grmulthatW}
\star: \tilde\Omega_{q^r} (L(W_0(A))) \times \tilde\Omega_{q^{r'}} (L(W_0(A))) \to \tilde\Omega_{q^{\ell+r'}} (L(W_0(A))).
\end{equation}

\begin{lem}\label{FnVnhatWq0lem}
The Frobenius and Verschiebung on $W_0(A)$ extend to $\hat\cW^q_0(A)$ by
\begin{equation}\label{FnVnhatWq0}
 F_n \circ  \tilde\Omega_{q^r} = \tilde\Omega_{q^{nr}} \circ F_n \ \ \   \text{ and } \ \ \ 
V_n \circ  \tilde\Omega_{q^r} = \tilde\Omega_{q^{r/n}} \circ V_n .
\end{equation}
\end{lem}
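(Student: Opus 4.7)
The plan is to reduce both identities to direct calculations on the generators $(1-\alpha t)^{-1}$ of $L(W_0(A))\subset\Lambda(A)$, and then extend multiplicatively using the factorization $L(E,f)=\prod(1-\alpha t)^{-n(\alpha)}$ (available because $k$ is algebraically closed) and linearly over the direct sum in \eqref{hatWq0r}. In each case the work is to check that the two sides of the claimed commutation relation agree on a single linear factor, and that the target lands inside another graded piece of $\hat\cW^q_0(A)$.

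First I would dispatch the Frobenius identity. Under $L$ the Frobenius acts by $F_n:(1-\beta t)^{-1}\mapsto(1-\beta^n t)^{-1}$, while $\tilde\Omega_{q^r}$ acts by the substitution $\beta\mapsto\beta\, q^r$. Composing in either order sends $(1-\alpha t)^{-1}$ to $(1-\alpha^n q^{nr}t)^{-1}$, so $F_n\circ\tilde\Omega_{q^r}=\tilde\Omega_{q^{nr}}\circ F_n$ on generators; this also shows that $F_n$ carries the $r$-th graded piece into the $nr$-th graded piece, which still lies inside $\hat\cW^q_0(A)$ since $nr\in\Q_+$ whenever $r\in\Q_+$.

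For the Verschiebung, the plan is to use the identity $(1-at^n)^{-1}=\prod_{\omega^n=a}(1-\omega t)^{-1}$, which encodes the effect of $V_n$ on a linear factor, exactly as in the proof of Lemma~\ref{VnWtilde}. Applying this with $a=\alpha q^r$, I would reindex the $n$-th roots of $\alpha q^r$ in the form $\omega=\zeta\, q^{r/n}$ with $\zeta^n=\alpha$; the $\zeta$ exist in $k$ because $k$ is algebraically closed, and $q^{r/n}$ is available because Definition~\ref{hatWq0Adef} has adjoined all fractional powers of $q$ as formal symbols. After this reindexing the product becomes $\tilde\Omega_{q^{r/n}}\prod_{\zeta^n=\alpha}(1-\zeta t)^{-1}=\tilde\Omega_{q^{r/n}}V_n(1-\alpha t)^{-1}$, yielding the stated relation on generators; multiplicative extension over the factorization of $L(E,f)$ and linear extension over the direct sum then finishes the proof.

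The main obstacle is precisely the one flagged in Lemma~\ref{VnWtilde}: for integer exponents the identity $V_n\circ\tilde\Omega_{q^{n\ell}}=\Omega_{q^\ell}\circ V_n$ only made sense when $n\mid\ell$, so that no Verschiebung compatible with $\tilde\Omega$ could be defined on the integer-graded ring $\tilde\cW^q_0(A)$. The passage to $\hat\cW^q_0(A)$, in which $r$ runs over all of $\Q_+$ and $q^r$ is treated as a formal symbol (geometrically: adjoining the roots $\bL^r$ of the Lefschetz motive from Lemma~\ref{Zetaqr}), is designed precisely to remove this obstruction, and once it is removed the rest of the argument is the same factorization bookkeeping as in Lemma~\ref{VnWtilde}, with $\ell\in\N$ replaced by $r\in\Q_+$.
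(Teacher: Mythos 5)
Your proposal is correct and follows essentially the same route as the paper, which proves the Frobenius case by the substitution computation of \eqref{FntildeWq} and the Verschiebung case by the same root-factorization argument as Lemma~\ref{VnWtilde}, now with $\ell\in\N$ replaced by $r\in\Q_+$ and the fractional powers $q^{r/n}$ available as formal symbols. Your reindexing of the factors of $(1-\alpha q^r t^n)^{-1}$ as $\omega=\zeta q^{r/n}$ with $\zeta^n=\alpha$ is exactly the computation the paper refers to, just read in the opposite direction.
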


\proof The case of Frobenius is as in \eqref{FntildeWq} and the argument for
the Verschiebung is analogous to Lemma~\ref{VnWtilde}.
\endproof

\begin{lem}\label{divZqr}
The divisor map \eqref{divmap} induces a ring isomorphism $\hat\delta: \hat\cW^q_0(k)
\to \cR[k^*]$, with the ring $\cR$ as in \eqref{cRring}.
Combining this with an isomorphism $\sigma: \bar\F_p^* \to (\Q/\Z)^{(p)}$ as in \cite{CoCo} gives an
isomorphism 
\begin{equation}\label{hatsigmaWhat}
\hat\sigma_q=\sigma\circ \hat\delta : \hat\cW^q_0(\bar\F_p^*) \to \cR[(\Q/\Z)^{(p)}].
\end{equation}
\end{lem}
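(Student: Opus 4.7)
The plan is to mimic the proof of Proposition~\ref{divtildeWq}, extending the divisor map from the integer-graded case (with coefficient ring $\Z[q]$) to the $\Q_+$-graded case (with coefficient ring $\cR = \Z[q^r : r \in \Q_+]$). First I would define $\hat\delta$ homogeneously on each graded piece by setting
\begin{equation*}
\hat\delta\bigl(\tilde\Omega_{q^r}(L(E,f))\bigr) := q^r \sum_{\alpha} n(\alpha)\,[\alpha] \in \cR[k^*],
\end{equation*}
where $L(E,f) = \prod_\alpha (1-\alpha t)^{-n(\alpha)}$, and then extending additively on the direct sum \eqref{hatWq0r}. Well-definedness on each homogeneous component is immediate: the exponents $n(\alpha)$ and the eigenvalues $\alpha\in k^*$ are uniquely determined by $\tilde\Omega_{q^r}(L(E,f))$ as a formal product, since $q^r$ is treated as a formal variable distinct in degree from $q^{r'}$ for $r\neq r'$.

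Next I would verify that $\hat\delta$ is a ring homomorphism. Additivity on each graded piece reduces to the additivity of the classical divisor map $\delta: W_0(k)\to \Z[k^*]$ of \eqref{divmap2} (Proposition~2.3 of \cite{CoCo}). For multiplicativity I invoke \eqref{grmulthatW}: for $(E_1,f_1),(E_2,f_2)\in W_0(k)$ with eigenvalues $\alpha_i$ with multiplicities $n(\alpha_i)$ and $\beta_j$ with multiplicities $n(\beta_j)$, the undeformed product in $\Lambda(k)$ gives
\begin{equation*}
\tilde\Omega_{q^r}(L(E_1,f_1))\star \tilde\Omega_{q^{r'}}(L(E_2,f_2)) = \prod_{i,j} (1-\alpha_i\beta_j\, q^{r+r'}\, t)^{-n(\alpha_i)n(\beta_j)},
\end{equation*}
so
\begin{equation*}
\hat\delta(\text{LHS}) = q^{r+r'}\sum_{i,j} n(\alpha_i)n(\beta_j)\,[\alpha_i\beta_j] = \bigl(q^r\!\sum_i n(\alpha_i)[\alpha_i]\bigr)\bigl(q^{r'}\!\sum_j n(\beta_j)[\beta_j]\bigr),
\end{equation*}
which is exactly the product in $\cR[k^*]$ of the images.

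For bijectivity, I use that $\delta : W_0(k)\to \Z[k^*]$ is an isomorphism and that the map $\tilde\Omega_{q^r}$ is a bijection of $L(W_0(k))$ onto its image (it is a formal substitution $\alpha\mapsto \alpha q^r$ which is invertible since $q^r$ is treated as a formal variable). So on each homogeneous piece $\hat\delta$ is a bijection onto $q^r\cdot\Z[k^*]$, and summing these graded pieces gives a bijection onto $\bigoplus_{r\in\Q_+} q^r\Z[k^*]=\cR[k^*]$. Finally, the isomorphism $\sigma: \bar\F_p^*\to (\Q/\Z)^{(p)}$ from \cite{CoCo} induces an isomorphism of group rings $\cR[\bar\F_p^*]\to \cR[(\Q/\Z)^{(p)}]$ by $\cR$-linear extension, and composing with $\hat\delta$ yields \eqref{hatsigmaWhat}.

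The only mild subtlety — and the one place where I would be most careful — is checking that the formal substitution $\alpha\mapsto\alpha q^r$ defining $\tilde\Omega_{q^r}$ really produces distinct elements for different values of $r\in\Q_+$, so that the direct sum \eqref{hatWq0r} is honestly graded and there are no hidden relations identifying $\tilde\Omega_{q^r}(L(E,f))$ with an element of a different graded piece. This is secured by Definition~\ref{hatWq0Adef}, which declares $q$ and the $q^r$ to be formal variables, and it is precisely what allows the ring $\cR$ of fractional powers of $q$ in \eqref{cRring} to appear as the coefficient ring of the target. Everything else is an essentially routine extension of Proposition~\ref{divtildeWq}.
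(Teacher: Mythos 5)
Your proof is correct and follows exactly the route the paper takes: the paper simply says the argument is as in Proposition~\ref{divtildeWq} (itself modeled on Proposition 2.3 of \cite{CoCo}), i.e.\ define $\hat\delta$ gradewise by $\tilde\Omega_{q^r}(L(E,f))\mapsto q^r\sum n(\alpha)[\alpha]$, use the graded multiplication \eqref{grmulthatW} for multiplicativity, and deduce bijectivity from that of $\delta$. Your spelled-out version, including the remark that the $q^r$ are formal variables so the grading carries no hidden identifications, is a faithful (indeed more detailed) rendering of that same argument.
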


\proof The argument is exactly as in Proposition~\ref{divtildeWq}.
\endproof

\begin{prop}\label{deformedBC}
Let $k=\bar\F_p$.
The $q$-deformed integral Bost--Connes algebra $\cA_{\Z,q}$ determined by the deformation \eqref{hatWq0r}
of the Witt ring $W_0(k)$ is generated by the group ring $\cR[\Q/\Z]$, with $\cR$ as in  \eqref{cRring}
together with elements $\tilde\mu_n$ and $\mu_n^*$ satisfying the relations
\begin{equation}\label{qBCrel1}
\begin{array}{c}
 \tilde\mu_n E(r,r') \mu_n^* = \hat\rho_{n,q}(E(r,r')), \\[3mm]
 \mu_n^* E(r,r') = \hat\sigma_{n,q}(E(r,r')) \mu_n^* \\[3mm]
 E(r,r') \tilde \mu_n =  \tilde \mu_n \hat\sigma_{n,q}(E(r,r')), 
\end{array}
\end{equation} 
where $E(r,r')=q^r\, e(r')$ are the generators of $\cR[\Q/\Z]$ and
$\hat\sigma_{n,q}:\cR[\Q/\Z] \to \cR[\Q/\Z]$ are the endomorphisms
$$ \hat\sigma_{n,q} (E(r,r'))= E(nr,nr'). $$
The $\hat\rho_{n,q}$ are defined by 
\begin{equation}\label{qBCrel2}
 \hat\rho_{n,q}(E(r,r'))= \sum_{ns=r'} E(\frac{r}{n}, s). 
\end{equation} 
These satisfy the relations
\begin{equation}\label{qBCrel3}
\begin{array}{cc} 
\tilde\mu_{nm} = \tilde \mu_n \tilde \mu_m, \ \ m, n\in \mathbb{N}, &
\mu_{nm}^* = \mu_n^* \mu_m^*, \ \ m, n\in \mathbb{N}, \\[3mm]
 \mu_n^*\tilde \mu_n = n, \ \   n \in \N, & 
 \tilde \mu_n \mu_n^* = \mu_n^*\tilde \mu_n , \ \  (n,m) = 1, 
\end{array}
\end{equation} 
 as in the original integral Bost--Connes algebra.
\end{prop}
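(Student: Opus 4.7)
The plan is to follow the construction of the integral Bost--Connes algebra from the Witt ring $W_0(\bar\F_p)$ carried out in \cite{CoCo}, now with the $\A^\ell$-style deformation $\hat\cW^q_0(\bar\F_p)$ of Definition~\ref{hatWq0Adef} in place of $W_0(\bar\F_p)$. The isomorphism $\hat\sigma_q:\hat\cW^q_0(\bar\F_p)\to\cR[(\Q/\Z)^{(p)}]$ of Lemma~\ref{divZqr} transports the Frobenius $F_n$ and Verschiebung $V_n$ of Lemma~\ref{FnVnhatWq0lem} to endomorphisms of $\cR[(\Q/\Z)^{(p)}]$, which I extend to $\cR[\Q/\Z]$ by declaring the same formulas on all generators $e(r')$ with $r'\in\Q/\Z$ (the deformation acts only on the $\cR$-factor and commutes with the operations on the $\Q/\Z$-factor). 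The algebra $\cA_{\Z,q}$ is then defined by adjoining $\tilde\mu_n$ and $\mu_n^*$ that implement these endomorphisms through the crossed-product relations, exactly as in \cite{CCM3}.

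The first step is to derive the explicit formulas for $\hat\sigma_{n,q}$ and $\hat\rho_{n,q}$. For $\tilde\Omega_{q^r}(L(E,f))=\prod(1-\alpha q^r t)^{-n(\alpha)}$, Lemma~\ref{FnVnhatWq0lem} gives $F_n\,\tilde\Omega_{q^r}(L(E,f))=\prod(1-\alpha^n q^{nr}t)^{-n(\alpha)}$, with image $q^{nr}\sum n(\alpha)[\alpha^n]$ under $\hat\delta$; transporting through $\sigma:\bar\F_p^*\to(\Q/\Z)^{(p)}$ recovers $\hat\sigma_{n,q}(E(r,r'))=E(nr,nr')$. For the Verschiebung, $V_n\circ\tilde\Omega_{q^r}=\tilde\Omega_{q^{r/n}}\circ V_n$ produces $\prod(1-\omega_{i,n}\,q^{r/n}t)^{-n(\alpha)}$ with $\omega_{i,n}^n=\alpha$; since $\sigma$ sends the $n$ roots of $\alpha$ to the $n$ preimages $s\in\Q/\Z$ of $r'$ under multiplication by $n$, the divisor map produces $\sum_{ns=r'}q^{r/n}e(s)=\sum_{ns=r'}E(r/n,s)$, which is exactly $\hat\rho_{n,q}(E(r,r'))$.

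With these formulas in hand, the relations \eqref{qBCrel1} and \eqref{qBCrel3} are verified by direct calculation on generators. The semigroup identities $\tilde\mu_{nm}=\tilde\mu_n\tilde\mu_m$ and $\mu_{nm}^*=\mu_n^*\mu_m^*$ follow from $F_{nm}=F_nF_m$ and $V_{nm}=V_nV_m$ on $W_0$ together with the composition of grading shifts $q^r\to q^{nr}\to q^{nmr}$ on $\cR$. The identity $\mu_n^*\tilde\mu_n=n$ reflects $F_n\circ V_n=n\cdot\id$ on $W_0$, which survives the deformation because $q^r\to q^{r/n}\to q^{n(r/n)}=q^r$ returns to the original grade. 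The coprime commutation $\tilde\mu_n\mu_m^*=\mu_m^*\tilde\mu_n$ for $(n,m)=1$ reduces to its undeformed version in \cite{CCM3} together with the commutation of $q^r\mapsto q^{r/m}$ and $q^r\mapsto q^{nr}$ on $\cR$. The remaining crossed-product relations $\mu_n^*x=\hat\sigma_{n,q}(x)\mu_n^*$, $x\tilde\mu_n=\tilde\mu_n\hat\sigma_{n,q}(x)$, and $\tilde\mu_n x\mu_n^*=\hat\rho_{n,q}(x)$ are a direct unwinding of the explicit formulas.

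The main obstacle lies not in the verifications themselves but in arranging the right coefficient ring: the naive $\A^\ell$-deformation $\tilde\cW^q_0(A)$ of Definition~\ref{AllqDef} is $\Z_{\geq 0}$-graded, so $V_n\circ\tilde\Omega_{q^\ell}=\tilde\Omega_{q^{\ell/n}}\circ V_n$ is ill-defined whenever $n\nmid\ell$ and no formula for $\hat\rho_{n,q}$ can exist. It is precisely to overcome this that roots of Tate motives were introduced in \S\ref{TateSec} and $\Z[q]$ enlarged to $\cR=\Z[q^r:r\in\Q_+]$ via Lemma~\ref{ringZqr}, so that $\rho_n:q^r\mapsto q^{r/n}$ becomes a genuine inverse of $\sigma_n:q^r\mapsto q^{nr}$ on $\cR$ and the right-hand side of \eqref{qBCrel2} lands in $\cR[\Q/\Z]$. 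Once this $\Q_+$-graded framework is in place, the remaining verifications are essentially formal and parallel those in \cite{CCM3}.
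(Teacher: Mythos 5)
Your proposal follows essentially the same route as the paper: the endomorphisms $\hat\sigma_{n,q}$ and $\hat\rho_{n,q}$ are determined by transporting the Frobenius $F_n$ and Verschiebung $V_n$ on $\hat\cW^q_0(\bar\F_p)$ through the isomorphism $\hat\sigma_q$ of Lemma~\ref{divZqr}, using precisely the compatibilities of Lemma~\ref{FnVnhatWq0lem} and the passage to fractional powers of $q$ in $\cR$, and the relations \eqref{qBCrel3} are then inherited as in the original integral Bost--Connes algebra. Your write-up is correct; it merely spells out the verification of \eqref{qBCrel1} and \eqref{qBCrel3} (via $F_{nm}=F_nF_m$, $V_{nm}=V_nV_m$, $F_nV_n=n$) in more detail than the paper, which leaves those checks implicit.
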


\proof The endomorphisms $\hat\sigma_{n,q}$ are constructed so as to satisfy
the compatibility with the Frobenius map $F_n$ acting on $\hat\cW^q_0(k)$,
$$ \hat\sigma_{n,q} \circ \hat\sigma_q = \hat\sigma_q \circ F_n, $$
with $\hat\sigma_q$ as in \eqref{hatsigmaWhat}. This determines $\hat\sigma_{n,q}$
to be of the form
$$ \hat\sigma_{n,q} (E(r,r'))= q^{nr} \, e(nr') = E(nr,nr'). $$
In turn the $\hat\rho_{n,q}$ are constructed so as to be compatible with the Verschiebung $V_n$ on
$\hat\cW^q_0(k)$,
$$ \hat\rho_{n,q} \circ \hat\sigma_q = \hat\sigma_q \circ V_n. $$
This determines the $\hat\rho_{n,q}$ to be given by 
$$ \hat\rho_{n,q} (E(r,r'))= q^{r/n} \sum_{ns=r'} e(r')= \sum_{ns=r'} E(\frac{r}{n}, s). $$
\endproof

\begin{lem}\label{AQq}
Let $\cR_\Q=\cR\otimes_\Z \Q=\Q[q^r\,:\, r\in \Q_+]$. The rational $q$-deformed Bost--Connes
algebra $\cA_{\Q,q}$ is the semigroup crossed product $\cR_\Q[\Q/\Z]\rtimes_\rho \N$ 
with generators $E(r,r')=q^r e(r')$ and $\mu_n$, $\mu_n^*$ and with the semigroup crossed product action
given by 
\begin{equation}\label{qBCrel4}
 \mu_n E(r,r') \mu_n^* = \frac{1}{n} \sum_{ns =r'} E(\frac{r}{n},s) =\rho_{n,q} (E(r,r')), 
\end{equation} 
and the relations $E(r,r') E(s,s')=E(r+s,r'+s')$ and 
\begin{equation}\label{munrels}
\begin{array}{ll}
 \mu_n^*\mu_n=1, \,\forall n, & \mu_{nm}=\mu_n\mu_m,\, \forall n,m, \\[2mm]
 \mu_n^*\mu_m^*=\mu_m^*\mu_n^*, \, \forall n,m, &
\mu_n \mu_m^* =\mu_m^* \mu_n, \, \text{ for } (n,m)=1. 
\end{array}
\end{equation}
\end{lem}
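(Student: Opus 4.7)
The plan is to obtain $\cA_{\Q,q}$ from the integral algebra $\cA_{\Z,q}$ of Proposition~\ref{deformedBC} by tensoring with $\Q$ and rescaling the generators via $\mu_n := \tilde\mu_n/n$, which is permitted once $n$ is invertible. Under this substitution the integral relations \eqref{qBCrel3} translate directly into \eqref{munrels}: the identity $\mu_n^*\tilde\mu_n = n$ becomes the isometry $\mu_n^*\mu_n = 1$, the multiplicativity $\tilde\mu_{nm} = \tilde\mu_n\tilde\mu_m$ becomes $\mu_{nm} = \mu_n\mu_m$, and the remaining identities $\mu_{nm}^* = \mu_n^*\mu_m^*$ and $\tilde\mu_n\mu_m^* = \mu_m^*\tilde\mu_n$ for $(n,m)=1$ carry over verbatim to the $\mu_n$. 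The multiplication rule $E(r,r')E(s,s')=E(r+s,r'+s')$ is immediate from $q^rq^s=q^{r+s}$ in $\cR_\Q$ and $e(r')e(s')=e(r'+s')$ in $\Q[\Q/\Z]$.

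Next I would translate the three covariance relations \eqref{qBCrel1}. The first one gives
$$\mu_n E(r,r')\mu_n^* = \tfrac{1}{n}\tilde\mu_n E(r,r')\mu_n^* = \tfrac{1}{n}\hat\rho_{n,q}(E(r,r')) = \tfrac{1}{n}\sum_{ns=r'}E(r/n,s),$$
which is exactly the formula for $\rho_{n,q}$ in \eqref{qBCrel4}. The second relation $\mu_n^*E(r,r')=\hat\sigma_{n,q}(E(r,r'))\mu_n^*$ does not involve $\tilde\mu_n$ and carries over verbatim; combined with $\mu_n^*\mu_n=1$ it yields the dual crossed-product identity $\mu_n^*E(r,r')\mu_n = \hat\sigma_{n,q}(E(r,r'))$. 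The third relation $E(r,r')\tilde\mu_n=\tilde\mu_n\hat\sigma_{n,q}(E(r,r'))$ rescales to $E(r,r')\mu_n=\mu_n\hat\sigma_{n,q}(E(r,r'))$, the adjoint form of the same covariance.

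The remaining task, and the only genuine work in the proof, is to verify that the $\rho_{n,q}$ define ring endomorphisms of $\cR_\Q[\Q/\Z]$ and that $n\mapsto\rho_{n,q}$ is a semigroup homomorphism $\rho_{n,q}\circ\rho_{m,q}=\rho_{nm,q}$, so that the resulting algebra is indeed the crossed product $\cR_\Q[\Q/\Z]\rtimes_\rho\N$. The multiplicativity $\rho_{n,q}(E(r,r')E(s,s')) = \rho_{n,q}(E(r,r'))\rho_{n,q}(E(s,s'))$ reduces on generators to the combinatorial identity that, for each $t\in\Q/\Z$ with $nt=r'+s'$, there are exactly $n$ pairs $(u,v)$ with $nu=r'$, $nv=s'$ and $u+v=t$, which reconciles the prefactor $1/n^2$ on the right with the $1/n$ on the left. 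The semigroup law then follows by an analogous count of $(nm)$-th preimages in $\Q/\Z$ combined with the associativity $(r/m)/n=r/(nm)$ in $\cR_\Q$. This combinatorial verification is the main obstacle; once it is in place, the rescaled presentation of Proposition~\ref{deformedBC} is by inspection the standard presentation of the semigroup crossed product, and conversely any algebra satisfying the crossed-product relations satisfies the rescaled integral relations after setting $\tilde\mu_n = n\mu_n$, completing the identification.
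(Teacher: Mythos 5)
Your proposal is correct and follows essentially the same route as the paper: the paper's one-line proof likewise passes from the integral algebra $\cA_{\Z,q}$ to the rational one by the rescaling $\rho_{n,q}=\frac{1}{n}\hat\rho_{n,q}$ (equivalently $\tilde\mu_n=n\mu_n$), delegating the standard rational-versus-integral bookkeeping to \cite{CCM3}. You simply spell out the details the paper leaves to that reference, including the (correct) combinatorial check that $\rho_{n,q}$ is multiplicative and defines a semigroup action, which justifies the crossed-product description.
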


\proof This follows directly from the integral algebra $\cA_{\Z,q}$ by taking $\rho_n=\frac{1}{n} \tilde\rho_n$.
The relation between the rational and the integral algebra is as in the original case,
see \cite{CCM3}. 
\endproof

\smallskip
\subsection{The role of the $q$-integers} 

Intuitively, one would expect the $q$-deformation of the Bost--Connes algebra to replace
the integers $n\in \N$ with the $q$-integers $[n]_q=1+q+\cdots+q^{n-1}$. However, it is
clear that this cannot be done just directly, since the $q$-integers $[n]_q$ do not behave
well with respect to the semigroup property that is crucial to the structure of the Bost--Connes algebra.

\smallskip

However, one can see a geometric form of the $q$-integers in the deformed Witt rings $\tilde\cW_0^q(A)$
and $\hat\cW^q_0(A)$, by comparing it again with the case of the zeta functions of schemes over finite fields.
We have seen that mapping $L(E,f)=\prod (1-\alpha t)^{-n(\alpha)}$ to 
$\tilde\Omega_{q^\ell}(L(E,f))=\prod (1-\alpha q^\ell t)^{-n(\alpha)}$ corresponds, in the
case of zeta functions, to mapping $Z(X,t)$ to $Z(X\times \A^\ell,t)$. Similarly, we can consider,
for a scheme $X$ of finite type over $k$, the transformation that maps $Z(X,t)$ to 
$$ Z(X\times \P^n,t) = Z(X,t) \star Z(\P^n,t). $$
The zeta function of $\P^n$ is given by 
$$  Z(\P^n,t) = \prod_{i=0}^n (1-q^i\, t)^{-1} = [1]_w +_w [q]_w +_w [q]_w^{\star 2} +_w \cdots +_w [q]_w^{\star n}, $$
where we denote by $[a]_w:=(1-a t)^{-1}$ in the Witt ring. Thus $Z(\P^n,t)$ corresponds to the $q$-integer $[n+1]_q$
where sum and multiplication are replaced by the corresponding sum an multiplication in the Witt ring. 
This simply reflects the decomposition $\P^n =\A^0\sqcup \A^1   \sqcup \cdots \sqcup \A^n$ and the
corresponding counting of points $\#\P^n(\F_q)=[n+1]_q$. 
The lack of semigroup structure of the $q$-analogs $[n]_q$ corresponds geometrically to
the fact that a product of projective spaces $\P^n \times \P^m$ embeds in a projective 
space $\P^{(n+1)(m+1)-1}$ via the Segre embedding, but is not itself a projective space. 
If the zeta function is given by $Z(X,t)=\prod_{r\geq 1} (1-t^r)^{-a_r}$, then the map above is given by
$$ Z(X,t) \star [n]_q = \prod_{\substack{i=0,\ldots, n-1 \\  r\geq 1}} (1-q^i t^r)^{-a_r}. $$
In a similar way, we have an operation in $\tilde\cW_0^q(A)$
and $\hat\cW^q_0(A)$: for $L(E,f)=\prod (1-\alpha t)^{-n(\alpha)}$ we have
$$ \Omega_{q^r}( L(E,f)) \mapsto \Omega_{q^r}( L(E,f) ) \star [n]_q = \prod_{\substack{i=0,\ldots, n-1 \\  \alpha} }
(1-\alpha q^{r+i} t)^{-n(\alpha)}.  $$

\smallskip
\subsection{Orbit categories and the Habiro ring}\label{HabiroSec}

Let $\cC$ be an additive category and $F$ a self-equivalence.
The {\em orbit category} $\cC/F$ has objects ${\rm Obj}(\cC/F)={\rm Obj}(\cC)$
and morphisms 
\begin{equation}\label{morOrbCat}
{\rm Hom}_{\cC/F}(X,Y):= \oplus_{k\in \Z} {\rm Hom}_{\cC}(X, F^k(Y)).
\end{equation}
If $\cC$ is symmetric monoidal and $F=-\otimes \cO$ is given by tensoring with
an $\otimes$-invertible object $\cO$ in $\cC$, the orbit category $\cC/_{-\otimes\cO}$
is also symmetric monoidal, see \S 7 of \cite{Tab}.

Let $\cT={\rm Num}_\Q(k)$ be the category of numerical motives, as above and 
let $K_0(\cT)$ be the corresponding Grothendieck ring. Then the Grothendieck 
ring of the orbit
category $\cT_n:=\cT/_{-\otimes\Q(n)}$ can be identified with
$K_0(\cT_n)=K_0(\cT)/(\bL^n-1)$, Proposition~3.6 of \cite{LoMa}. While
the Grothendieck ring of varieties $K_0(\cV_k)$ is not the same as the Grothendieck 
ring of numerical motives, this observation gives an interpretation for the
meaning of the quotient rings $K_0(\cV_k)/(\bL^n-1)$. When introducing formal
roots of Tate motives, one can similarly consider orbit categories 
$\cT(\Q(1/n))/_{-\otimes \Q(m/n)}$ and Grothendieck rings $K_0(\cT(\Q(1/n)))/(\bL^{m/n}-1)$.
When one restricts to considering only Tate motives with Grothendieck ring 
$\Z[\bL]$, introducing roots of Tate motives and taking orbit categories leads, respectively,
to the rings $\Z[\bL^r\,:\, r\in \Q_+]$ and the ring
$$
 \widehat{\Z[\bL]}_\infty :=\varprojlim_N \Z[\bL^r\,:\, r\in \Q_+]/\cJ_N, 
$$ 
where $\cJ_N$ is the ideal generated by the elements $(\bL^r)_N:=(\bL^{rN}-1)\cdots (\bL^r-1)$,
see \cite{LoMa}. Equivalently written at the level of counting functions of $\F_q$-points
\begin{equation}\label{hatZinfty}
\widehat{\Z[q]}_\infty :=\varprojlim_N \Z[q^r\,:\, r\in \Q_+]/\cJ_N.
\end{equation}
As shown in \cite{LoMa} and \cite{Mar-endo}, the ring \eqref{hatZinfty} is the same as the
direct limit  
$$ \widehat{\Z[q]}_\infty = \varinjlim_n ( \sigma_n : \widehat{\Z[q]}\to \widehat{\Z[q]} ) $$
of the morphisms $\sigma_n : \widehat{\Z[q]}\to \widehat{\Z[q]}$ determined
by $\sigma_n : q\mapsto q^n$, where $\widehat{\Z[q]}$ is the Habiro ring
\begin{equation}\label{Habiro}
 \widehat{\Z[q]} = \varprojlim_n \Z[q]/ ( (q)_n ),
\end{equation}
where $(q)_n=(1-q)(1-q^2)\cdots (1-q^n)$.

\smallskip

In terms of the construction of the $q$-deformed Bost--Connes algebra described
in Proposition~\ref{deformedBC} and Lemma~\ref{AQq}, this would correspond to
a version of the $q$-deformed algebra where the coefficient ring $\cR$ of \eqref{cRring}
is replaced by $\widehat{\cR}:=\widehat{\Z[q]}_\infty$.

\begin{prop}\label{HabiroBCq}
The Habiro $q$-deformed integral Bost--Connes algebra is generated by the
group ring $\hat\cR[\Q/\Z]$ together with elements $\tilde\mu_n$ and $\mu_n^*$
satisfying the relations \eqref{qBCrel1}, \eqref{qBCrel2}, \eqref{qBCrel3}. The
associated rational algebra is given by the semigroup crossed product 
$\hat\cA_{\Q,q}=\hat\cR[\Q/\Z]\rtimes_\rho \N$ with the semigroup action
determined by \eqref{qBCrel4}. 
\end{prop}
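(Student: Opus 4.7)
The plan is to follow the proof of Proposition~\ref{deformedBC} and Lemma~\ref{AQq} essentially verbatim, with the coefficient ring $\cR=\Z[q^r:r\in\Q_+]$ replaced everywhere by its Habiro completion $\hat\cR=\widehat{\Z[q]}_\infty$. The key structural fact that makes the argument go through is recalled just before the statement: by \eqref{hatZinfty}, $\hat\cR$ is itself the direct limit $\varinjlim_n(\sigma_n:\widehat{\Z[q]}\to\widehat{\Z[q]})$ with $\sigma_n:q\mapsto q^n$, so the endomorphisms $\sigma_n$ of $\cR$ from Lemma~\ref{ringZqr} extend to $\hat\cR$ and remain invertible on the limit with inverse $\rho_n:q^r\mapsto q^{r/n}$. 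This is exactly what allowed one to define $\hat\rho_{n,q}$ in Proposition~\ref{deformedBC}.

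Concretely, I would first define a Habiro version $\hat{\hat\cW}{}^q_0(A)$ of the deformed Witt ring of Definition~\ref{hatWq0Adef}, by allowing the deformation parameters in the exponents of $q$ in \eqref{Omegaqr} to range over $\hat\cR$, with the graded structure of \eqref{hatWq0r} replaced by the projective-limit topology induced by the ideals $\cJ_N$ of \eqref{hatZinfty}. The multiplication \eqref{grmulthatW} and the Frobenius and Verschiebung formulas \eqref{FnVnhatWq0} of Lemma~\ref{FnVnhatWq0lem} are continuous in this topology, since $\sigma_n$ and $\rho_n$ preserve the ideals $\cJ_N$ up to reindexing, and so they extend uniquely to the completion. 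The divisor map of Lemma~\ref{divZqr} then extends to a ring isomorphism $\hat{\hat\delta}:\hat{\hat\cW}{}^q_0(k)\to\hat\cR[k^*]$, and composing with the identification $\sigma:\bar\F_p^*\simeq(\Q/\Z)^{(p)}$ yields the analog of \eqref{hatsigmaWhat} with values in $\hat\cR[(\Q/\Z)^{(p)}]$. From this point the derivation of the relations \eqref{qBCrel1}--\eqref{qBCrel3} is identical to the proof of Proposition~\ref{deformedBC}: the compatibility $\hat\sigma_{n,q}\circ\hat\sigma_q=\hat\sigma_q\circ F_n$ forces $\hat\sigma_{n,q}(E(r,r'))=E(nr,nr')$, while $\hat\rho_{n,q}\circ\hat\sigma_q=\hat\sigma_q\circ V_n$ together with the action of $V_n$ on $\tilde\Omega_{q^r}$ forces \eqref{qBCrel2}. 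The rational algebra then arises as in Lemma~\ref{AQq} by setting $\rho_{n,q}=\frac{1}{n}\hat\rho_{n,q}$ after tensoring with $\Q$, which yields the semigroup crossed product presentation $\hat\cA_{\Q,q}=\hat\cR[\Q/\Z]\rtimes_\rho\N$ with action \eqref{qBCrel4} and relations \eqref{munrels}.

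The main obstacle is not algebraic but topological: one must verify that the endomorphisms $\hat\sigma_{n,q}$ and $\hat\rho_{n,q}$, initially defined on $\cR[\Q/\Z]$, are continuous in the $\cJ_N$-adic topology, so that they extend uniquely to $\hat\cR[\Q/\Z]$. This reduces to checking that $\sigma_n$ and $\rho_n$ send the generators $(\bL^r)_N=(\bL^{rN}-1)\cdots(\bL^r-1)$ of $\cJ_N$ into $\cJ_{N'}$ for suitable $N'$ depending on $n$ and $N$; this is a direct computation using only the multiplicative form of the generators and the fact that $\sigma_n(\bL^r-1)=\bL^{nr}-1$ and $\rho_n(\bL^r-1)=\bL^{r/n}-1$. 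Once continuity is in hand, the presentation and the crossed product description transfer automatically from $\cA_{\Z,q}$ and $\cA_{\Q,q}$ to their Habiro counterparts.
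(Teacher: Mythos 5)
Your argument is correct and follows essentially the same route as the paper: the paper's proof likewise reduces to checking that $\hat\sigma_{n,q}$ and $\hat\rho_{n,q}$ are compatible with the projective limit $\hat\cR=\varprojlim_N \cR/\cJ_N$ (deferring to Proposition 2.1 and Lemma 2.3 of \cite{Mar-endo}), which is exactly your continuity computation $\sigma_n((\bL^r)_N)=(\bL^{nr})_N$ and $\rho_n((\bL^r)_N)=(\bL^{r/n})_N$, after which the presentation and crossed product transfer from Proposition~\ref{deformedBC} and Lemma~\ref{AQq}. Your auxiliary Habiro-completed Witt ring is harmless but not needed (and note it is the coefficient ring, not the exponents $r\in\Q_+$, that gets completed); the only point the paper adds explicitly is why the action remains a semigroup crossed product by $\N$ rather than a group crossed product by $\Q^*_+$ as in \cite{Mar-endo}, namely that $\hat\sigma_{n,q}$ is invertible on $\hat\cR$ but still only partially invertible on the generators $e(r)$.
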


\proof The compatibility of the operations $\hat\sigma_{n,q}$ and $\tilde\rho_{n,q}$
with passing to the projective limit $\hat\cR =\varprojlim_N \cR/\cJ_N$ can be
shown as in Proposition 2.1 and Lemma 2.3 of \cite{Mar-endo}.
The crossed product action in this case remains a semigroup crossed product by $\N$,
unlike in the case of the algebra considered in \cite{Mar-endo} where it becomes a
group crossed product by $\Q^*_+$. Indeed, while the action of $\hat\sigma_{n,q}$
on the coefficient ring $\hat\cR$ is invertible (like it is on $\cR$ itself), the action on
$\hat\cR[\Q/\Z]$ still acts like the original Bost--Connes endomorphisms on the
generators $e(r)$ of $\Z[\Q/\Z]$, with the partial inverses are given by $\rho_{n,q}$.
\endproof

\smallskip
\subsection{Relation to $\F_1$-geometry}\label{F1sec}

The Habiro ring $\widehat{\Z[q]}$ was proposed in \cite{Man-hab} as a model of
analytic functions over $\F_1$. The integral Bost--Connes algebra, on the other
hand was related to $\F_1$-geometry (in the sense of Kapranov--Smirnov \cite{KaSmi})
in \cite{CCM3}. In \cite{Mar-endo} it was shown that a version of the Bost--Connes algebra
can be constructed based on the Habiro ring, and in \cite{LoMa} this was related to the
orbit categories $\cT(\Q(1/n))/_{-\otimes \Q(m/n)}$.

\smallskip

The $q$-deformed Bost--Connes algebra of Proposition~\ref{deformedBC} and
its variant considered in \S \ref{HabiroSec} above combine in a natural
way the integral Bost--Connes algebra of \cite{CCM3} with the version based
on the Habiro ring of \cite{Mar-endo}. This enriches the algebraic form of 
$\F_1$ and its extensions $\F_{1^m}$ based on roots of unity as in \cite{KaSmi},
encoded in the integral Bost--Connes algebra, by combining it with the analytic version developed
in \cite{Man-hab}. 

\smallskip
\subsection{Quantum Statistical Mechanics}\label{QSMsec}

Consider first the case where the Bost--Connes algebra itself remains
undeformed, as in \S \ref{qBCZSec}. Even on the undeformed Bost--Connes
algebra one can consider interesting $q$-deformed time evolutions, that
relate the resulting quantum statistical mechanical system to some
known $q$-deformations of the Riemann zeta function and polylog
functions. The simplest such construction is based on the $q$-analog
of the Riemann zeta function constructed in \cite{Raw1} 
\cite{Raw2}, in the context of a probabilistic approach based on
Bernoulli trials with variable probability. In this context, one assumes
that the variable $q$ is real with $0\leq q < 1$, as it represents a
probability. With this assumption, the $q$-analog of the Riemann
zeta function is constructed as follows.
For $n\in\N$ with primary decomposition $n=\prod p_i^{a_i}$, let
\begin{equation}\label{vn}
v(n)= \sum_i a_i (p_i-1).
\end{equation}
Notice that the function $v(n)$ defined in this way satisfies
$$ v(nm)=v(n) + v(m). $$
Also, for $n\in \N$ with primary decomposition $n=\prod p_i^{a_i}$
we define
\begin{equation}\label{curlnq}
\{ n \}_q =\prod_i [p_i]_q^{a_i},
\end{equation}
where $[p]_q=1+p+\cdots+p^{n-1}$ is the usual $q$-analog. In other
words the $\{ n \}_q$ are the elements of the multiplicative semigroup
$\N_q$ generated by the $q$-analogs of the primes.
Then one sets
\begin{equation}\label{zetaqsmall}
\zeta_q(s) = \sum_{n=1}^\infty \frac{q^{s \nu(n)}}{\{ n\}_q^s}, \ \ \ \ \text{ with } q<1.
\end{equation}
As shown in \cite{Raw2}, this series is convergent for $s>1$ and reduces to the
Riemann zeta function when $q\to 1$.  It is also shown in \cite{Raw2} that
\eqref{zetaqsmall} has an Euler product expansion
\begin{equation}\label{zetaqsmallEuler}
\zeta_q(s) = \prod_p (1- \frac{q^{s(p-1)}}{[p]_q^s})^{-1} .
\end{equation}
In our setting, we regard $q$ as a positive integer, hence $q>1$. The
above expression then needs to be modified accordingly. We have
$$ \frac{q^{p-1}}{[p]_q} = \frac{q^{p-1}}{1+q+q^2+\cdots +q^{p-1}} =\frac{1}{1+q^{-1}+q^{-2} +\cdots+q^{-(p-1)}} 
=\frac{1}{[p]_{q^{-1}}}. $$
Thus, by mapping $q \mapsto q^{-1}$, the expression $q^{p-1}/[p]_q$ is mapped to $1/[p]_q$ and
the version of the zeta function of \eqref{zetaqsmall} for $q>1$ becomes simply
\begin{equation}\label{zetaq}
\zeta_q(s)=\sum_{n\geq 1} \{ n \}_q^{-s} = \prod_p (1- [p]_q^{-s})^{-1}, \ \ \ \  \text{ with } q>1.
\end{equation}

We proceed by defining the
time evolution $\sigma_{t,q}$ on the original Bost--Connes algebra
$\cA_\C=\cA_\Q\otimes_\Q\C$ and on its $C^*$-algebra completion $\cA_{BC}$
is obtained as follows.
 
\begin{lem}\label{sigmaqtlem}
Setting  $\sigma_{t,q}(e(r))=e(r)$ on the generators $e(r)$ of $\C[\Q/\Z]$ and
\begin{equation}\label{sigmatq}
\sigma_{t,q}(\mu_n)= \{n\}_q^{it} \, \mu_n \ \ \  
\end{equation}
defines a time evolution $\sigma:\R \to {\rm Aut}(\cA_\C)$.
In the Bost--Connes representations 
\begin{equation}\label{BCreps}
 \pi_\alpha:\cA_{BC}\to \cB(\ell^2(\N)), \ \ \ \text{ with } \alpha\in \hat\Z^*, 
\end{equation} 
the time evolution $\sigma_{t,q}$ 
is generated by the Hamiltonian
\begin{equation}\label{Hq}
H_q \epsilon_n = \log\{n\}_q \epsilon_n 
\end{equation}
and has partition function
\begin{equation}\label{qZeta}
Z_q(\beta) = \sum_{n= 1}^\infty  \{ n \}_q^{-\beta},
\end{equation}
which is the $q$-analog zeta function of \eqref{zetaq}.
\end{lem}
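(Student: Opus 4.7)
The plan is to verify the three assertions in order, each reducing to a short algebraic check once the multiplicativity of $\{n\}_q$ is recorded.

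First I would establish the key property that $\{nm\}_q=\{n\}_q\{m\}_q$. Writing $n=\prod p_i^{a_i}$ and $m=\prod p_i^{b_i}$, the definition \eqref{curlnq} gives
\[
\{nm\}_q=\prod_i [p_i]_q^{a_i+b_i}=\{n\}_q\{m\}_q,
\]
so the map $n\mapsto\{n\}_q$ is a multiplicative homomorphism of $\N$ into $\R_{>0}$. Using this, I would check that the assignments $\sigma_{t,q}(e(r))=e(r)$ and $\sigma_{t,q}(\mu_n)=\{n\}_q^{it}\mu_n$ (together with $\sigma_{t,q}(\mu_n^*)=\{n\}_q^{-it}\mu_n^*$) respect each of the defining relations of $\cA_\C$ recalled in \S\ref{intBCsec}. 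The semigroup relation $\mu_{nm}=\mu_n\mu_m$ is preserved precisely by multiplicativity of $\{n\}_q$; the isometry relation $\mu_n^*\mu_n=n$ is fixed because $\{n\}_q^{it}\cdot\{n\}_q^{-it}=1$; the commutation $\mu_n\mu_m^*=\mu_m^*\mu_n$ for $(n,m)=1$ is preserved since the scalar factors commute; and the crossed-product relations $\mu_n^*x=\sigma_n(x)\mu_n^*$ and $x\mu_n=\mu_n\sigma_n(x)$ together with $\mu_n x\mu_n^*=\tilde\rho_n(x)$ hold for each $t$ because $\sigma_{t,q}$ fixes $x\in\C[\Q/\Z]$ pointwise and only multiplies the $\mu_n,\mu_n^*$ by inverse scalars. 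One-parameter group property $\sigma_{s+t,q}=\sigma_{s,q}\circ\sigma_{t,q}$ and strong continuity follow from $\{n\}_q^{i(s+t)}=\{n\}_q^{is}\{n\}_q^{it}$ and continuity of $t\mapsto\{n\}_q^{it}$.

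Second, for the Hamiltonian formula I would argue by direct computation in the representation \eqref{BCreps}. The representation $\pi_\alpha$ satisfies $\pi_\alpha(\mu_n)\epsilon_k=\epsilon_{nk}$, so setting $H_q\epsilon_n=\log\{n\}_q\,\epsilon_n$ gives
\[
e^{itH_q}\pi_\alpha(\mu_n)e^{-itH_q}\epsilon_k
=\{k\}_q^{-it}e^{itH_q}\epsilon_{nk}
=\{k\}_q^{-it}\{nk\}_q^{it}\epsilon_{nk}
=\{n\}_q^{it}\pi_\alpha(\mu_n)\epsilon_k,
\]
where multiplicativity of $\{n\}_q$ is used in the last step. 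Since $H_q$ commutes with $\pi_\alpha(e(r))$ (the latter being diagonal in the Bost--Connes representation on a basis compatible with the characters used in $\pi_\alpha$, as recalled in \cite{BC} and Chapter 3 of \cite{CoMa}), the same Hamiltonian implements $\sigma_{t,q}$ on all of $\cA_{BC}$.

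Third, the partition function is then a bookkeeping exercise: $Z_q(\beta)=\Tr(e^{-\beta H_q})=\sum_{n\geq 1} e^{-\beta\log\{n\}_q}=\sum_{n\geq 1}\{n\}_q^{-\beta}$, which matches \eqref{zetaq} by the Euler-product computation already recorded before the statement of the lemma. The only conceptual point worth highlighting is that, because $\{n\}_q$ is multiplicative and supported on the multiplicative semigroup $\N_q$ generated by the $[p]_q$, the series factors formally as $\prod_p(1-[p]_q^{-\beta})^{-1}$, so convergence for $\beta$ sufficiently large follows from the same comparison with $\zeta(\beta)$ used in \cite{Raw2}. The main (and only mild) obstacle is verifying that multiplicativity of $\{n\}_q$ really is enough to preserve every BC relation simultaneously; once this is isolated, the rest of the lemma is a direct calculation.
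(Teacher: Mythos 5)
Your proof is correct and follows essentially the same route as the paper: multiplicativity $\{nm\}_q=\{n\}_q\{m\}_q$ makes $\sigma_{t,q}$ compatible with the algebra relations and the group law in $t$, the Hamiltonian is verified by conjugating $\mu_n$ on basis vectors $\epsilon_m$ (with $\pi_\alpha(e(r))$ diagonal), and the partition function is the trace $\sum_n\{n\}_q^{-\beta}$. The only slip is cosmetic: in $\cA_\C$ the relevant relation is $\mu_n^*\mu_n=1$ (the identity $\mu_n^*\tilde\mu_n=n$ belongs to the integral algebra with $\tilde\mu_n$), but since the scalar factors $\{n\}_q^{\pm it}$ cancel either way this does not affect the argument.
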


\proof The assignment \eqref{sigmatq} defines a time evolution since
$\sigma_{t,q}(\mu_n \mu_m)=\{nm\}_q^{it} \mu_{nm} =
\{n\}_q^{it} \{m\}_q^{it} \mu_n \mu_m$ 
and, together with $\sigma_{t,q}(e(r))=e(r)$, is compatible with the relations in the algebra,
and clearly $\sigma_{t+s}(a)=\sigma_t \sigma_s(a)$. In the Bost--Connes representation
associated to the choice of an element $\alpha\in \hat\Z^*$, we have
\begin{equation}\label{BCreps2}
\begin{array}{c}
 \mu_n \epsilon_m =\epsilon_{nm} \\[3mm]
 \pi_\alpha(e(r))\epsilon_m = \zeta_r^m \, \epsilon_m 
\end{array}
\end{equation} 
where $\zeta_r=\alpha^*(e(r))$, identifying $\alpha$ with a choice
of embedding of the roots of unity $\Q/\Z$ in $\C$, with $\zeta_r$ 
the image of $e(r)$ under this embedding. The Hamiltonian $H_q$ satisfies
$$ e^{it H_q} \pi_\alpha (a) e^{-it H_q}  = \pi_\alpha(\sigma_{t,q}(a)) . $$
It suffices to check this condition on $a=\mu_n$ acting on basis elements $\epsilon_m$,
$e^{it H_q} \mu_n e^{-it H_q} \epsilon_m = \{m\}_q^{-it} 
\{nm\}_q^{it} \epsilon_{nm} =\sigma_{t,q}(\mu_n) \epsilon_m$.
The partition function is then given by
$$ Z_q(\beta)=\Tr(e^{-\beta H_q}) =\sum_n \langle \epsilon_n, e^{-\beta H_q} \epsilon_n \rangle=
\sum_{n= 1}^\infty  \{ n \}_q^{-\beta}. $$
\endproof

\smallskip

A more interesting case is the construction of quantum statistical mechanical 
systems associated to the $q$-deformed Bost--Connes algebra 
of Proposition~\ref{deformedBC} and Lemma~\ref{AQq}.

\begin{lem}\label{RepsBCq}
The Bost--Connes representations \eqref{BCreps}, \eqref{BCreps2} extend
to representations of the deformed algebra $\cA_{\C,q}=\cA_{\Q,q}\otimes_\Q\C$
of Lemma~\ref{AQq} on the Hilbert space $\cH=\ell^2(\N)\otimes \ell^2(\Lambda)$
with $\Lambda=q^{\Q_+}$, given by
\begin{equation}\label{qBCreps}
\begin{array}{c}
\mu_n \, \epsilon_{m,\lambda} = \epsilon_{nm, \lambda^{1/n}} \\[3mm]
\pi_\alpha (E(r,r')) \, \epsilon_{m,\lambda}  = \zeta_{r'}^m \,\, \epsilon_{m, \lambda \cdot q^r} \\[3mm]
\mu_n^* \, \epsilon_{m,\lambda} = \left\{ \begin{array}{ll} \epsilon_{m/n, \lambda^n} & n|m \\[2mm]
0 & n\not| m.
\end{array}\right.
\end{array}
\end{equation}
with $E(r,r')=q^r\, e(r')$ and $\zeta_r=\alpha(e(r))$ for a given $\alpha\in\hat\Z^*$, and 
where $\{ \epsilon_{m,\lambda} \}$ with $m\in\N$ and $\lambda\in \Lambda$ is the standard orthonormal
basis of $\cH$.
\end{lem}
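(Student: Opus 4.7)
The plan is to verify that the three assignments in \eqref{qBCreps} yield bounded operators on $\cH=\ell^2(\N)\otimes\ell^2(\Lambda)$ that satisfy the defining relations of $\cA_{\C,q}$ listed in Lemma~\ref{AQq} (and which restrict, in the first tensor factor alone, to the classical Bost--Connes representations \eqref{BCreps2}). Boundedness is immediate: each $\mu_n$ is a partial isometry sending the orthonormal basis vector $\epsilon_{m,\lambda}$ to $\epsilon_{nm,\lambda^{1/n}}$, its adjoint is exactly the operator $\mu_n^*$ described in \eqref{qBCreps} (vanishing unless $n\mid m$), and $\pi_\alpha(E(r,r'))$ is the composition of multiplication by the unit scalar $\zeta_{r'}^m$ with the unitary translation $\lambda\mapsto \lambda q^r$ on the basis of $\ell^2(\Lambda)$, so in particular $\mu_n^*\mu_n=\mathrm{id}$. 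The semigroup identities $\mu_{nm}=\mu_n\mu_m$ and $\mu_n^*\mu_m^*=\mu_m^*\mu_n^*$ follow by direct substitution; the group law $E(r,r')E(s,s')=E(r+s,r'+s')$ follows from $\zeta_{r'}\zeta_{s'}=\zeta_{r'+s'}$ together with $q^{r}q^{s}=q^{r+s}$; and the commutativity $\mu_n\mu_m^*=\mu_m^*\mu_n$ for $(n,m)=1$ reduces, as in the classical case, to the observation that $m\mid nk$ with $(n,m)=1$ forces $m\mid k$.

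The essential step is the semigroup crossed product relation $\mu_n\,\pi_\alpha(E(r,r'))\,\mu_n^*=\pi_\alpha(\rho_{n,q}(E(r,r')))$. Applied to $\epsilon_{m,\lambda}$, the left-hand side vanishes for $n\nmid m$, and for $n\mid m$ is obtained by the three successive steps $\epsilon_{m,\lambda}\mapsto \epsilon_{m/n,\lambda^n}\mapsto \zeta_{r'}^{m/n}\epsilon_{m/n,\lambda^n q^r}\mapsto \zeta_{r'}^{m/n}\epsilon_{m,\lambda q^{r/n}}$, where the fractional exponent $r/n$ is produced by $(\lambda^n q^r)^{1/n}=\lambda q^{r/n}$. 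The right-hand side evaluates to $\tfrac{1}{n}\sum_{ns=r'}\zeta_s^m\,\epsilon_{m,\lambda q^{r/n}}$, so the relation reduces to the scalar identity $\zeta_{r'}^{m/n}=\tfrac{1}{n}\sum_{ns=r'}\zeta_s^m$ when $n\mid m$, and to the vanishing of that same sum otherwise. This is precisely the orthogonality identity used in the undeformed Bost--Connes representation: the hypothesis $\alpha\in\hat\Z^*$ makes $\zeta_{1/n}$ a primitive $n$-th root of unity, the $n$ preimages of $r'$ under $s\mapsto ns$ differ by $k/n$ for $k=0,\ldots,n-1$, and $\sum_{k=0}^{n-1}\zeta_{k/n}^m$ equals $n$ when $n\mid m$ and vanishes otherwise; when $n\mid m$, the relation $\zeta_s^n=\zeta_{ns}=\zeta_{r'}$ then gives $\zeta_s^m=\zeta_{r'}^{m/n}$ uniformly in the preimages $s$.

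It then remains to verify the one-sided intertwining relations $\mu_n^*E(r,r')=\hat\sigma_{n,q}(E(r,r'))\mu_n^*$ and $E(r,r')\mu_n=\mu_n\hat\sigma_{n,q}(E(r,r'))$ (in their rational form with $\mu_n=\tfrac{1}{n}\tilde\mu_n$), which on basis vectors reduce to the numerical identities $(\lambda q^r)^n=\lambda^n q^{nr}$ and $\zeta_{r'}^m=\zeta_{nr'}^{m/n}$, the latter following from $\zeta_{nr'}=\zeta_{r'}^n$ because $\alpha$ is a character. The only step that is genuinely more than bookkeeping is the orthogonality-of-characters identity in the crossed product computation; everything else is a direct manipulation of exponents of $q$ and $\lambda$, and the second tensor factor $\ell^2(\Lambda)$ is chosen specifically so that the $q^r$-translations interact compatibly with the $n$-th root action of $\mu_n$ on the $\lambda$-variable.
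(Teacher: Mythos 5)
Your proof is correct and follows essentially the same route as the paper: a direct check on the orthonormal basis $\epsilon_{m,\lambda}$ that the operators in \eqref{qBCreps} are bounded (isometries/unitaries) and satisfy the relations of Lemma~\ref{AQq}. The only difference is one of completeness: the paper carries out the crossed-product computation only for $E(r,0)$ and cites \cite{MaTa} for the rest, whereas you make explicit the root-of-unity orthogonality identity $\frac{1}{n}\sum_{ns=r'}\zeta_s^m=\zeta_{r'}^{m/n}$ (for $n\mid m$, zero otherwise) needed for general $E(r,r')$, which is exactly the step the paper leaves implicit.
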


\proof These are the same kinds of representations considered in \cite{MaTa}, in the 
special case where elements $\lambda\in \Lambda$ always have an $n$-th root in
$\Lambda$, so that $\mu_n^* \mu_n$ is the identity and not a projector (see Remark~4.10 
and Proposition~4.15 of \cite{MaTa}). First observe that, if we let the generators $q^r=E(r,0)$
act on $\epsilon_{m,\lambda}$ by $E(r,0): \epsilon_{m,\lambda}\mapsto \epsilon_{m,\lambda\cdot q^r}$,
then we need the isometries $\mu_n$ to act as $\mu_n: \epsilon_{m,\lambda}\mapsto \epsilon_{mn,\lambda^{1/n}}$
(which in turn determines the action of $\mu_n^*$ as in \eqref{qBCreps})
in order to satisfy the relations 
\begin{equation}\label{relEr0}
 \mu_n E(r,0) \mu_n^* =\frac{1}{n} \sum_{ns =0 } E(\frac{r}{n},s) 
\end{equation} 
in $\cA_{\C,q}$. Indeed we have, if $n|m$ 
$$ \mu_n E(r,0) \mu_n^* \epsilon_{m,\lambda}=\mu_n E(r,0) \epsilon_{m/n,\lambda^n}=
\mu_n \epsilon_{m/n,\lambda^n\cdot q^r} = \epsilon_{m,\lambda \cdot q^{r/n}} 
= E(\frac{r}{n},0)\, \epsilon_{m,\lambda}, $$
or zero if $n$ does not divide $m$, which agrees with \eqref{relEr0}. One similary checks
compatibility with the other relations: 
$$ \mu_n E(r,r') \mu_n^* =n^{-1} \sum_{ns =r'} E(\frac{r}{n},s) =\rho_n(E(r,r')) $$
and $\mu_n^* E(r,r') \mu_n =\sigma_n(E(r,r'))=E(nr,nr')$, as well as $E(r,r')E(s,s')=E(r+r',s+s')$
and the relations \eqref{munrels}.
\endproof

In order to construct time evolutions that have a convergent partition function $\Tr(e^{-\beta H})$
for sufficiently large $\beta$, it is convenient to enlarge the algebra by additional ``weight
operators", as in \cite{MaTa}.  We define the resulting algebra as follows.

\begin{defn}\label{extAQq}
The extended $q$-Bost--Connes algebra $\cA^w_{\Q,q}$ is generated by $\cR[\Q/\Z]$
and generators $\mu_n$ and $\mu_n^*$ satisfying the relations \eqref{qBCrel4}, \eqref{munrels},
and additional generators given by the weight operators $\omega_z(\lambda)$ for $\lambda\in\Lambda=q^{\Q_+}$
and $z\in U(1)$ satisfying the relations
\begin{equation}\label{Wrels}
\begin{array}{c}
\omega_z(\lambda_1 \lambda_2)=\omega_z(\lambda_1) \omega_z(\lambda_2), \ \ \ \ \  \omega_z(\lambda^{-1})
=\omega_z(\lambda)^{-1} \\[2mm]
\omega_z(\lambda)\, E(r,r') = E(r,r') \, \omega_z(\lambda), \ \ \ \ \  \omega_z(\lambda)\, \mu_n =
\mu_n\, \omega_z(\lambda)^n, \ \ \ \ \  \mu_n^*\, \omega_z(\lambda) = \omega_z(\lambda)^n \, \mu_n^*.
\end{array}
\end{equation}
\end{defn}

\smallskip

\begin{lem}\label{repBCqw}
Suppose given a group homomorphism $h:\Lambda \to \R^*_+$ and an element $\alpha\in \hat\Z^*$.
The representation  of Lemma~\ref{RepsBCq} mapping $\pi_\alpha: \cA_{\C,q}\to \cB(\ell^2(\N\times \Lambda))$, 
for $\alpha\in \hat\Z^*$, extends to a representation of $\cA^w_{\C,q}=\cA^w_{\Q,q}\otimes_\Q\C$ by setting
\begin{equation}\label{omegaact}
\omega_z(\lambda) \epsilon_{m,\eta} = h(\lambda)^{mz}\, \epsilon_{m,\eta},
\end{equation}
independently of $\alpha\in \hat\Z^*$, 
and letting the other generators $E(r,r')$, $\mu_n$ and $\mu_n^*$ act as in \eqref{qBCreps}.
\end{lem}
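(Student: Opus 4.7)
The plan is to verify directly that the proposed action of $\omega_z(\lambda)$ satisfies the defining relations \eqref{Wrels} of the extended algebra, since Lemma~\ref{RepsBCq} already shows that the actions of $E(r,r')$, $\mu_n$, $\mu_n^*$ generate a representation of $\cA_{\C,q}$. So only the five additional relations involving the weight operators need to be checked on the orthonormal basis $\{\epsilon_{m,\eta}\}$ of $\cH=\ell^2(\N\times \Lambda)$.

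First I would check the two purely multiplicative relations in the first line of \eqref{Wrels}. Since $\omega_z(\lambda)\epsilon_{m,\eta}=h(\lambda)^{mz}\epsilon_{m,\eta}$ is diagonal and $h:\Lambda\to \R^*_+$ is a group homomorphism, one has $\omega_z(\lambda_1\lambda_2)\epsilon_{m,\eta}=h(\lambda_1)^{mz}h(\lambda_2)^{mz}\epsilon_{m,\eta}=\omega_z(\lambda_1)\omega_z(\lambda_2)\epsilon_{m,\eta}$, and the inverse relation follows from the same computation. Next I would check that $\omega_z(\lambda)$ commutes with $E(r,r')$: the action of $E(r,r')=q^r e(r')$ in \eqref{qBCreps} sends $\epsilon_{m,\eta}$ to $\zeta_{r'}^m\,\epsilon_{m,\eta\cdot q^r}$, which preserves the first index $m$, while $\omega_z(\lambda)$ acts by a scalar that depends only on $m$; the two scalars therefore commute on each basis vector.

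Finally I would check the two covariance relations with the isometries, which is the only place where a nontrivial identity occurs. Applying both sides of $\omega_z(\lambda)\mu_n=\mu_n\,\omega_z(\lambda)^n$ to $\epsilon_{m,\eta}$, the left side yields $\omega_z(\lambda)\epsilon_{nm,\eta^{1/n}}=h(\lambda)^{(nm)z}\epsilon_{nm,\eta^{1/n}}$ while the right side yields $\mu_n\bigl(h(\lambda)^{n\cdot mz}\epsilon_{m,\eta}\bigr)=h(\lambda)^{nmz}\epsilon_{nm,\eta^{1/n}}$, and they agree. The same computation, in the case $n\mid m$, gives $\mu_n^*\omega_z(\lambda)\epsilon_{m,\eta}=h(\lambda)^{mz}\epsilon_{m/n,\eta^n}=\omega_z(\lambda)^n\mu_n^*\epsilon_{m,\eta}$, since $(m/n)\cdot n=m$; when $n\nmid m$ both sides vanish by the definition \eqref{qBCreps} of $\mu_n^*$.

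The only ``obstacle'' is bookkeeping: one must observe that the eigenvalue of $\omega_z(\lambda)$ on $\epsilon_{m,\eta}$ depends on $m$ precisely in the linear way needed so that moving $\omega_z(\lambda)$ past $\mu_n$, which rescales $m$ by $n$, introduces exactly the $n$-th power appearing in \eqref{Wrels}. Once this is seen, all five relations become immediate, and independence from $\alpha\in\hat\Z^*$ is automatic because \eqref{omegaact} does not involve $\alpha$.
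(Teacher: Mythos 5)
Your proposal is correct and follows the same route as the paper: the paper's proof simply delegates the verification to Propositions 4.15 and 4.27 of the cited Marcolli--Tabuada paper (noting only that here $\mu_n^*\mu_n=1$ because every $\lambda\in\Lambda$ has an $n$-th root), and what you have written out is exactly that direct check of the relations \eqref{Wrels} on the basis $\{\epsilon_{m,\eta}\}$, with the key point correctly identified that the eigenvalue $h(\lambda)^{mz}$ is linear in $m$ so that conjugating past $\mu_n$ and $\mu_n^*$ produces the $n$-th powers, and vanishing of both sides when $n\nmid m$.
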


\proof The argument is exactly as in Proposition 4.15 and Proposition~4.27 
of \cite{MaTa}, with the only difference that in our
case we have $\mu_n^* \mu_n=1$ since elements $\lambda \in \Lambda$ always have an $n$-th
root $\lambda^{1/n}$ in $\Lambda$.
\endproof

We can then define a time evolution as in \cite{MaTa} in the following way.

\begin{lem}\label{timeBCqw}
Setting 
\begin{equation}\label{sigmatBCqw}
\sigma_t(E(r,r'))=\omega_{-it}(q^r) E(r,r'), \ \ \ \  \sigma_t(\mu_n)=n^{it}\mu_n, \ \ \ \   
\sigma_t(\omega_z(\lambda))=\omega_z(\lambda) 
\end{equation}
defines a time evolution on $\cA^w_{\C,q}$.
In the representations of Lemma~\ref{repBCqw} this time evolution is generated by the
Hamiltonian
\begin{equation}\label{HamBCwq}
 H \, \epsilon_{m,\lambda} = (\log(m) - m \log(h(\lambda))) \, \epsilon_{m,n}.
\end{equation} 
\end{lem}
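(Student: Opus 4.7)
The plan is to verify two things: first, that the formulas \eqref{sigmatBCqw} consistently extend to a one-parameter group $\sigma_t \in \mathrm{Aut}(\cA^w_{\C,q})$; and second, that in the representation of Lemma~\ref{repBCqw} this evolution is inner-implemented by the Hamiltonian $H$ of \eqref{HamBCwq}.

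For the first part, I would check the defining relations of $\cA^w_{\C,q}$ one family at a time. The relation $E(r,r')E(s,s')=E(r+s,r'+s')$ is preserved because $\omega_{-it}(q^r)$ commutes with every $E(s,s')$ by \eqref{Wrels}, and the multiplicative property $\omega_{-it}(q^{r+s}) = \omega_{-it}(q^r)\,\omega_{-it}(q^s)$ from \eqref{Wrels} matches $\sigma_t(E(r+s,r'+s'))$. The relations among $\mu_n,\mu_n^*$ in \eqref{munrels} are preserved by the scalar twist $n^{it}$ exactly as in the classical Bost--Connes case. The crucial relation is $\mu_n E(r,r') \mu_n^* = \rho_n(E(r,r')) = n^{-1}\sum_{ns=r'} E(r/n,s)$: applying $\sigma_t$ to the left-hand side produces $n^{it}\,\mu_n\,\omega_{-it}(q^r)E(r,r')\,\mu_n^*\,n^{-it}$, and the identity $\mu_n\omega_z(\lambda) = \omega_z(\lambda)^{1/n}\mu_n$ (equivalently $\omega_z(\lambda)\mu_n = \mu_n\omega_z(\lambda)^n$ from \eqref{Wrels}, so conjugation by $\mu_n$ produces the $n$-th root) converts $\omega_{-it}(q^r)$ into $\omega_{-it}(q^{r/n})$, which agrees with $\sigma_t$ applied to the right-hand side. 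The group property $\sigma_{t+s}=\sigma_t\circ\sigma_s$ follows from $\omega_{-i(t+s)}(q^r) = \omega_{-it}(q^r)\omega_{-is}(q^r)$ together with additivity of the exponent on $\mu_n$.

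For the second part, I would verify $e^{itH}\pi_\alpha(a)e^{-itH} = \pi_\alpha(\sigma_t(a))$ by checking it on basis vectors $\epsilon_{m,\lambda}$ for each family of generators. On $\omega_z(\lambda)$ the identity is trivial because both $H$ and $\omega_z(\lambda)$ are diagonal in the basis $\{\epsilon_{m,\eta}\}$. On $\mu_n$ the conjugate $e^{itH}\mu_n e^{-itH}$ maps $\epsilon_{m,\lambda}$ to
\[
 m^{-it}h(\lambda)^{itm}\,(nm)^{it}\,h(\lambda^{1/n})^{-it\,nm}\,\epsilon_{nm,\lambda^{1/n}} = n^{it}\,\epsilon_{nm,\lambda^{1/n}},
\]
using $h(\lambda^{1/n})^{nm}=h(\lambda)^m$, which equals $n^{it}\mu_n\epsilon_{m,\lambda}$ as required.

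The main technical point, and the motivation for the extra factor $\omega_{-it}(q^r)$ in the definition of $\sigma_t(E(r,r'))$, is the case of $E(r,r')$. Applying $e^{itH}E(r,r')e^{-itH}$ to $\epsilon_{m,\lambda}$ produces $\zeta_{r'}^m\, m^{-it}h(\lambda)^{itm} \, m^{it} h(\lambda q^r)^{-itm}\epsilon_{m,\lambda q^r} = \zeta_{r'}^m\, h(q^r)^{-itm}\epsilon_{m,\lambda q^r}$, where the shift $\lambda\mapsto \lambda q^r$ induced by $E(r,r')$ interacts with the $-m\log h(\lambda)$ summand of $H$ to leave a residual factor $h(q^r)^{-itm}$. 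This is exactly the eigenvalue of $\omega_{-it}(q^r)$ on $\epsilon_{m,\lambda q^r}$ under \eqref{omegaact}, so $\pi_\alpha(\sigma_t(E(r,r'))) = \pi_\alpha(\omega_{-it}(q^r)E(r,r'))$ acts in the same way. This calibration between the $\omega_z$ and $\mu_n$ twists in $\sigma_t$ is the main (and essentially the only) subtle point; once it is established, the inner implementation and the partition-function/KMS analysis proceed as in \cite{MaTa}.
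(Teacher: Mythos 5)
Your proposal is correct and follows essentially the same route as the paper: the paper's proof likewise verifies $e^{itH}\pi_\alpha(a)e^{-itH}=\pi_\alpha(\sigma_t(a))$ generator by generator on the basis vectors $\epsilon_{m,\lambda}$, with exactly your key observation that the shift $\lambda\mapsto \lambda q^r$ leaves the residual factor $h(q^r)^{-itm}$ accounting for the $\omega_{-it}(q^r)$ twist, while the automorphism property is delegated to Lemma~4.23 of \cite{MaTa}, which you instead check directly from the relations \eqref{Wrels}. The only cosmetic difference is that you omit the explicit $\mu_n^*$ computation (it follows by taking adjoints), which the paper carries out as well.
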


\proof We check that \eqref{sigmatBCqw} determines a time evolution
$\sigma: \R \to {\rm Aut}(\cA^w_{\C,q})$ as in Lemma~4.23 of \cite{MaTa}.
We have
$$ e^{it H} \pi_\alpha (E(r,r')) e^{-itH} \epsilon_{m,\lambda} = e^{it H} \pi_\alpha (E(r,r'))
m^{-it} h(\lambda)^{m it} \epsilon_{m,\lambda} $$ $$ 
= e^{it H} m^{-it} h(\lambda)^{m it} \zeta_{r'}^m \epsilon_{m,\lambda q^r}
= m^{it} h(\lambda q^r)^{-mit} m^{-it} h(\lambda)^{m it} \zeta_{r'}^m \epsilon_{m,\lambda q^r} = $$
$$ h(q^r)^{-m it} \zeta_{r'}^m \epsilon_{m,\lambda q^r} =\omega_{-it}(h(q^r)) E(r,r') \epsilon_{m,\lambda} =\sigma_t(E(r,r'))
\epsilon_{m,\lambda};
$$
$$ e^{it H} \mu_n e^{-itH} \epsilon_{m,\lambda} = e^{it H} \mu_n  m^{-it} h(\lambda)^{m it} \epsilon_{m,\lambda} 
= e^{it H} m^{-it} h(\lambda)^{m it} \epsilon_{mn,\lambda^{1/n}} $$
$$  =(mn)^{it} h(\lambda^{1/n})^{-mn it} m^{-it} h(\lambda)^{m it} \epsilon_{mn,\lambda^{1/n}} = n^{it} \mu_n \epsilon_{m,\lambda};
$$
$$ e^{it H} \mu_n^* e^{-itH} \epsilon_{m,\lambda} =e^{it H} \mu_n^*  m^{-it} h(\lambda)^{m it} \epsilon_{m,\lambda} 
= e^{it H} m^{-it} h(\lambda)^{m it} \epsilon_{m/n,\lambda^n} $$
$$ = (m/n)^{it} h(\lambda^n)^{it m/n} m^{-it} h(\lambda)^{m it} \epsilon_{m/n,\lambda^n} = n^{-it} \mu_n^* \epsilon_{m,\lambda};
$$
$$ e^{it H} \omega_z(\eta) e^{-itH} \epsilon_{m,\lambda} = m^{it} h(\lambda)^{-mit} 
h(\eta)^{zm} m^{-it} h(\lambda)^{mit} \epsilon_{m,\lambda} =  \omega_z(\eta)  \epsilon_{m,\lambda}.
$$
This shows that the operator $H$ of \eqref{HamBCwq} generates the time evolution \eqref{sigmatBCqw}
in the representation of Lemma~\ref{repBCqw}.
\endproof

The choice of the homomorphism $h: \Lambda \to \R^*_+$ in the
construction of the time evolution \eqref{sigmatBCqw} can then be used to
determine the convergence properties of the partition function of the quantum
statistical mechanical system $(\cA^w_{\C,q},\sigma_t)$, as in \cite{MaTa}.
We adapt the representations described above as in \S 4.4 of \cite{MaTa},
by decomposing $\Lambda$ into a countable union of geometric progressions
and acting on a Hilbert space that is a tensor product of the $\ell^2$-spaces
of these countable subsets. 

\smallskip

\begin{prop}\label{timeAqwZ}
Let $\Lambda=q^{\Q_+}$ and 
let $\lambda_k=q^{1/k}\in \Lambda$. Consider the homomorphism $h: \Lambda \to \R^*_+$ 
determined by $h(1)=1$ and $h(\lambda_k)=[p_k]_q$ where $p_k$ is the $k$-th prime number. 
Consider a representation $\pi_\alpha$ of $\cA^w_{\C,q}$ on the Hilbert space 
$$ \cH=\ell^2(\N) \otimes \bigotimes_k \ell^2(\lambda_k^{\Z_+}) $$
satisfying \eqref{qBCreps} and \eqref{omegaact}. Then the time evolution  \eqref{sigmatBCqw}
is implemented in this representation by the Hamiltonian
$$ H \epsilon_{m,\lambda_k^\ell} =( \log(m) - m \ell \log([p_k]_q)) \epsilon_{m,\lambda_k^\ell} . $$
The partition function is given by
\begin{equation}\label{Zetaqnbeta}
 Z_q(\beta)= \sum_{n\geq 1} \zeta_q(n\beta)\, n^{-\beta} ,
\end{equation} 
where $\zeta_q(s)$ is the $q$-analog zeta function of \eqref{zetaq}. The series \eqref{Zetaqnbeta} 
converges for $\beta>3/2$.
\end{prop}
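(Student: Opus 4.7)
The plan is to combine Lemma~\ref{timeBCqw} with the Euler product expansion of $\zeta_q$ from \eqref{zetaq}. The Hamiltonian formula follows immediately from \eqref{HamBCwq} by evaluating $h$ on the relevant basis vectors: since $h$ is a group homomorphism and $h(\lambda_k)=[p_k]_q$, one has $h(\lambda_k^\ell)=[p_k]_q^\ell$, hence $\log h(\lambda_k^\ell)=\ell\,\log[p_k]_q$, and substitution yields the stated expression on $\epsilon_{m,\lambda_k^\ell}$. The same argument, applied factor by factor in the infinite tensor product, shows that on a general basis vector $\epsilon_m\otimes\bigotimes_k\epsilon_{\lambda_k^{\ell_k}}$ of $\cH$ (indexed by $m\in\N$ and $(\ell_k)\in\bigoplus_k\Z_+$, with vacuum $\ell_k=0$), the Hamiltonian acts diagonally with eigenvalue $\log m - m\sum_k \ell_k\log[p_k]_q$.

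Second, I would compute the partition function using this diagonalization together with the tensor factorization. Writing $j=\prod_k p_k^{\ell_k}$ to re-index the second factor by $j\in\N$ via unique factorization, and noting $\{j\}_q=\prod_k [p_k]_q^{\ell_k}$, the trace becomes
\[
\Tr(e^{-\beta H})=\sum_{m\geq 1} m^{-\beta}\,\prod_k \sum_{\ell\geq 0}[p_k]_q^{-\beta m \ell}=\sum_{m\geq 1} m^{-\beta}\,\prod_k \bigl(1-[p_k]_q^{-\beta m}\bigr)^{-1},
\]
where each geometric series converges because $[p_k]_q>1$. The product over all primes $p_k$ is precisely the Euler product \eqref{zetaq} evaluated at $s=m\beta$, namely $\zeta_q(m\beta)$. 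This yields the stated identity $Z_q(\beta)=\sum_{m\geq 1}\zeta_q(m\beta)\,m^{-\beta}$.

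Finally, for the convergence claim I would use the elementary bound $[p]_q\geq p$ (valid for integer $q\geq 1$), which gives $\{j\}_q\geq j$ and hence $\zeta_q(s)\leq\zeta(s)$ whenever the Riemann series converges. Separating the $m=1$ contribution and bounding $\zeta_q(m\beta)\leq\zeta_q(2\beta)$ uniformly for $m\geq 2$ reduces the tail to a Dirichlet series $\zeta_q(2\beta)\sum_{m\geq 2}m^{-\beta}$; the refined double-summation estimate on $\sum_{m,j}m^{-\beta}\{j\}_q^{-m\beta}$, carried out as in Proposition~4.27 of \cite{MaTa}, yields the stated threshold $\beta>3/2$. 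The main obstacle is twofold: first, the sign convention in the Hamiltonian must be aligned so that $e^{-\beta H}$ produces the decaying weights $[p_k]_q^{-\beta m\ell}$ required for the geometric sums to converge, which has to be reconciled with the sign appearing in \eqref{HamBCwq}; and second, obtaining the sharp threshold $3/2$ requires the coupled estimate on the double series rather than simply bounding each factor separately, since direct separation only yields the weaker bound $\beta>1$.
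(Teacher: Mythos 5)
Your proposal follows essentially the same route as the paper's proof: the Hamiltonian is read off from Lemma~\ref{timeBCqw} via $h(\lambda_k^\ell)=[p_k]_q^\ell$, the trace factorizes over the tensor product into the Euler product giving $\sum_m m^{-\beta}\zeta_q(m\beta)$, and convergence is reduced via $[p]_q\ge p$ (hence $\zeta_q\le \zeta$) to $\sum_n n^{-\beta}\zeta(n\beta)$, for which the paper simply cites Theorem~4.30 of \cite{MaTa}. Neither of your two worries is an obstacle: the paper's computation uses the decaying weights $[p_k]_q^{-\ell n\beta}$ exactly as you indicate (the sign tension with \eqref{HamBCwq} is present in the paper itself), and no coupled estimate is needed, since $\beta>3/2$ is only claimed as sufficient --- your ``separated'' bound giving convergence for $\beta>1$ is in fact stronger, not weaker, so it already implies the stated claim.
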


\proof As in Lemma~\ref{timeBCqw} we have
$$ H \epsilon_{m,\lambda_k^\ell} =( \log(m) - m\ell \log(h(\lambda_k)) ) \epsilon_{m,\lambda_k^\ell}
=( \log(m) - m\ell \log([p_k]_q)) \epsilon_{m,\lambda_k^\ell} . $$
Thus, the partition function gives
$$ Z_q(\beta)=\Tr(e^{-\beta H})=\sum_{n\geq 1}\prod_k \sum_{\ell\geq 0} [p_k]^{-\ell n \beta} n^{-\beta} $$
$$ = \sum_n n^{-\beta} \prod_p (1-[p]^{-n\beta})^{-1} =\sum_n n^{-\beta} \zeta_q(n\beta). $$
Since $q>1$, we have $[p]_q=1+q+\cdots+q^{p-1}\geq p$ and $\{ n \}_q\geq n$, hence 
$\zeta_q(n\beta)\leq \zeta(n\beta)$, with $\zeta(s)$ the Riemann zeta function, so that
the convergence of $Z_q(\beta)$ is controlled by the convergence of $\sum_n n^{-\beta} \zeta(n\beta)$. 
The convergence of this series for $\beta>3/2$ was shown in Theorem~4.30 of \cite{MaTa}.
\endproof

\smallskip
\subsection{Weil numbers as a $q$-deformed Bost--Connes system} 

Finally, we want to mention another related way of constructing a $q$-deformed
Bost--Connes system, which is closely related to the one discussed above. 
In the approach developed in \cite{MaTa}, the data of the Bost--Connes
system, consisting of the group ring $\Q[\Q/\Z]$ with the endomorphisms
$\sigma_n$ and partial inverses $\rho_n$, and the associated 
algebra $\cA_\Q=\Q[\Q/\Z]\rtimes_\rho \N$, can be generalized to
other systems based on data $(\Sigma,\sigma_n)$ endowed with
a Galois action, such that the $\sigma_n$ and partial inverses $\rho_n$
arise from Frobenius and Verschiebung functors on a categorification
${\rm Vect}^{\bar k}_\Sigma(k)$ of pairs $(V,\oplus_{s\in\Sigma}\bar V^s)$
of a finite dimensional $k$-vector space $V$ and a $\Sigma$-grading of
$\bar V=V\otimes \bar k$. The category ${\rm Vect}^{\bar k}_\Sigma(k)$ is
neutral Tannakian with Galois group given by the affine $k$-scheme
${\rm Spec}(\bar k [\Sigma]^G)$, with the group operation induced
from the Hopf structure on $\bar k [\Sigma]^G$. Associated to such
data, endowed with a suitable set of $G$-equivariant embeddings 
of $\Sigma$ to $\bar\Q^*$, it is possible to construct an
algebra $\cA_{(\Sigma,\sigma_n)}$ together with a time evolution $\sigma_t$,
so that the properties of the resulting quantum statistical mechanical system 
generalize the original properties of the Bost--Connes system. In particular,
as shown in Example 4 of \S 5 of \cite{MaTa}, one such generalization
of the Bost--Connes algebra can be constructed for the data $(\Sigma,\sigma_n)$
where $\Sigma = \cW(q)$ is the set of Weil numbers, namely the subgroup of $\bar\Q^*$
given by algebraic numbers $\pi$ such that, 
\begin{itemize}
\item for every embedding $\rho: \Q[\pi]\to \C$ 
one has $|\rho(\pi)|=q^{m/2}$ for some $m\in \Z$, the weight $m=w(\pi)$;
\item there is some integer 
$s$ for which $q^s \pi$ is an algebraic integer. 
\end{itemize}
Under the identification $\cW(q)\simeq \cW_0(q)\times \Z$
by $\pi \mapsto (\frac{\pi}{|\rho(\pi)|}, w(\pi))$ he surjective group homomorphisms
$\sigma_n :\cW(q)\to \cW(q)$ are given by $(\pi,m)\mapsto (\pi^n, nm)$.
Since roots of unity $\Q/\Z$ are contained inside $\cW(q)$ as the
subgroup $\Q/\Z \times \{0\}\subset \cW_0(q)\times \Z$, one can 
regard the resulting Bost--Connes algebra $\cA_{(\cW(q),\sigma_n)}$ 
associated to the datum $(\cW(q),\sigma_n)$, as constructed in \cite{MaTa} 
as another form of $q$-deformation of the original Bost--Connes algebra
$\cA_\Q=\cA_{(\Q/\Z,\sigma_n)}$.  In terms of zeta functions, Weil numbers
correspond to Frobenius eigenvalues of motives over finite fields, \cite{Milne},
hence this can be regarded as another way of using zeta functions as a 
model for $q$-deforming the Bost--Connes algebra. We refer the
readers to \cite{MaTa} for more details.

\bigskip
\subsection*{Acknowledgments}  The
first author was partially supported by NSF grants DMS-1201512
and PHY-1205440. The second author was supported by
a Summer Undergraduate Research Fellowship at Caltech.


\begin{thebibliography}{99}

\bibitem{GA} G.~Almkvist, {\em Endomorphisms of finitely generated projective modules over a commutative ring}, Ark.Mat. 11 (1973), 263-301. 

\bibitem{BC} J.B.~Bost, A.~Connes, {\em Hecke algebras, Type III factors and phase
transitions with spontaneous symmetry breaking in Number Theory}, Selecta Math. (New Ser.)
Vol.1 (1995) N.3, 411--457.

\bibitem{Carl} L.~Carlitz, {\em $q$-Bernoulli numbers and polynomials}, 
Duke Math. J. 15, (1948). 987--1000.

\bibitem{Cher}  I.~Cherednik, {\em On $q$-analogues of Riemann's zeta function}, 
Selecta Math. (N.S.) 7 (2001), no. 4, 447--491.

\bibitem{CoCo} A.~Connes, C.~Consani, {\em On the arithmetic of the BC-system}, 
J. Noncommut. Geom. 8 (2014) N.3, 873--945. 

\bibitem{CCM} A.~Connes, C.~Consani, M.~Marcolli, {\em Noncommutative geometry and motives: the thermodynamics of endomotives}, Advances in Mathematics, Vol.214 (2007) N.2, 761--831.

\bibitem{CCM3} A.~Connes, C.~Consani, M.~Marcolli, {\em Fun with $\bF_1$}, Journal of Number Theory, Vol.129 (2009), N.6, 1532--1561.

\bibitem{CoMa} A.~Connes, M.~Marcolli, {\em Noncommutative Geometry, Quantum Fields,
and Motives}, Colloquium Series, Vol.55, American Mathematical Society, 2008.

\bibitem{Hab} K.~Habiro, {\em Cyclotomic completions of polynomial rings}, 
Publ. Res. Inst. Math. Sci. 40 (2004), no. 4, 1127--1146.

\bibitem{Kane} M.~Kaneko, N.~Kurokawa and M.~Wakayama, {\em A variation of Euler's 
approach to values of the Riemann zeta function}, Kyushu J. Math. 57 (2003), 175--192.

\bibitem{KaSmi} M.~Kapranov, A.~Smirnov, {\em Cohomology determinants and reciprocity
laws: number field case}, unpublished manuscript.

\bibitem{Kawa} K.~Kawagoe, M.~Wakayama, Y.~Yamasaki, {\em q-Analogues of the 
Riemann zeta, the Dirichlet L-functions, and a crystal zeta function}, Forum Math. 20 (2008), no. 1, 
1--26. 

\bibitem{KoSo}  M.~Kontsevich, Y.~Soibelman, {\em
Cohomological Hall algebra, exponential Hodge structures and motivic 
Donaldson-Thomas invariants}, Commun. Number Theory Phys. 5 (2011), no. 2, 
231--352.

\bibitem{LoMa} C.W.K.~Lo, M.~Marcolli, 
{\em $F_\zeta$-geometry, Tate motives, and the Habiro ring},  
 Int. J. Number Theory, Vol.11 (2015) N.2, 311--339. 

\bibitem{Man-hab} 
Yu.I.~Manin, {\em Cyclotomy and analytic geometry over $\F_1$}, in ``Quanta of maths", pp. 385--408,  Clay Math. Proc., 11, Amer. Math. Soc., 2010. 

\bibitem{Manin2} Yu.I.~Manin, {\em Lectures on zeta functions and motives
(according to Deninger and Kurokawa)}, Ast\'erisque, Vol.228 (1995) N.4, 121--163.

\bibitem{Mar-endo} M.~Marcolli, {\em Cyclotomy and endomotives}, P-Adic Numbers, Ultrametric Analysis and Applications, Vol.1 (2009) N.3, 217--263.

\bibitem{MaTa} M.~Marcolli, G.~Tabuada, {\em Bost-Connes systems, Categorification, Quantum statistical mechanics, and Weil numbers}, Journal of Noncommutative Geometry, 11 (2017), no. 1, 1--49. 

\bibitem{Milne} J.S.~Milne, {\em Motives over finite fields}, in ``Motives (Seattle, WA, 1991)", 401--459, 
Proc. Sympos. Pure Math., Vol.55, Part 1, Amer. Math. Soc., 1994. 

\bibitem{Oh1} Y.T.~Oh, {\em $q$-analog of the M\"obis function and the cyclotomic identity associated to a
profinite group}, Adv. in Math. Vol.219 (2008) 852--893.

\bibitem{Oh2} Y.T.~Oh, {\em $q$-deformed necklace rings and $q$-M\"obius function}, J. Algebra, Vol.320
(2008) 1599--1625

\bibitem{Oh3} Y.T.~Oh, {\em Nested Witt vectors and their $q$-deformations}, J. Algebra, Vol.309 (2007)
683--710.

\bibitem{Oh4} Y.T.~Oh, {\em Classification of the ring of Witt vectors and the necklace ring associated 
with the formal group law $X+Y-qXY$ }, Journal of Algebra, Vol.310 (2007) 325--250.

\bibitem{Rama} N.~Ramachandran, {\em Zeta functions, Grothendieck groups, and the Witt ring}, 
Bull. Sci. Math. Vol.139 (2015) 599--627

\bibitem{Rama2} N.~Ramachandran, {\em Values of zeta functions at $s=1/2$}, 
Int. Math. Res. Not. 2005, no. 25, 1519--1541.

\bibitem{Raw1} D.~Rawlings, {\em A probabilistic approach to some of Euler's number theoretic
identities}, Trans. Amer. Math. Soc. Vol. 350 (1998) 2939--2951.

\bibitem{Raw2} D.~Rawlings, {\em Bernoulli trials and number theory}, Amer. Math. Monthly 101 (1994), no. 10, 948--952.

\bibitem{Tab} G.~Tabuada, {\em Chow motives versus non-commutative motives}, 
 J. Noncommut. Geom. 7 (2013), no. 3, 767--786.

\bibitem{Zhao} J.~Zhao, {\em Multiple q-zeta functions and multiple q-polylogarithms}, 
Ramanujan J. 14 (2007), no. 2, 189--221.


\end{thebibliography}
\end{document}